\documentclass[11pt]{article}

\usepackage{fullpage}

\usepackage{microtype}
\usepackage{graphicx}
\usepackage{subcaption}
\usepackage{booktabs} 

\usepackage{hyperref}
\usepackage{enumitem}

\usepackage{tikz}
\usetikzlibrary{arrows.meta,calc}

\usepackage{amsmath}
\usepackage{amssymb}
\usepackage{mathtools}
\usepackage{amsthm}

\usepackage{caption}
\usepackage{subcaption}

\usepackage[capitalize,noabbrev]{cleveref}

\theoremstyle{plain}
\newtheorem{theorem}{Theorem}[section]

\newtheorem{lemma}[theorem]{Lemma}

\theoremstyle{definition}

\theoremstyle{remark}
\newtheorem{remark}[theorem]{Remark}

\usepackage[textsize=tiny]{todonotes}

\newcommand{\R}{{\mathbb{R}}}
\newcommand{\eat}[1]{}

\title{\bf Path Dependence under Adaptive AI Delegation}

\author{
	Lingxiao Huang\\
	Nanjing University
	\and
	Nisheeth K. Vishnoi \\
	Yale University
}

\date{}

\begin{document}
	
	\maketitle
	
	\begin{abstract}
		Repeated AI assistance can improve immediate task performance while reducing
		the skill available for future independent work. We develop a mathematical framework for
		this long-run tradeoff. The model tracks two state variables: a latent human
		skill level governing expected independent performance, and a delegation level
		representing the learner's evolving tendency to rely on AI. Skill changes
		through error-driven learning under practice and decay under delegation;
		delegation responds to observed performance, increasing when AI-assisted work
		appears to outperform independent work.
		We analyze the resulting dynamics and contrast them with fixed delegation.
		With fixed delegation, skill follows a one-dimensional learning-decay
		process with a single stable equilibrium. With adaptive delegation, the
		coupled system has two attracting equilibria separated by the stable manifold
		of an interior saddle. The existence and geometry of this separatrix require
		a global phase-plane analysis of the coupled dynamics. The system is
		path-dependent: small differences in initial skill or reliance can lead to
		different long-run outcomes.
		We use this characterization to show that AI assistance can improve
		short-run performance while producing worse long-run performance than a
		no-AI baseline. Increasing AI capability can enlarge the basin of attraction
		of the low-skill equilibrium, making delegation appear beneficial for longer
		while increasing the risk of eventual skill loss. 
		The qualitative picture is observed to persist across alternative specifications. Together, these results show that the risk is not AI assistance
		itself, but the coupling between performance-driven reliance and use-dependent skill change.
	\end{abstract}
	
	\newpage
	\tableofcontents
	\newpage
	
	\section{Introduction}
	\label{sec:intro}

	AI systems are increasingly used to support cognitive tasks such as writing,
	coding, tutoring, and problem solving
	\cite{Bai2023ChatGPTTC, microsoft, Bastani2025GenerativeAW,
		lehmann2025aimeetsclassroomlarge}. In many of these settings, AI assistance
	improves task performance, output quality, or throughput over a short horizon
	\cite{Bastani2025GenerativeAW, lehmann2025aimeetsclassroomlarge, microsoft}.
	The short-run view treats AI use as a performance-enhancing intervention.

	Repeated use changes the nature of the interaction. A learner who relies on
	AI across many related tasks is not only producing outputs; they are also
	changing the skill with which future outputs will be produced. Delegation
	today therefore changes both the immediate output and the human capacity that
	produces future outputs. The two effects can pull in opposite directions.
	Recent empirical work suggests that this distinction matters: studies of
	AI-assisted writing, programming, and mathematics report immediate gains in
	performance or productivity alongside reduced retention, weaker transfer, or
	worse post-test performance once assistance is removed
	\cite{Kosmyna2025YourBO, Bastani2025GenerativeAW,
		lehmann2025aimeetsclassroomlarge, Liu2026AIAR, Shen2026HowAI, Ejaz2025AIAC}.
	The effects are also heterogeneous across users
	\cite{Shin_2025, lehmann2025aimeetsclassroomlarge, Zhai2024TheEO}.
	
	These findings raise a structural question:
	\emph{how does adaptive AI reliance shape the long-run dynamics of human skill,
		and when do differences in initial skill and reliance lead to different
		long-run outcomes?}
	
	\paragraph{Related work.}
	The empirical studies above are often interpreted through cognitive
	offloading, automation bias, task substitution, or reduced cognitive
	engagement
	\cite{Parasuraman, Norman1994, RiskoGilbert2016, Ejaz2025AIAC}.
	These perspectives help explain why users may rely on AI in the moment, but
	they do not provide a dynamical account of how reliance and skill co-evolve
	over repeated use.
	
	A complementary theoretical direction studies AI delegation as a one-shot
	decision problem. A recent model of delegation and verification under AI
	\cite{huang2026delegation} considers an agent who optimizes how much to
	delegate and verify under task-level performance, effort, and verification
	tradeoffs. That framework explains workflow choice and institutional
	evaluation in a single decision environment. The present paper studies a
	different regime: delegation is an evolving reliance tendency formed across
	repeated tasks. The central object is therefore not within-task delegation, but
	the long-run feedback loop between performance-driven reliance and human skill
	formation.
	
	Mathematical models of learning and forgetting capture error-driven
	improvement and use-dependent decay under fixed task execution
	\cite{Ebbinghaus1913, NewellRosenbloom1981, Heathcote2000,
		Griffiths2008, RescorlaWagner1972, Pemantle2007}. Classical stochastic
	approximation connects such rules to limiting ODEs whose global behavior
	governs convergence \cite{Benaim}. These models typically treat practice as
	exogenous or fixed. They do not capture settings in which reliance on AI
	responds to observed performance, thereby endogenously changing the practice
	that drives future skill. Related work on performative prediction studies how
	deployed models reshape data distributions \cite{perdomo20a, Milli}; here
	the feedback loop operates through human skill and reliance rather than model
	retraining. Thus, the novelty is not learning decay or adaptive reliance in isolation, but
	the basin structure created when reliance adapts to short-run performance while
	skill evolves through use and non-use.
	
	\paragraph{Our contributions.}
	Answering the question above requires a model in which reliance and skill
	evolve together. We develop such a framework for repeated AI-assisted
	learning, with two coupled state variables: a latent skill level \(\theta(t)\)
	governing expected independent performance, and a delegation level \(p(t)\)
	representing the learner's tendency to rely on AI. Skill evolves through
	error-driven learning under practice and decay under delegation, while
	delegation responds to the short-run performance comparison between
	independent and AI-assisted work. This separation is central: reliance adapts
	to observed performance, while skill evolves through the consequences of use
	and non-use.
	Our main results are as follows.
	\begin{itemize}
		
		\item We characterize the equilibrium structure of the resulting dynamical
		system. With fixed delegation, the dynamics reduce to a one-dimensional
		learning-decay process with a single stable equilibrium. With adaptive
		delegation, the system has two attracting equilibria: a high-skill,
		low-delegation equilibrium and a low-skill, high-delegation equilibrium,
		together with an interior saddle (Theorem~\ref{thm:convergence}).
		
		\item We characterize which initial conditions lead to which long-run outcome.
		The stable manifold of the interior saddle is a global basin boundary in the
		\((\theta,p)\) state space, separating initial skill-delegation pairs that
		converge to sustained learning from those that converge to persistent
		delegation and low skill (Theorem~\ref{thm:saddle}). This yields path
		dependence: small differences in initial skill or reliance can lead to
		different long-run outcomes.
		
		\item We show that short-run performance gains can mask long-run losses. AI
		assistance can strictly improve performance initially while producing worse
		long-run performance than the no-AI baseline
		(Theorem~\ref{thm:performance}). This effect arises because
		performance-driven reliance reshapes the practice that drives future skill,
		not from an explicit incentive misalignment.
		
		\item We analyze how the basin structure changes under AI capability and model
		variations. Increasing AI capability reshapes the basin boundary
		(Theorem~\ref{thm:monotonicity}), enlarging the set of initial conditions that
		converge to the low-skill equilibrium and extending the duration of apparent
		performance gains. The same qualitative structure is observed under jagged or
		misperceived AI performance, alternative objectives, multiple skills, learning
		from AI during delegation, and richer learner-AI dynamics
		(Sections~\ref{sec:extension} and~\ref{sec:details_extension}); the geometric
		view also suggests interventions that shift the basin boundary toward
		sustained-skill outcomes (Remark~\ref{remark:intervention} and
		Section~\ref{sec:intervention}).
		
	\end{itemize}
	Together, the framework and these results identify the possible long-run
	regimes, characterize which initial skill and reliance levels lead to each,
	and show how AI capability reshapes this partition. The source of long-run
	risk is not AI assistance itself, but the dynamic coupling between
	performance-driven reliance and use-dependent skill change.

	\section{Model}
	\label{sec:model}
	
	\begin{figure*}[t]
		\centering
		\includegraphics[width=\linewidth]{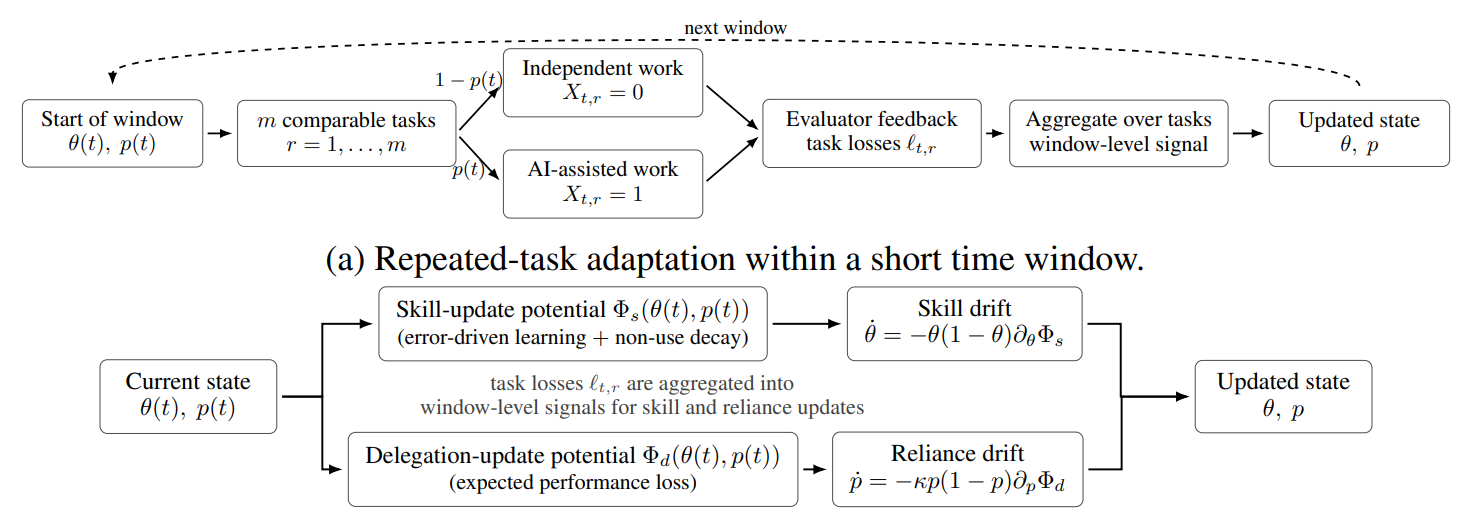}
		\caption{Overview of the repeated-task adaptation model.
			In each short time window \(t\), the learner faces many comparable tasks and
			delegates a fraction \(p(t)\) of them. Evaluator feedback is aggregated
			across tasks and enters the skill-update and delegation-update mechanisms. The
			former combines error-driven learning under practice with non-use decay, while
			the latter updates reliance through relative performance feedback.}
		\label{fig:overview}
		
	\end{figure*}
	
	We study a repeated task setting in which a learner faces many comparable
	tasks over time. In each short time window, the learner completes a fraction
	of tasks independently, delegates the remaining fraction to AI, and receives
	feedback on the resulting outputs. The window is chosen short enough that the
	learner's skill \(\theta(t)\) and delegation level \(p(t)\) can be treated as
	fixed within the window, but long enough to average over many task instances.
	The learner's state is therefore \((\theta(t),p(t))\). We focus on \emph{error-driven}
	learners \cite{RescorlaWagner1972,erev}, whose skill and reliance updates are
	driven by observed performance feedback.
	
	\paragraph{Skill representation.}
	We represent task-specific skill by a latent parameter
	\(\theta\in[0,1]\) that governs expected independent performance on a repeated
	class of comparable tasks. The variable \(\theta\) is not the realized score on
	a particular task; realized performance may vary with task difficulty,
	attention, or execution noise.
	This abstraction is standard in latent-trait models of ability. In item
	response theory, ability is modeled as a continuous latent trait that combines
	with item difficulty to determine response probabilities
	\cite{Lord1980ApplicationsOI,EmbretsonReise2000}. Our use of
	\(\theta\) is analogous, normalized so that \(\theta=1\) denotes optimal
	expected independent performance.
	The scalar representation is an analytical simplification, not a claim that
	ability is intrinsically one-dimensional; Section~\ref{sec:multiple_skill}
	analyzes a multiple-skill extension.
	
	Let \(\theta(t)\) denote the learner's skill during window \(t\). Let
	\(\theta_a\in[0,1]\) denote the AI's \emph{effective skill}, defined through
	its expected loss under the same task objective; under the squared-loss
	baseline, random AI performance \(S_a\) is represented by
	\((1-\theta_a)^2=\mathbb{E}[(1-S_a)^2]\).
	
	\paragraph{Delegation representation.}
	Let \(p(t)\in[0,1]\) denote the delegation level, i.e., the fraction of tasks
	delegated to AI in window \(t\). For a task sampled from this window, let
	\(X\sim\mathrm{Bern}(p(t))\), where \(X=1\) indicates delegation to AI and
	\(X=0\) indicates independent work. 
	Thus \(p(t)\) is an aggregate reliance tendency shaped by feedback. The
	Bernoulli notation represents within-window task fractions, not a one-shot
	delegation decision.

	\paragraph{Performance loss.}
	Task performance is evaluated against a fixed target by an external evaluator,
	which may represent a teacher, benchmark, or grader. Let \(\ell(\theta)\)
	denote the expected performance loss of an output produced with latent skill
	\(\theta\in[0,1]\). We assume that \(\ell(\theta)\) is monotonically
	decreasing, with \(\ell(1)=0\). A canonical example is the power loss
	\(\ell(\theta)=(1-\theta)^z\) for \(z\ge 1\), with the squared loss \(z=2\)
	used as the baseline.
	The instantaneous performance loss for a task with delegation indicator \(X\)
	is
	$\ell(\theta(t),X) := (1-X)\ell(\theta(t)) + X\ell(\theta_a)$.
	Averaging over \(X\sim\mathrm{Bern}(p(t))\) gives
	\begin{align}
		\label{eq:metric}
		\ell(\theta(t),p(t))
		=
		(1-p(t))\ell(\theta(t)) + p(t)\ell(\theta_a).
	\end{align}
	When \(\ell(\theta(t))\ge \ell(\theta_a)\), increasing delegation improves
	short-run performance.\footnote{The loss \(\ell\) may implicitly reflect
		accuracy, time, or effort: any factor affecting observed task quality enters
		through the feedback signal.}

	\paragraph{Skill update.}
	After receiving feedback, the learner's skill changes depending on how the
	task was completed. If the learner works independently (\(X=0\)), feedback
	generates an error-driven learning signal. If the task is delegated to AI
	(\(X=1\)), the learner does not directly practice the underlying task. We model
	non-use as drift toward a default skill level \(\theta_d\in[0,1]\), consistent
	with learning and forgetting models in which performance improves through
	practice and weakens without continued practice or reinforcement
	\cite{Ebbinghaus1913,NewellRosenbloom1981,Heathcote2000,pavlik}.
	
	We encode these effects by the skill-update potential
	\[
	\Phi_s(\theta(t);X)
	=
	(1-X)\ell(\theta(t))
	+
	\Delta X g(\theta(t),\theta_d),
	\]
	where \(\Delta>0\) controls the strength of decay and
	\(g(\theta,\theta_d)\ge 0\) measures distance from the default skill. This
	potential is distinct from the performance loss: its first term captures
	learning from practice, while its second term captures decay from non-use.
	The function \(g\) is left general at this stage; the squared distance used in
	the main analysis is chosen for tractability, and
	Section~\ref{sec:extension} shows that the qualitative phase structure is not
	specific to this choice.
	
	Averaging over \(X\sim\mathrm{Bern}(p(t))\) within window \(t\) gives
	\begin{align}
		\label{eq:skill_update_potential}
		\Phi_s(\theta(t);p(t))
		=
		(1-p(t))\ell(\theta(t))
		+
		\Delta p(t)g(\theta(t),\theta_d).
	\end{align}
	To keep the latent skill in \([0,1]\), we use the one-dimensional information
	geometry \(F(\theta)=1/(\theta(1-\theta))\), the Fisher metric of a Bernoulli
	parameter. The induced natural-gradient flow preconditions ordinary gradient
	descent by \(F(\theta)^{-1}=\theta(1-\theta)\), yielding multiplicative or
	replicator-style dynamics
	\cite{Amari1998,HofbauerSigmund1998}. This is a
	geometric modeling choice for bounded latent skill, not an empirical claim
	about the exact form of human learning.
	Applying this flow to \(\Phi_s\) gives
	\begin{align}
		\label{eq:general_skill_update}
		\dot{\theta}
		=
		-F(\theta)^{-1}\partial_\theta \Phi_s
		=
		-\theta(t)(1-\theta(t))
		\left[
		(1-p(t))\partial_\theta\ell(\theta(t))
		+
		\Delta p(t)\partial_\theta g(\theta(t),\theta_d)
		\right].
	\end{align}
	Since \(\ell(\theta)\) decreases with skill, the first term improves skill
	under independent work; the second term pushes skill toward \(\theta_d\) under
	delegation.

	\paragraph{Delegation update.}
	The learner adjusts reliance \(p(t)\) based on observed performance across
	tasks. If AI-assisted outputs perform better than independent work, reliance
	increases; otherwise, it decreases. We encode this adjustment through the
	delegation-update potential
	\begin{align}
		\label{eq:delegation_update_potential}
		\Phi_d(\theta(t),p(t))
		:=
		\ell(\theta(t),p(t)).
	\end{align}
	Unlike the skill-update potential, which separates learning from practice and
	decay from non-use, \(\Phi_d\) measures which mode of task completion gives
	lower short-run loss.
	
	Since \(p(t)\) is a bounded reliance variable, we use the multiplicative
	geometry \(F(p)=1/(p(1-p))\). The induced natural-gradient flow is
	\begin{align}
		\label{eq:general_delegation_update}
		\dot p
		=
		-\kappa F(p)^{-1}\partial_p\Phi_d(\theta(t),p(t))
		=
		\kappa p(t)(1-p(t))
		\bigl(\ell(\theta(t))-\ell(\theta_a)\bigr),
	\end{align}
	where \(\kappa>0\) controls the relative speed of delegation adaptation.
	Thus reliance increases when AI has lower expected loss than independent work,
	and decreases when independent work has lower expected loss.

	\paragraph{Dynamics.}
	We now specialize the general skill and delegation updates to the main
	specification analyzed in the paper:
	\[
	\ell(\theta)=(1-\theta)^2,
	\qquad
	g(\theta,\theta_d)=(\theta-\theta_d)^2.
	\]
	This yields the coupled dynamics
	\begin{align}
		\label{eq:ODE}
		\dot{\theta}
		= 2\theta(1-\theta)\bigl((1-p)(1-\theta)+\Delta p(\theta_d-\theta)\bigr), \ \
		\dot{p}
		= \kappa p(1-p)\bigl((1-\theta)^2-(1-\theta_a)^2\bigr).
	\end{align}
	Throughout the analysis, we compare the long-run behavior of~\eqref{eq:ODE}
	across different initial conditions \((\theta_0,p_0)\) and against the no-AI
	case. The initial delegation level \(p_0\) reflects the learner's prior belief
	about the relative effectiveness of AI-assisted and independent work, formed
	through prior experience or external information.
	
	Unless otherwise stated, we consider the regime
	\(\theta_a,\theta_0\in[\theta_d,1]\). For the main analysis, we normalize
	\(\theta_d=0\). We also rescale time to remove the factor \(2\) in the skill
	equation and absorb the resulting constant into the free relative-speed
	parameter \(\kappa\). This gives
	\begin{align}
		\label{eq:ODE_simplified}
		\dot{\theta}
		= \theta(1-\theta)\left((1-p)(1-\theta)-\Delta p\theta\right),\ \
		\dot{p}
		= \kappa p(1-p)\left((1-\theta)^2-(1-\theta_a)^2\right).
	\end{align}
	The general ODE~\eqref{eq:ODE} is derived and analyzed in
	Section~\ref{sec:full}; stochastic and biased delegation updates are discussed
	in Section~\ref{sec:SDE}. The model is intended to characterize
	qualitative dynamics rather than to serve as a calibrated structural model of
	human behavior. 
	The functional forms above make the phase-plane analysis tractable; the same
	phase structure is observed to persist under the alternative specifications studied in
	Section~\ref{sec:extension}, including alternative losses and decay functions,
	jagged AI performance, noisy and biased delegation updates, multiple skills,
	and learning from AI during delegation.
	
	\paragraph{Fixed delegation and no-AI case.}
	When the delegation level is held fixed at \(p(t)\equiv p_0\), the system
	reduces to the one-dimensional skill dynamics
	\begin{align}
		\label{eq:fixed_delegation}
		\dot{\theta}
		=
		\theta(1-\theta)\bigl((1-p_0)(1-\theta)-\Delta p_0\theta\bigr).
	\end{align}
	The no-AI case corresponds to \(p_0=0\). For any fixed
	\(p_0\in[0,1]\), \eqref{eq:fixed_delegation} has the unique stable equilibrium
	$
	\theta^\infty(p_0)=\frac{1-p_0}{1-p_0+\Delta p_0}$.
	Thus every trajectory with \(\theta_0\in(0,1)\) converges to
	\(\theta^\infty(p_0)\), and \(\theta^\infty(0)=1\).
	This fixed-delegation case separates ordinary non-use decay from the adaptive
	effect studied below. With fixed delegation, the model is one-dimensional and
	has a single stable equilibrium. Path dependence and multiple basins arise only
	when delegation adapts jointly with skill. Figure~\ref{fig:phase_space}
	contrasts the no-AI and adaptive-delegation phase portraits.

	\section{Theoretical results}
	\label{sec:result}

	\begin{figure*}[t]
		\centering
		
		\begin{subfigure}[t]{0.31\textwidth}
			\centering
			\includegraphics[width=\linewidth]{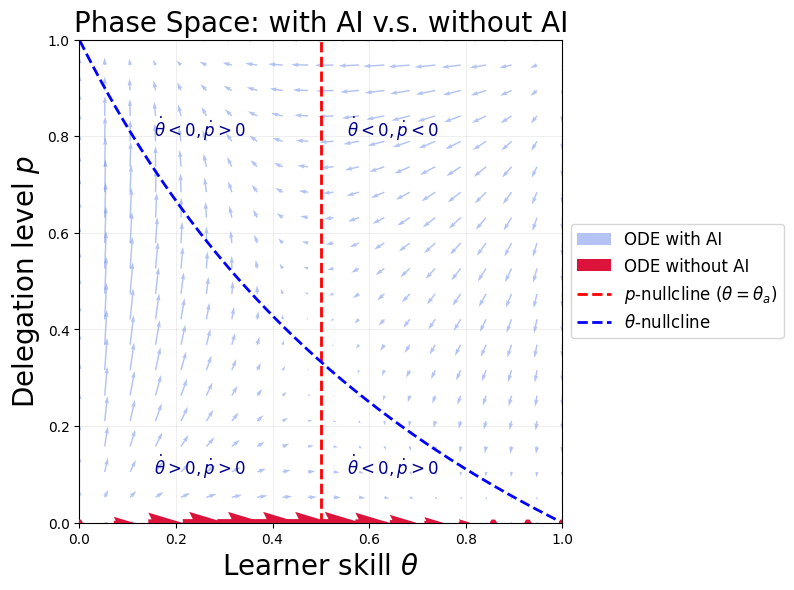}
			\caption{ Fixed versus adaptive delegation}
			\label{fig:phase_space}
		\end{subfigure}
		\hfill
		\begin{subfigure}[t]{0.3\textwidth}
			\centering
			\includegraphics[width=\linewidth]{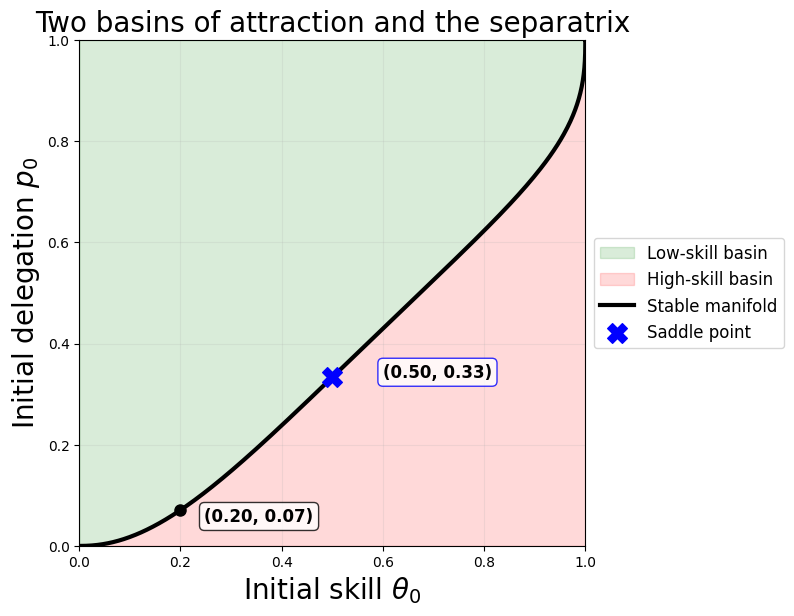}
			\caption{ Stable manifold and basin partition}
			\label{fig:manifold}
		\end{subfigure}
		\hfill
		\begin{subfigure}[t]{0.3\textwidth}
			\centering
			\includegraphics[width=\linewidth]{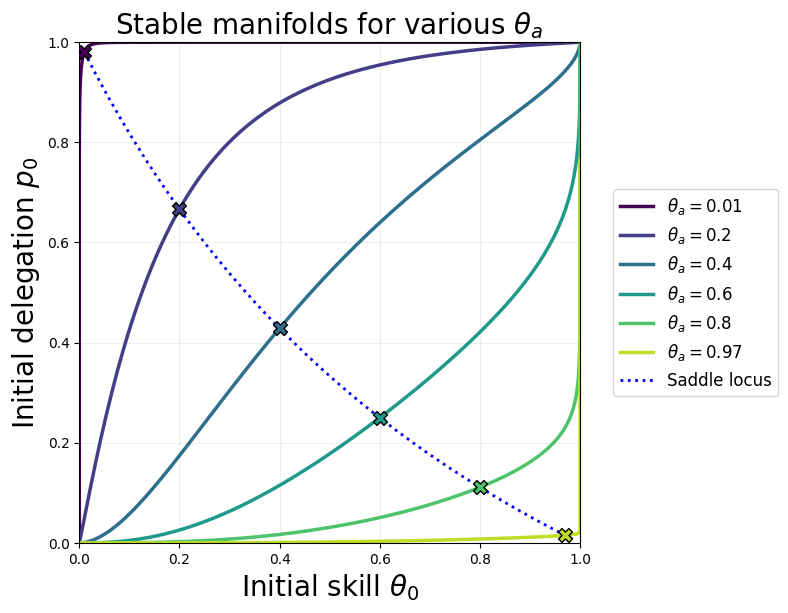}
			\caption{ AI skill reshapes the basin boundary}
			\label{fig:stable_theta_a}
		\end{subfigure}
		
		\caption{ Phase portraits for the adaptive-delegation dynamics.
			(a) Fixed delegation has a single attracting skill level, while adaptive
			delegation creates competing attracting regimes.
			(b) The stable manifold of the interior saddle separates the high-skill and
			low-skill basins.
			(c) Increasing AI skill shifts the basin boundary downward, expanding the
			low-skill basin. Default parameters are
			\((\theta_a,\kappa,\Delta)=(0.5,3,2)\) unless varied.}
		\label{fig:ODE}
	\end{figure*}
	
	We analyze the long-run behavior of the adaptive-delegation
	dynamics~\eqref{eq:ODE_simplified} on the interior \((0,1)^2\); basic
	well-posedness facts are proved in Section~\ref{sec:ODE}. We begin with the
	fixed-point structure.
	Fixed delegation already captures the direct effect of reduced practice on
	long-run skill. The results below show what changes when practice is
	endogenous: the two-way coupling between skill and reliance creates an
	interior saddle whose stable manifold separates two long-run regimes.

	\begin{theorem}[\bf{Equilibria of ODE~\eqref{eq:ODE_simplified}}]
		\label{thm:convergence}
		Let $\theta_a, \kappa, \Delta$ be the parameters of ODE~\eqref{eq:ODE_simplified}.
		When \(\theta_a\in(0,1)\), the system has two corner equilibria,
		\((1,0)\) and \((0,1)\), that are locally asymptotically stable relative to
		\([0,1]^2\); two repelling corner equilibria, \((0,0)\) and \((1,1)\); and one
		hyperbolic interior saddle
		$
		(\theta^\dagger,p^\dagger)
		=
		\left(\theta_a,\frac{1-\theta_a}{1-(1-\Delta)\theta_a}\right).
		$
		The stable corner equilibria are non-hyperbolic because of the boundary
		factors in the vector field.
	\end{theorem}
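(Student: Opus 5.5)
The plan is a direct fixed-point enumeration followed by local analysis at each equilibrium: linearization where it is conclusive, and a comparison argument at the two non-hyperbolic stable corners.

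\textbf{Step 1: locating all equilibria.} An equilibrium of~\eqref{eq:ODE_simplified} needs both factors to vanish.
\begin{itemize}
\item \(\dot p=0\) forces \(p\in\{0,1\}\) or \((1-\theta)^2=(1-\theta_a)^2\). On \([0,1]\) the latter means \(\theta=\theta_a\).
\item \(\dot\theta=0\) forces \(\theta\in\{0,1\}\) or \((1-p)(1-\theta)=\Delta p\theta\).
\end{itemize}
I will then go through the four edges of the square.
\begin{itemize}
\item On \(\theta=0\) we have \(\dot p=\kappa p(1-p)(1-(1-\theta_a)^2)\neq 0\) for \(p\in(0,1)\), since \(\theta_a>0\).
\item On \(\theta=1\) we have \(\dot p=-\kappa p(1-p)(1-\theta_a)^2\neq0\), since \(\theta_a<1\).
\item On \(p=0\) we have \(\dot\theta=\theta(1-\theta)^2\).
\item On \(p=1\) we have \(\dot\theta=-\Delta\theta^2(1-\theta)\).
\end{itemize}
So the only boundary equilibria are the four corners. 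In the interior, \(\theta=\theta_a\) together with \((1-p)(1-\theta_a)=\Delta p\theta_a\) gives \(p^\dagger=(1-\theta_a)/(1-(1-\Delta)\theta_a)\in(0,1)\). This is the unique interior point.

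\textbf{Step 2: linearization at the saddle and at the repellers.} At \((\theta_a,p^\dagger)\) the bracket in \(\dot\theta\) vanishes, and \(\partial_p\dot p=0\) because \(\dot p\) contains the factor \((1-\theta)^2-(1-\theta_a)^2\), which vanishes at \(\theta=\theta_a\). Consequently the Jacobian has the form \(\begin{pmatrix} a & b\\ c & 0\end{pmatrix}\) with
\[
b=-\theta_a(1-\theta_a)\bigl((1-\theta_a)+\Delta\theta_a\bigr)<0,
\qquad
c=-2\kappa p^\dagger(1-p^\dagger)(1-\theta_a)<0 .
\]
The determinant is \(-bc<0\), so the eigenvalues are real, nonzero and of opposite sign, and the point is a hyperbolic saddle.

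At both repelling corners the Jacobian is diagonal, because the off-diagonal terms carry the vanishing boundary factors.
\begin{itemize}
\item At \((0,0)\) the diagonal entries are \(1\) and \(\kappa(1-(1-\theta_a)^2)\), both positive.
\item At \((1,1)\) the diagonal entries are \(\Delta\) and \(\kappa(1-\theta_a)^2\), both positive.
\end{itemize}
Hence both are hyperbolic sources.

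\textbf{Step 3 (main obstacle): stability of \((1,0)\) and \((0,1)\).} At \((1,0)\) every first derivative of \(\dot\theta\) vanishes, since \(\dot\theta\) is a product of \(1-\theta\) with a bracket that is zero there. The Jacobian is therefore \(\mathrm{diag}(0,-\kappa(1-\theta_a)^2)\), which has a zero eigenvalue. The same happens at \((0,1)\), where the Jacobian is \(\mathrm{diag}(0,-\kappa(1-(1-\theta_a)^2))\). This is the non-hyperbolicity claimed in the statement, so linearization is inconclusive and I will argue by comparison.

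For \((1,0)\), set \(u=1-\theta\) and restrict to a box \(u\le\varepsilon\), \(p\le\varepsilon\), with \(1-\varepsilon>\theta_a\).
\begin{itemize}
\item In this box \(\dot p\le -cp\) with \(c=\tfrac12\kappa\bigl((1-\theta_a)^2-\varepsilon^2\bigr)>0\). Hence \(p(t)\le p_0e^{-ct}\).
\item Next, \(\dot u=-\theta u\bigl((1-p)u-\Delta p\bigr)\le \Delta p\,u\). This gives \(u(t)\le u_0\exp(\Delta p_0/c)\), so trajectories starting close enough stay in the box (Lyapunov stability).
\item Finally, \(\dot u\le -\tfrac12(1-\varepsilon)u^2+\Delta p_0e^{-ct}u\). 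Comparison with a Bernoulli/Riccati equation then yields \(u\to0\) (algebraically rather than exponentially), which gives attraction.
\end{itemize}

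The corner \((0,1)\) is handled symmetrically, with \(q=1-p\) and \(\theta\) small enough that \((1-\theta)^2-(1-\theta_a)^2\ge c'>0\).
\begin{itemize}
\item In this region \(\dot q\le -c''q\), so \(q\) decays exponentially.
\item We have \(\dot\theta\le \theta q\), which bounds \(\theta\) by \(\theta_0\exp(q_0/c'')\) and gives stability.
\item We also have \(\dot\theta\le -\Delta(1-q)\theta^2(1-\theta)+\theta q\), which forces \(\theta\to0\).
\end{itemize}
Both arguments use only trajectories in \([0,1]^2\), matching the ``relative to \([0,1]^2\)'' qualifier. Invariance of the square is taken from Section~\ref{sec:ODE}.
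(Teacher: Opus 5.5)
Your proposal is correct. Steps 1 and 2 follow the paper's proof: the same factorization of the nullclines and the same diagonal Jacobians at the corners. At the interior point you use the same observation, that $\partial_p\dot p=0$ while $\partial_p\dot\theta<0$ and $\partial_\theta\dot p<0$, so $\det J<0$. Your edge-by-edge check that there are no other boundary equilibria is more careful than the paper, which simply lists the corners. Your transverse eigenvalues, $-\kappa(1-\theta_a)^2$ at $(1,0)$ and $-\kappa(1-(1-\theta_a)^2)$ at $(0,1)$, are the correct ones; the paper's text interchanges them, which is harmless since both are negative.

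The genuine difference is Step 3. The paper argues informally: a negative transverse eigenvalue, plus one-dimensional drift toward the corner along the invariant edge $p=0$ (resp.\ $p=1$), gives asymptotic stability. To make that rigorous one would note that the invariant edge is a center manifold and invoke a reduction principle. However, the reduced flow is only one-sidedly stable (e.g.\ $\dot u=-(1-u)u^2$ with $u=1-\theta$ moves away from $0$ for $u<0$). So the standard two-sided reduction theorem does not apply directly, and one needs a version relative to the invariant square. Your trapping-box argument sidesteps this. It combines exponential decay of $p$ (resp.\ $1-p$), the integrating-factor bound $u(t)\le u_0\exp(\Delta p_0/c)$ for Lyapunov stability, and a Bernoulli comparison for attraction. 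It works natively relative to $[0,1]^2$ and gives explicit rates: exponential transversally, algebraic along the edge. The paper's route is shorter but leaves this step heuristic.

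Two cosmetic slips. First, $\dot u=-\theta u\bigl((1-p)u-\Delta p\theta\bigr)$: you dropped a factor $\theta$, which does not affect $\dot u\le\Delta p\,u$ since $\theta\le 1$. Second, the constants carrying the factor $\tfrac12$ tacitly require $\varepsilon\le\tfrac12$.
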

	
	\noindent
	The key qualitative change is the coexistence of high-skill and low-skill
	attracting regimes. The low-skill equilibrium \((0,1)\) has no analog in the
	no-AI baseline or the fixed-delegation dynamics of Section~\ref{sec:model}.
	The remaining question is which initial conditions converge to which
	attractor. Theorem~\ref{thm:saddle} identifies the basin boundary as the
	stable manifold of the interior saddle.

	\begin{theorem}[\bf{Basin partition by the stable manifold}]
		\label{thm:saddle}
		Let $\theta_a, \kappa, \Delta$ be the parameters of ODE \eqref{eq:ODE_simplified}.
		Let $(\theta^\dagger,p^\dagger)$ be the interior saddle point guaranteed by Theorem \ref{thm:convergence}.
		There exists a nondecreasing differentiable function $\psi:(0,1)\to(0,1)$
		that extends continuously to $[0,1]$ with $\psi(0)=0$ and $\psi(1)=1$,
		whose graph coincides with the one-dimensional stable manifold of $(\theta^\dagger,p^\dagger)$
		within $(0,1)^2$.
		The curve $p = \psi(\theta)$ partitions the state space $[0,1]^2$ into two distinct basins of attraction:
		Initial states with $p_0>\psi(\theta_0)$ converge to $(0,1)$, while those
		with $p_0<\psi(\theta_0)$ converge to $(1,0)$.
	\end{theorem}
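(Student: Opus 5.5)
The plan is a global phase-plane argument for the vector field $(f,h)$ of \eqref{eq:ODE_simplified}, where
\[
f(\theta,p)=\theta(1-\theta)\bigl((1-p)(1-\theta)-\Delta p\theta\bigr),\qquad
h(\theta,p)=\kappa p(1-p)\bigl((1-\theta)^2-(1-\theta_a)^2\bigr).
\]

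\textbf{Nullclines and sign regions.} Inside $(0,1)^2$ the $\theta$-nullcline is the curve $p=\nu(\theta):=\frac{1-\theta}{1-\theta+\Delta\theta}$. This function is strictly decreasing from $\nu(0)=1$ to $\nu(1)=0$. We have $\dot\theta>0$ below the curve and $\dot\theta<0$ above it. The $p$-nullcline is the vertical line $\theta=\theta_a$, with $\dot p>0$ for $\theta<\theta_a$ and $\dot p<0$ for $\theta>\theta_a$. These two curves split the square into four regions:
\begin{itemize}
\item NE: $\theta>\theta_a$, $p>\nu$.
\item NW: $\theta<\theta_a$, $p>\nu$.
\item SW: $\theta<\theta_a$, $p<\nu$.
\item SE: $\theta>\theta_a$, $p<\nu$.
\end{itemize}
NW is positively invariant: there $\theta$ decreases and $p$ increases. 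On its boundary the flow points inward, since on $\theta=\theta_a$ with $p>p^\dagger$ we have $\dot\theta<0$, and on the nullcline with $\theta<\theta_a$ we have $\dot p>0$ while $\dot\theta=0$. By monotonicity every orbit in NW converges to an equilibrium in its closure other than the saddle, hence to $(0,1)$. Symmetrically, SE is invariant and every orbit in it converges to $(1,0)$. In NE, $\theta$ decreases and $p$ decreases; in SW, $\theta$ increases and $p$ increases. So an orbit starting in NE either exits into NW (converging to $(0,1)$), exits into SE (converging to $(1,0)$), or stays in NE forever. An orbit that stays in NE is monotone, so it converges to an equilibrium, and the only admissible one is the saddle. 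The same trichotomy holds in SW.

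\textbf{Construction of $\psi$.} By Theorem~\ref{thm:convergence} and the stable manifold theorem, the saddle has a one-dimensional local stable manifold. The stable eigenvector can be computed from the Jacobian. At $\theta=\theta_a$ we have $h_p=0$, since the factor $(1-\theta)^2-(1-\theta_a)^2$ vanishes, and $f_p=-\theta_a(1-\theta_a)\bigl((1-\theta_a)+\Delta\theta_a\bigr)<0$. It follows that the stable eigenvector has positive slope, so its two branches enter NE and SW. Each branch is a monotone orbit, running backward in time from the saddle. In reverse time the NE branch has $\theta$ and $p$ both increasing, so it must converge to a boundary equilibrium in the closure, namely $(1,1)$. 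Likewise the SW branch tends to $(0,0)$ in reverse time.

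It remains to show the branches do not hit the boundary edges $p=1$ or $\theta=1$ at non-equilibrium points. This holds because the edges are invariant, so interior orbits cannot reach them in finite time, and $\omega$/$\alpha$-limit sets on the edges consist of the edge flows, which run between corners. Since both coordinates are monotone along each branch, the union of the two branches with the saddle is the graph of a nondecreasing function $\psi$ with $\psi(0)=0$ and $\psi(1)=1$. Differentiability follows from $\psi'=h/f$ along the branches, where $f\neq0$ off the nullcline, together with smoothness of the stable manifold at the saddle.

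\textbf{Basin partition and main obstacle.} The graph of $\psi$ is an invariant curve that separates $(0,1)^2$. Orbits cannot cross it, so each side is invariant. Any orbit above the graph that converged to the saddle would lie on the stable manifold, which is a contradiction. By the trichotomy, such an orbit therefore ends in NW, or it is already in NW or SE. Since the side above the graph is connected and invariant, and SE near $(1,0)$ lies below the graph, orbits above the graph must end in NW and converge to $(0,1)$. This needs a check that SE is disjoint from the region above the graph. It holds because the NE branch stays above $\nu$ while in NE, and in SW the curve lies below $\nu$ but only at $\theta<\theta_a$. The symmetric argument sends orbits below the graph to $(1,0)$.

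The main obstacle is this global step: showing that the reverse-time branches actually reach the corners $(1,1)$ and $(0,0)$, rather than stalling, and that no periodic orbit or other limit set exists. Monotonicity in NE and SW handles the latter. For the former, I would invoke the instability of $(0,0)$ and $(1,1)$ from Theorem~\ref{thm:convergence}, together with the absence of other equilibria on the boundary edges.
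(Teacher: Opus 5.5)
Your argument is correct, and it takes a different and more elementary route than the paper.

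\textbf{The paper's route.} It uses general planar machinery:
\begin{enumerate}
\item the Dulac function $D=1/(\theta(1-\theta)p(1-p))$ together with Poincar\'e--Bendixson, to exclude periodic and homoclinic orbits and obtain global convergence;
\item a limit-set argument sending the two branches of $W^s$ to the repelling corners;
\item a separate argument identifying $W^s$ with the common interior boundary of the two basins;
\item the cooperative structure in $(1-\theta,p)$ (Kamke comparison, Lemma~\ref{lm:coupling}), to show $B_{\rm low}$ is upward closed, so that $\psi:=\inf I_\theta$ is a nondecreasing graph;
\item differentiability, by excluding vertical tangencies.
\end{enumerate}

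\textbf{Your route.} You replace all of this with the sign pattern of the four sectors cut out by $p=\nu(\theta)$ and $\theta=\theta_a$. NW and SE are forward-invariant traps, NE and SW can only be exited, and both coordinates are monotone in every sector. That single observation does several jobs.
\begin{itemize}
\item It gives global convergence without Dulac or Poincar\'e--Bendixson: every interior orbit tends to $(0,1)$, $(1,0)$, or the saddle.
\item It makes the graph property immediate and even gives \emph{strict} monotonicity of $\psi$. The branches are orbits confined to NE and SW, along which $\theta$ and $p$ move together.
\item It pins down $\psi(0)=0$ and $\psi(1)=1$ cleanly. A coordinatewise monotone backward orbit in NE (resp.\ SW) can only limit to $(1,1)$ (resp.\ $(0,0)$). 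This is tighter than the paper's limit-set discussion, which alone does not exclude both branches emanating from the same corner.
\item The inclusions $\mathrm{NW}\subset\{p>\psi(\theta)\}$ and $\mathrm{SE}\subset\{p<\psi(\theta)\}$ decide which side goes to which attractor. This avoids the paper's claim that local two-sided separation ``propagates'' along $W^s$.
\end{itemize}
You also get a localization for free: $\psi>\nu$ on $(\theta_a,1)$ and $\psi<\nu$ on $(0,\theta_a)$.

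\textbf{What the paper's route buys.} It yields a trajectory-wise comparison principle, both in the initial data and in $\theta_a$. The paper reuses this for Theorems~\ref{thm:monotonicity} and~\ref{thm:performance}. Your sector argument gives the partition but not that comparison.

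\textbf{Two small tightenings.}
\begin{itemize}
\item State explicitly that NE and SW are backward invariant. Your boundary computations already show the flow only leaves them, and this is what makes each branch monotone for all time.
\item The ``main obstacle'' you flag at the end is already resolved by your own monotone-limit argument. No appeal to the instability of $(0,0)$ and $(1,1)$, or to limit sets on the edges, is needed.
\end{itemize}
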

	
	\noindent
	Theorem~\ref{thm:saddle} gives the path-dependence statement: for the same
	parameters, small changes in the initial state can place trajectories on
	opposite sides of the separatrix and lead to different long-run outcomes.
	Figure~\ref{fig:manifold} visualizes the stable manifold and the two basins.

	The monotonicity of \(\psi\) also shows why early-stage learners are more
	fragile: when \(\theta_0\) is small, even modest delegation can place the
	trajectory in the low-skill basin. Since \(\psi(\theta)<1\) for every
	\(\theta<1\), sufficiently heavy reliance can place even high-skill learners
	in the low-skill basin.
	The same basin structure extends to multidimensional skill: in a coordinate-wise extension, the basin boundary becomes a surface rather than a curve, but bistability and path dependence persist (Section \ref{sec:multiple_skill}).
	A piecewise-polynomial approximation \(\widetilde{\psi}(\cdot)\) of the basin
	boundary appears in Section~\ref{sec:basin_approximation}, Eq.~\eqref{eq:basin}.
	
	\paragraph{Effect of AI quality on the basin boundary.}
	We next characterize how the basin boundary changes with AI skill.
	
	\begin{theorem}[\bf AI skill monotonically sweeps the basin boundary]
		\label{thm:monotonicity}
		For every $\theta\in(0,1)$, the boundary value
		$\psi_{\theta_a}(\theta)$ is nonincreasing and continuously differentiable in $\theta_a$. 
		Moreover, the family $\{\psi_{\theta_a}\}_{\theta_a\in(0,1)}$ sweeps the state space in the following sense: for every $(\theta_0,p_0)\in(0,1)^2$, there exists
		$\theta_a\in(0,1)$ such that
		$
		p_0=\psi_{\theta_a}(\theta_0).
		$
	\end{theorem}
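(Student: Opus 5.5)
The plan is to prove monotonicity by a comparison argument on the $\dot p$ equation, then get the sweeping property from the limits $\theta_a\to 0$ and $\theta_a\to 1$ combined with continuity.

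\textbf{Monotonicity via a comparison argument.} The parameter $\theta_a$ enters \eqref{eq:ODE_simplified} only through the $\dot p$ equation, via $c(\theta_a):=(1-\theta_a)^2$, which is strictly decreasing in $\theta_a$. Fix $\theta_a<\theta_a'$. Then at every interior point we have $\dot\theta$ unchanged, while $\dot p$ under $\theta_a'$ is strictly larger than under $\theta_a$; the difference is $\kappa p(1-p)(c(\theta_a)-c(\theta_a'))>0$.

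By Theorem~\ref{thm:saddle}, the region $\{p>\psi_{\theta_a}(\theta)\}$ is the basin of $(0,1)$ for $\theta_a$. Take a point $(\theta_0,p_0)$ with $p_0>\psi_{\theta_a}(\theta_0)$. I want to show it also converges to $(0,1)$ under $\theta_a'$. To do this I show the region $U=\{p>\psi_{\theta_a}(\theta)\}$ is forward invariant for the $\theta_a'$ flow. On the boundary curve $p=\psi_{\theta_a}(\theta)$, the $\theta_a$ field is tangent to the curve, since the curve is an invariant stable manifold. The $\theta_a'$ field differs from it only by a strictly positive $p$-component, so it points strictly into $U$. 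This uses that the graph is a differentiable function, so the "above" side has outward normal direction $(\psi',-1)$ and an added $+\dot p$ points inward.

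Trajectories of the $\theta_a'$ flow started in $U$ therefore stay in $U$. Hence they never lie on the $\theta_a'$ stable manifold of the saddle $(\theta_a',p^\dagger(\theta_a'))$, because that saddle lies on or below $\psi_{\theta_a}$, which I will check by evaluating the field there. They also never lie in the $(1,0)$ basin near $(1,0)$, because $U$ avoids a neighborhood of $(1,0)$ given $\psi_{\theta_a}(\theta)<1$ near $\theta=1$. Some care is needed here: I would rather argue directly that the $\theta_a'$ separatrix $\psi_{\theta_a'}$ cannot cross above $\psi_{\theta_a}$. Suppose it did at some $\theta$. Consider the first crossing point along the $\theta_a'$ stable manifold traced backward from the saddle. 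At that point the $\theta_a'$ separatrix enters $U$, and by invariance of $U$ it could never leave $U$ forward in time to reach its saddle. This is a contradiction, provided the $\theta_a'$ saddle is not in $U$. That last fact holds because $\theta^\dagger=\theta_a'$ and the saddle sits below $\psi_{\theta_a}$; I will verify this by the same inward-pointing argument applied to the segment $\theta=\theta_a'$. This gives $\psi_{\theta_a'}\le\psi_{\theta_a}$.

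\textbf{Differentiability in $\theta_a$.} I use smooth dependence of stable manifolds of hyperbolic saddles on parameters (the stable manifold theorem with parameters). Locally, near the saddle, the manifold depends $C^1$ on $\theta_a$ because the saddle location and its eigenvectors are smooth in $\theta_a$. I then propagate backward by the flow, which is $C^1$ in parameters on compact time intervals. Writing the manifold as a graph over $\theta$ and applying the implicit function theorem gives $C^1$ dependence of $\psi_{\theta_a}(\theta)$ for each interior $\theta$. This uses transversality of the curve to vertical lines, which holds because the curve is a graph. I expect this step to be the main obstacle. The difficulty is that reaching a point $\theta$ near $0$ or $1$ may need unbounded backward time, since the manifold tends to the nonhyperbolic corners. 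However, for fixed interior $\theta$ a finite time suffices, uniformly for $\theta_a$ in a small neighborhood, and that is enough.

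\textbf{Sweeping.} Fix $(\theta_0,p_0)$. The map $\theta_a\mapsto\psi_{\theta_a}(\theta_0)$ is continuous, so by the intermediate value theorem it suffices to show its limit is $\ge p_0$ as $\theta_a\to0$ and $\le p_0$ as $\theta_a\to1$.

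As $\theta_a\to1$, $c\to0$, so $\dot p>0$ on all of $\theta<1$. I would show every point with $p_0>0$ and $\theta_0<1$ eventually enters the $(0,1)$ basin. One route is a Lyapunov-type estimate: $p$ grows at a rate bounded below, while $\theta$ stays bounded away from $1$ over finite time. So $\psi_{\theta_a}(\theta_0)\to0$. Concretely, I would bound the separatrix below the curve on which $\dot\theta=0$, namely $p=(1-\theta)/(1-\theta+\Delta\theta)$, and further use that $p^\dagger\to0$.

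Symmetrically, as $\theta_a\to0$, $\dot p<0$ for $\theta>0$, and $p^\dagger\to1$. Trajectories from $(\theta_0,p_0)$ are then driven to $(1,0)$ for $\theta_a$ small, giving $\psi_{\theta_a}(\theta_0)\to1$.
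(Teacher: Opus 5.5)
Your proof is correct in outline, but it reaches monotonicity by a different route from the paper. The paper uses the parameter part of Lemma~\ref{lm:coupling}: Kamke comparison for the cooperative system in $(1-\theta,p)$ shows that raising $\theta_a$ weakly lowers $\theta(t)$ for all $t$. Hence no initial state can lie in the low-skill basin for $\theta_{a,1}$ and in the high-skill basin for some $\theta_{a,2}>\theta_{a,1}$.

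You instead use that the graph of $\psi_{\theta_a}$ is invariant under the $\theta_a$-flow. The $\theta_a'$-field differs only by the positive term $\kappa p(1-p)\bigl((1-\theta_a)^2-(1-\theta_a')^2\bigr)$ in $\dot p$, so it points strictly into $U=\{p>\psi_{\theta_a}(\theta)\}$. Thus $U$ is forward invariant. Once the $\theta_a'$ saddle lies outside $\overline U$, no point of the $\theta_a'$ stable manifold can lie in $U$.

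Your route needs no comparison theorem, and it even gives the strict inequality $\psi_{\theta_a'}<\psi_{\theta_a}$ on $(0,1)$, since a point on the old graph is pushed immediately into $U$. The price is that it relies on the differentiability and invariance of the graph from Theorem~\ref{thm:saddle}. The paper needs only the basin partition plus Lemma~\ref{lm:coupling}.

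The saddle location needs no separate inward-pointing argument. Since $\psi_{\theta_a}$ is nondecreasing and $p^\dagger(\cdot)$ is the strictly decreasing $\theta$-nullcline, you get $\psi_{\theta_a}(\theta_a')\ge\psi_{\theta_a}(\theta_a)=p^\dagger(\theta_a)>p^\dagger(\theta_a')$.

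Your differentiability step is the paper's argument (parameter-dependent stable manifold theorem plus backward flow), written more carefully. One slip: the backward branches tend to $(0,0)$ and $(1,1)$, which are hyperbolic sources, not the nonhyperbolic corners.

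For the endpoint limits, your sketches do not yet establish basin membership. Neither the ``Lyapunov-type estimate'' nor ``trajectories are driven to $(1,0)$'' does so by itself, and the $\theta$-nullcline bound at $\theta_0$ does not tend to $0$. The ingredient you name, $p^\dagger(\theta_a)\to0$ or $1$, closes both limits in one line once combined with monotonicity in $\theta$:
\begin{itemize}
\item for $\theta_a>\theta_0$, $\psi_{\theta_a}(\theta_0)\le\psi_{\theta_a}(\theta_a)=p^\dagger(\theta_a)\to0$ as $\theta_a\uparrow1$;
\item for $\theta_a<\theta_0$, $\psi_{\theta_a}(\theta_0)\ge p^\dagger(\theta_a)\to1$ as $\theta_a\downarrow0$.
\end{itemize}

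This is simpler than the paper's route, which compares with the limiting systems $\theta_a\in\{0,1\}$. Those systems are degenerate: at $\theta_a=1$ the whole edge $\theta=1$ consists of equilibria. The paper's route also needs a neighborhood of the nonhyperbolic corner lying in the basin uniformly in $\theta_a$. With the sandwich above written out, and continuity taken from your differentiability step, your intermediate value argument goes through as described.
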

	
	\noindent
	Thus, increasing AI skill shifts the basin boundary downward and enlarges the
	set of initial conditions that converge to the low-skill equilibrium. More
	capable AI can therefore make delegation attractive for more learners while
	raising the threshold for sustained independent skill. Figure~\ref{fig:stable_theta_a}
	illustrates this deformation; variation with respect to \(\kappa\) and
	\(\Delta\) is studied in Section~\ref{sec:variation_other}.
	The same comparative statics imply that timing matters: early independent
	practice can move a learner toward the high-skill side of the boundary, while
	early delegation can move the learner toward the low-skill side.
	
	\begin{remark}[\bf Interventions as basin-boundary shifts]
		\label{remark:intervention}
		The basin characterization gives a geometric view of interventions:
		mechanisms that reduce early reliance, increase practice, or change delegation
		feedback can move the separatrix \(p=\psi(\theta)\), thereby changing which
		initial conditions converge to sustained skill. Representative mechanisms are
		studied in Section~\ref{sec:intervention}.
	\end{remark}

	\subsection{Analyzing performance loss across time}
	\label{sec:performance}
	
	\begin{figure*}[t]
		\centering
		
		\begin{subfigure}[t]{0.32\textwidth}
			\centering
			\includegraphics[width=\linewidth]{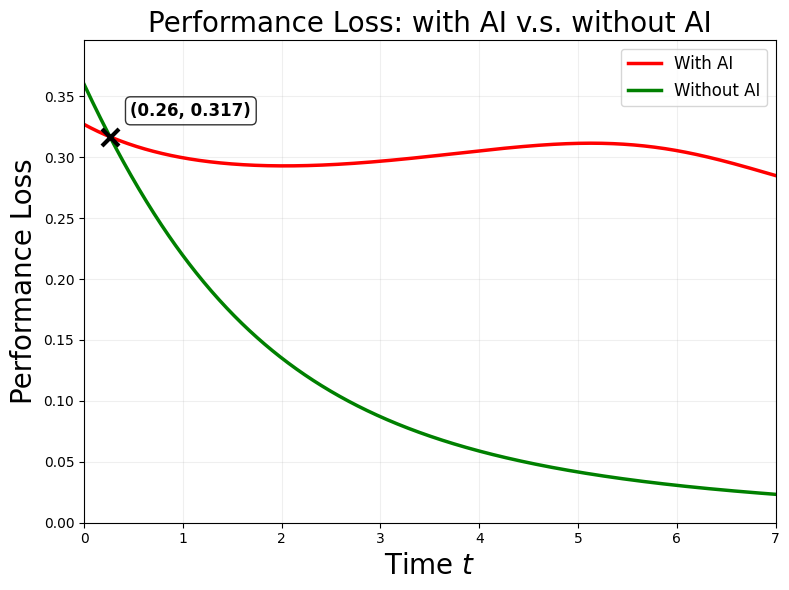}
			\caption{ Short-run gains and long-run losses}
			\label{fig:performance}
		\end{subfigure}
		\hfill
		\begin{subfigure}[t]{0.32\textwidth}
			\centering
			\includegraphics[width=\linewidth]{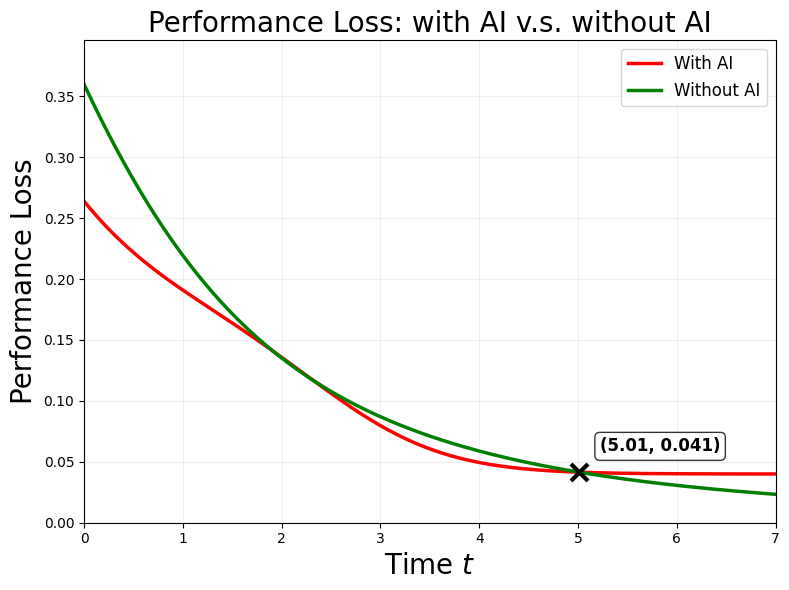}
			\caption{ Multiple crossings at high AI skill}
			\label{fig:performance_0.8}
		\end{subfigure}
		\hfill
		\begin{subfigure}[t]{0.3\textwidth}
			\centering
			\includegraphics[width=\linewidth]{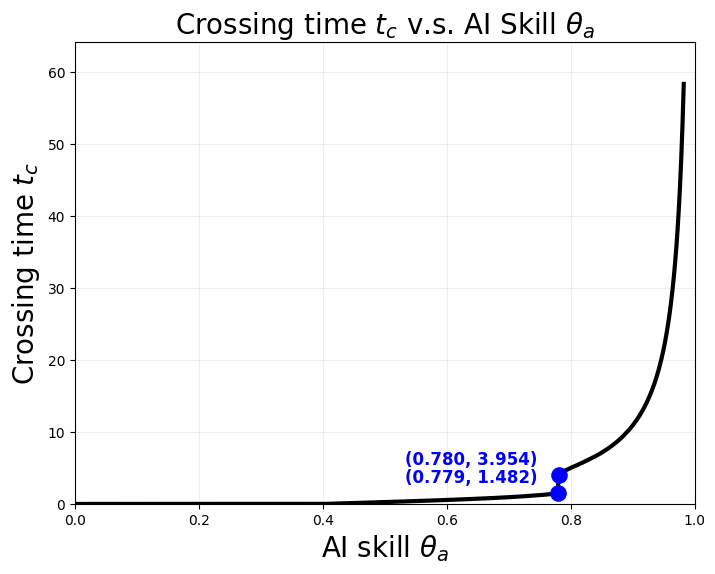}
			\caption{ Crossing time versus AI skill}
			\label{fig:crossing_AI}
		\end{subfigure}
		
		\caption{ Plots illustrating the instantaneous performance loss over time with and without AI, and how the crossing point varies with respect to AI skill $\theta_a$, with default settings $(\theta_a, \kappa, \Delta, \theta_0, p_0) = (0.5, 3, 2, 0.4, 0.3)$.}
		\label{fig:crossing}
	\end{figure*}
	
	Adaptive delegation may reduce instantaneous loss in the short run while
	lowering long-run skill, a pattern observed in recent experiments
	\cite{Kosmyna2025YourBO,Bastani2025GenerativeAW}. We formalize this reversal
	by comparing the same learner with and without AI access, holding initial
	skill fixed.
	
	For an initial state
	$(\theta_0,p_0)\in(0,1)^2$, let $\theta(t;\theta_0,p_0)$ and
	$p(t;\theta_0,p_0)$ denote the learner's skill and delegation level at time
	$t$, respectively, under adaptive delegation. In the no-AI baseline,
	$\theta(t;\theta_0,0)$ denotes the learner's skill trajectory starting from the
	same initial skill, yielding performance loss
	$(1-\theta(t;\theta_0,0))^2$. We define the \emph{performance gap} at time
	$t$ as
	\[
	G_\ell(t)
	:= \ell\left(\theta(t; \theta_0, p_0), p(t; \theta_0, p_0)\right)
	- (1 - \theta(t; \theta_0, 0))^2,
	\]
	so \(G_\ell(t)<0\) means that delegation yields lower instantaneous loss than
	the no-AI baseline. 
	If \(\theta_0\ge \theta_a\), delegation provides no initial performance
	advantage, and Lemma~\ref{lm:coupling} implies that the no-AI trajectory has
	weakly higher skill and lower loss for all \(t>0\). We therefore focus on
	\(\theta_0<\theta_a\).

	\begin{theorem}[\bf Short-term gains, long-term losses]
		\label{thm:performance}
		Let \(\theta_a,\kappa,\Delta\) be the parameters of
		ODE~\eqref{eq:ODE_simplified}. Fix an initial state
		\((\theta_0,p_0)\in(0,1)^2\) with \(\theta_0<\theta_a\). Let $
		t_c := \inf\left\{t\ge 0:\ \forall t'>t,\ G_\ell(t')>0\right\}$
		denote the final crossing time, after which adaptive delegation incurs higher
		loss than the no-AI baseline. Then \(t_c<\infty\) and
		\(\theta(t_c;\theta_0,0)\le \theta_a\).
	\end{theorem}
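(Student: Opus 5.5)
The plan is to control the sign of $G_\ell$ using two ingredients: a pointwise comparison between the adaptive and no-AI skill trajectories, and the long-run classification of Theorem~\ref{thm:saddle}.

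\textbf{Comparison and positivity of $p$.} I would first record that along~\eqref{eq:ODE_simplified} with $(\theta_0,p_0)\in(0,1)^2$, the trajectory stays in $(0,1)^2$ for all finite $t$ (Section~\ref{sec:ODE}); in particular $p(t)>0$ for all $t$. Write $\theta(t)$ for the adaptive skill and $\theta^0(t):=\theta(t;\theta_0,0)$ for the no-AI skill, which solves $\dot\theta^0=\theta^0(1-\theta^0)^2$ and increases to $1$. Since $(1-p)(1-\theta)-\Delta p\theta\le 1-\theta$ for $p\ge 0$, the adaptive skill equation has a right-hand side that is pointwise at most the no-AI one. The comparison Lemma~\ref{lm:coupling} then gives $\theta(t)\le\theta^0(t)$ for all $t\ge0$. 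Hence $(1-\theta(t))^2\ge(1-\theta^0(t))^2$.

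\textbf{Key inequality.} I would rewrite the gap as
\[
G_\ell(t)=p\bigl[(1-\theta_a)^2-(1-\theta^0)^2\bigr]+(1-p)\bigl[(1-\theta)^2-(1-\theta^0)^2\bigr].
\]
The second bracket is always $\ge 0$ by the comparison step. So whenever $\theta^0(t)>\theta_a$, the first bracket is strictly positive, and since $p(t)>0$ this gives $G_\ell(t)>0$. Because $\theta^0$ is increasing with limit $1>\theta_a$, there is a finite $T$ with $\theta^0(t)>\theta_a$ for all $t>T$. Therefore $G_\ell>0$ on $(T,\infty)$, so $t_c\le T<\infty$. This argument does not even require the basin classification. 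As a sanity check, Theorem~\ref{thm:saddle} agrees with it: on the low-skill side $G_\ell\to(1-\theta_a)^2>0$, and in the other cases positivity follows from the inequality above.

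\textbf{Location of the crossing.} If $t_c=0$, then $\theta^0(t_c)=\theta_0<\theta_a$ and the claim holds. If $t_c>0$, then by the definition of the infimum, $G_\ell(t_c)\le 0$: otherwise continuity of $G_\ell$ would keep it positive slightly before $t_c$, contradicting minimality. Now suppose $\theta^0(t_c)>\theta_a$. The key inequality at $t=t_c$ would give $G_\ell(t_c)>0$, a contradiction. Hence $\theta^0(t_c)\le\theta_a$.

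The main obstacle is the comparison step, namely ensuring Lemma~\ref{lm:coupling} applies to the $\theta$-equation with time-varying $p(t)$. If it is stated only for constant delegation, I would reprove it directly with a standard scalar comparison argument. Define $f(t,\theta)=\theta(1-\theta)\bigl((1-p(t))(1-\theta)-\Delta p(t)\theta\bigr)$, which satisfies $f\le h(\theta):=\theta(1-\theta)^2$ and is locally Lipschitz in $\theta$. With equal initial data, the subsolution $\theta$ then stays below the solution $\theta^0$ of $\dot\theta^0=h(\theta^0)$.
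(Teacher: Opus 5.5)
Your proposal is correct and follows essentially the same route as the paper. You use the same decomposition of $G_\ell$ into a $(1-p)$-weighted bracket, nonnegative by Lemma~\ref{lm:coupling} (applied with initial delegations $0\le p_0$), and a $p$-weighted bracket that is strictly positive once the no-AI skill exceeds $\theta_a$, together with $p(t)>0$ from interior invariance. Your treatment of where $t_c$ lies (showing $G_\ell(t_c)\le 0$ when $t_c>0$ by continuity) is a slightly more careful version of the paper's closing remark, which can also be read off directly from $t_c\le t^*$ and the monotonicity of $\theta(\cdot;\theta_0,0)$.
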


	\noindent
	The final crossing time \(t_c\) marks the point after which any apparent
	performance advantage from delegation has disappeared. 
	The post-crossing
	behavior depends on the basin of the adaptive-delegation trajectory. In the
	low-skill basin, the trajectory converges to persistent reliance and low skill,
	so the performance gap remains positive in the limit. In the high-skill basin,
	both trajectories converge to high skill, but the no-AI trajectory does so more
	directly; \(G_\ell(t)\) remains positive after \(t_c\) and vanishes
	asymptotically. This formalizes a mechanism consistent with cognitive
	debt \cite{Kosmyna2025YourBO}: short-term gains can be outweighed by
	structurally foregone practice.

	\paragraph{Effect of AI quality on the crossing time.}
	Figure~\ref{fig:crossing_AI} shows that \(t_c\) can grow sharply with
	\(\theta_a\): the no-AI learner may need longer to surpass a highly capable AI.
	Thus higher AI capability can extend the window of apparent gains while also
	expanding the low-skill basin (Theorem~\ref{thm:monotonicity}).
	
	\begin{remark}[\bf Empirical predictions]
		\label{remark:empirical}
		Although the analysis is theoretical, the state variables and parameters have
		direct counterparts in repeated-task data. Given task-level delegation
		indicators \(X_{t,r}\), realized losses \(\ell_{t,r}\), and AI-only benchmark
		losses \(\ell^a_{t,r}\), the delegation level \(p(t)\) can be estimated by
		the fraction of delegated tasks, while \(\theta(t)\) and \(\theta_a\) can be
		estimated from independent or randomized probe-task losses and AI-only
		benchmark losses. The decay rate \(\Delta\) and adaptation rate \(\kappa\)
		can then be estimated from finite differences on the resulting time series.
		The fitted system yields an estimated separatrix
		\(p=\widehat{\psi}(\theta)\). The model predicts that learners initialized on
		opposite sides of this boundary diverge in long-run reliance and independent
		skill, while learners initialized on the same side converge to the same
		long-run regime. Section~\ref{sec:usability} gives the full calibration
		procedure, a synthetic illustration, and falsifiable predictions.
	\end{remark}

	\subsection{Overview of the proofs}
	\label{sec:proof_idea}
	
	We present the main mathematical ingredients; full proofs appear in
	Section~\ref{sec:result_details}.

	\paragraph{Proof overview of Theorem~\ref{thm:convergence}.}
	Write the vector field in~\eqref{eq:ODE_simplified} as
	\[ 
	f(\theta,p)
	=
	\theta(1-\theta)\bigl((1-p)(1-\theta)-\Delta p\theta\bigr),
	\qquad
	h(\theta,p)
	=
	\kappa p(1-p)\bigl((1-\theta)^2-(1-\theta_a)^2\bigr).
	\]
	The equations \(f=h=0\) give the four corner equilibria through the boundary
	factors \(\theta(1-\theta)\) and \(p(1-p)\). In the interior, the nullclines
	are
	\[ 
	(1-p)(1-\theta)-\Delta p\theta=0,
	\
	(1-\theta)^2-(1-\theta_a)^2=0.
	\]
	Since \(\theta,\theta_a\in[0,1]\), the second nullcline gives
	\(\theta=\theta_a\), and the first then gives
	\[
	(\theta^\dagger,p^\dagger)
	=
	\left(
	\theta_a,\frac{1-\theta_a}{1-(1-\Delta)\theta_a}
	\right).
	\]
	Stability is determined from the Jacobian
	$
	J(\theta,p)
	=
	\begin{pmatrix}
		\partial_\theta f & \partial_p f\\
		\partial_\theta h & \partial_p h
	\end{pmatrix}.
	$
	At the corners, the off-diagonal entries vanish, and the stability within the square follows from the diagonal terms together with the one-dimensional boundary drifts. 
	This gives two locally asymptotically stable corners relative to the square,
	\((1,0)\) and \((0,1)\), and two repelling corners, \((0,0)\) and \((1,1)\).
	The stable corners are non-hyperbolic because one direction is tangent to the boundary.
	At the interior equilibrium, the nullcline conditions simplify the Jacobian.
	In particular,
	\[
	\partial_p h(\theta^\dagger,p^\dagger)=0,
	\partial_p f(\theta^\dagger,p^\dagger)<0,
	\partial_\theta h(\theta^\dagger,p^\dagger)<0.
	\]
	Therefore,
	\[
	\det J(\theta^\dagger,p^\dagger)
	=
	-\partial_p f(\theta^\dagger,p^\dagger)
	\partial_\theta h(\theta^\dagger,p^\dagger)
	<0.
	\]
	Thus, the interior equilibrium has one positive and one negative eigenvalue and is a saddle. 
	This negative determinant is the local signature of the
	coupling: higher delegation lowers the skill drift, while lower skill raises the delegation drift. 

	\paragraph{Proof overview of Theorem~\ref{thm:saddle}.}
	The main difficulty is global geometry: the stable manifold of the saddle point must form a single separatrix rather than fold, cycle or create disconnected basin boundaries.
	The proof has three steps: extend the local stable manifold globally, identify it as the basin boundary, and show that it is a monotone graph.

	By the stable manifold theorem~\cite{Perko2013}, the interior saddle has a one-dimensional local stable manifold \(W^s(\theta^\dagger,p^\dagger)\). 
	Let
	\(\Gamma\) be one branch and continue it backward under the flow:
	\[
	\Gamma^-=\{\varphi_{-t}(x):x\in\Gamma,\ t\ge0\}.
	\]
	The boundary of \([0,1]^2\) is invariant and solutions are unique, so an
	interior branch cannot hit the boundary in finite time. Suppose, for
	contradiction, that its backward continuation remains in the interior.
	Then its alpha-limit set is a compact invariant set in \((0,1)^2\).
	With the Dulac function
	\[
	D(\theta,p)=\frac{1}{\theta(1-\theta)p(1-p)},
	\]
	one computes
	\[
	\partial_\theta(Df)+\partial_p(Dh)
	=
	-\frac{(1-p)+\Delta p}{p(1-p)}<0
	\qquad\text{on }(0,1)^2.
	\]
	Hence, the Bendixson-Dulac criterion \cite{Perko2013} rules out periodic or homoclinic orbits in the interior. 
	Since the only interior equilibrium is the saddle, the closure of the backward branch intersects $\partial[0,1]^2$. 
	Applying this to both branches extends \(W^s\) globally.
	
	The global stable manifold is invariant, so trajectories cannot cross it.
	Since every interior trajectory converges to an equilibrium and the only
	attracting equilibria are \((1,0)\) and \((0,1)\), the two connected
	components of \((0,1)^2\setminus W^s\) are contained in the two attracting basins. 
	Conversely, any point on the interior boundary of either attracting
	basin cannot converge to an attracting or repelling corner, and therefore must converge to the saddle. 
	Thus, the resulting invariant stable manifold
	\(W^s(\theta^\dagger,p^\dagger)\) acts as a separatrix in the planar
	phase portrait \cite{Perko2013}.
	It remains to represent this boundary as \(p=\psi(\theta)\). 
	Set \(x=1-\theta\). 
	In \((x,p)\)-coordinates,
	\[
	\dot x
	=
	x(1-x)\bigl[\Delta p(1-x)-(1-p)x\bigr],
	\qquad
	\dot p
	=
	\kappa p(1-p)\bigl[x^2-(1-\theta_a)^2\bigr].
	\]
	The off-diagonal derivatives satisfy
	\[
	\frac{\partial \dot x}{\partial p}
	=
	x(1-x)\bigl[\Delta(1-x)+x\bigr]\ge0,
	\qquad
	\frac{\partial \dot p}{\partial x}
	=
	2\kappa p(1-p)x\ge0.
	\]
	Thus, the system is cooperative in \((x,p)\). 
	Since the off-diagonal derivatives are nonnegative, the system is cooperative;
	Kamke's comparison theorem for monotone dynamical systems
	\cite{smith1995monotone} gives the comparison principle: larger
	initial skill gap \(x_0\), larger initial delegation \(p_0\), or larger AI skill \(\theta_a\) weakly increase future skill gap and delegation (Lemma \ref{lm:coupling}). 
	It follows that the low-skill basin is upward closed in skill gap and delegation. 
	For each fixed \(\theta\), the slice
	\[
	I_\theta=\{p\in(0,1):(\theta,p)\in B_{\rm low}\}
	\]
	is therefore an upper interval. 
	The threshold point $(\theta,\psi(\theta))$ cannot lie in either open basin;
	therefore it belongs to the common basin boundary, which has already been
	identified with $W^s$.
	
	Defining
	\[
	\psi(\theta)=\inf I_\theta
	\]
	gives the basin boundary as a graph $p = \psi(\theta)$. 
	The upward-closed property implies that
	\(\psi\) is nondecreasing. 
	Differentiability follows because $W^s$ is a smooth embedded curve with no
	vertical tangencies. Away from the saddle, a vertical tangent would require
	$f(\theta,p)=0$, i.e., lying on the strictly decreasing $\theta$-nullcline,
	whereas $\psi$ is nondecreasing. At the saddle, the stable eigenvector has a
	nonzero $\theta$-component.

	\paragraph{Proof overview of Theorem~\ref{thm:monotonicity}.}
	Fix \(\theta_{a,1}<\theta_{a,2}\), and let \(\psi_1,\psi_2\) be the
	corresponding separatrices. 
	Suppose for contradiction that
	\(\psi_2(\theta^*)>\psi_1(\theta^*)\) for some \(\theta^*\). Choose
	\(p_0\) with
	$
	\psi_1(\theta^*)<p_0<\psi_2(\theta^*).
	$
	Then \((\theta^*,p_0)\) lies in the low-skill basin for \(\theta_{a,1}\), but
	in the high-skill basin for \(\theta_{a,2}\). 
	Hence,
	\(\theta^\infty(\theta_{a,1})=0\) while \(\theta^\infty(\theta_{a,2})=1\).
	But Lemma~\ref{lm:coupling} implies that increasing \(\theta_a\) weakly
	decreases \(\theta(t)\) for all \(t\geq 0\), so
	\(\theta^\infty(\theta_{a,2})\le \theta^\infty(\theta_{a,1})\), contradicting \(0<1\). 
	Thus, \(\psi_{\theta_a}(\theta)\) is nonincreasing in
	\(\theta_a\).
	
	Differentiability with respect to \(\theta_a\) follows from the
	parameter-dependent stable manifold theorem \cite{Perko2013}, since the saddle remains
	hyperbolic for every \(\theta_a\in(0,1)\). 
	The sweeping property follows from
	continuity in \(\theta_a\) and the endpoint limits
	\[
	\lim_{\theta_a\downarrow0}\psi_{\theta_a}(\theta)=1,
	\lim_{\theta_a\uparrow1}\psi_{\theta_a}(\theta)=0.
	\]
	These limits are obtained by comparing with the limiting systems
	\(\theta_a=0\) and \(\theta_a=1\). 
	At \(\theta_a=0\), delegation is
	nonincreasing and the high-skill basin fills the interior; at
	\(\theta_a=1\), delegation is nondecreasing and the low-skill basin fills the interior. 
	The intermediate value theorem then gives, for any
	\((\theta_0,p_0)\in(0,1)^2\), a value of \(\theta_a\) exists such that
	\(p_0=\psi_{\theta_a}(\theta_0)\).

	\paragraph{Proof overview of Theorem~\ref{thm:performance}.}
	Expanding
	$
	\ell(\theta,p)=(1-p)(1-\theta)^2+p(1-\theta_a)^2
	$
	gives  $G_\ell(t)=$
	\[
	(1-p(t;\theta_0,p_0))
	\Bigl[
	(1-\theta(t;\theta_0,p_0))^2
	-
	(1-\theta(t;\theta_0,0))^2
	\Bigr] 
	+p(t;\theta_0,p_0)
	\Bigl[
	(1-\theta_a)^2
	-
	(1-\theta(t;\theta_0,0))^2
	\Bigr].
	\]
	Lemma \ref{lm:coupling} gives that for any $t > 0$,
	\[
	\theta(t;\theta_0,p_0)\le \theta(t;\theta_0,0),
	\]
	so the first bracket is nonnegative. 
	The no-AI trajectory satisfies
	\(\dot\theta=\theta(1-\theta)^2\), hence increases monotonically from
	\(\theta_0\) to \(1\). 
	Since \(\theta_0<\theta_a\), there exists a finite
	\(t^*\) with
	\[
	\theta(t^*;\theta_0,0)=\theta_a.
	\]
	For \(t>t^*\), the second bracket is strictly positive, and
	\(p(t;\theta_0,p_0)>0\). 
	Hence, \(G_\ell(t)>0\) for all \(t>t^*\), so
	\(t_c\le t^*<\infty\).
	Moreover, once the no-AI trajectory exceeds
	\(\theta_a\), the performance gap is strictly positive; therefore
	\(\theta(t_c;\theta_0,0)\le\theta_a\).
	
	\section{Alternative specifications and robustness}
	\label{sec:extension}

	The main specification in~\eqref{eq:ODE_simplified} uses scalar skill,
	squared performance loss, and exact perception of fixed AI skill. We summarize
	variants that relax these choices. The main two-dimensional model is analyzed
	theoretically; the variants below are robustness checks, supported by exact
	reductions, local analysis, or simulations as indicated in
	Section~\ref{sec:details_extension}. Across them, the same qualitative phase
	structure is observed: adaptive reliance can create two attracting regimes
	separated by a basin boundary. 
	Thus, the basin mechanism is not tied to the
	particular squared-loss, scalar-skill specification used for the main phase-plane
	analysis. 
	Full derivations, simulations, and intervention variants appear in
	Section~\ref{sec:details_extension} and~\ref{sec:intervention}.
	
	\paragraph{Performance and perception.} 

	\noindent \emph{Jagged AI performance (Section \ref{sec:noisy_AI}).}
	If AI performance varies across tasks, the AI-side loss in the delegation
	update becomes \(\mathbb{E}_{s\sim\mu_a}[(1-s)^2]\). Defining
	$
	(1-\theta_a^{\rm eff})^2=\mathbb{E}_{s\sim\mu_a}[(1-s)^2]
	$
	reduces the expected dynamics to the main specification with
	\(\theta_a\) replaced by \(\theta_a^{\rm eff}\). Thus jaggedness enters through expected task loss while preserving the basin
	structure. 
	
	\noindent \emph{Biased perceived AI skill (Section \ref{sec:biased_AI_skill}).}
	Learners may update reliance using a perceived AI skill
	\(\widetilde{\theta}_a\) rather than the true effective skill \(\theta_a\).
	This replaces \((1-\theta_a)^2\) by
	\((1-\widetilde{\theta}_a)^2\) in the delegation update, while realized
	performance is still evaluated using \(\theta_a\). Overestimation expands the low-skill basin; underestimation has the opposite
	effect.
	
	\noindent \emph{Power-\(z\) loss (Section \ref{sec:model_robust}).}
	The squared loss can be replaced by
	\((1-\theta)^z\) for \(z\ge 1\). This changes both the
	performance comparison in the delegation update and the error-driven learning
	term in the skill update. The nullclines deform with \(z\), but the two-basin
	phase structure is observed to persist.
	
	\paragraph{Richer learner and AI states.} 

	\noindent \emph{Multiple skills (Section \ref{sec:multiple_skill}).}
	Skill can be multidimensional. For example, replacing \(\theta\) by
	\((\theta_1,\theta_2)\) and letting performance depend on
	\(\bar\theta=w_1\theta_1+w_2\theta_2\) yields a system over
	\((\theta_1,\theta_2,p)\). The basin boundary becomes a separating surface
	rather than a curve, but bistability and path dependence persist.
	
	\noindent \emph{Learning from AI during delegation (Section \ref{sec:learner_AI_extension}).}
	Delegation may provide a learning signal when the learner studies
	AI-generated outputs. Adding an AI-learning term to the skill update pulls
	\(\theta\) toward \(\theta_a\). In the main specification, the low-skill
	equilibrium shifts from \((0,1)\) to
	$
	\left(\frac{\beta\theta_a}{\beta+\Delta},1\right)$,
	but a low-skill, high-delegation regime is observed to persist.
	
	\noindent \emph{Improving AI capability (Section \ref{sec:improving_AI}).}
	If AI capability improves over time, the delegation update becomes
	time-dependent through \((1-\theta_a(t))^2\), while the skill update is
	unchanged. As \(\theta_a(t)\) increases, delegation becomes attractive over a larger
	range of skill levels; under a quasi-static interpretation, the basin
	structure shifts continuously as AI improves.
	
	\noindent \emph{Skill-dependent AI performance (Section \ref{sec:skill_dependent_AI}).}
	In some tasks, AI quality depends on the learner's ability to prompt or contextualize the output. 
	We model this by replacing the AI-side
	loss in \(\Phi_d\) with
	$(1-w\theta-(1-w)\theta_a)^2$,
	$w\in[0,1]$,
	while leaving the skill update unchanged. As \(w\) increases, the delegation
	signal depends more strongly on learner skill, reshaping the basin boundary.
	The resulting dynamics interpolate between AI performance that is independent
	of the learner and AI performance that depends on the learner's skill; the
	two-basin structure is observed to persist across this range.

	
	\begin{table}[ht]
		\centering
		\caption{Used notations}
		\begin{tabular}{ccc} 
			\toprule
			\textbf{Symbol} & \textbf{Domain} & \textbf{Description} \\
			\midrule
			$\theta_d$   &   $[0,1]$   & Default skill  \\
			$\theta_a$   &   $[0,1]$   & AI skill  \\
			$\theta_0 $  & $[0,1]$  & Initial skill of learner \\
			$\theta(t) $  & $[0,1]$  & Learner skill at time $t$ \\
			$\theta(t; \theta_0, p_0)$ & $[0,1]$  & Learner skill at time $t$ with initial state $(\theta_0, p_0)$ \\
			$p_0$ & $[0,1]$  & Initial delegation level of learner \\
			$p(t)$ & $[0,1]$  & Delegation level at time $t$ \\
			$p(t; \theta_0, p_0)$ & $[0,1]$  & Delegation level at time $t$ with initial state $(\theta_0, p_0)$ \\
			$\Delta$ & $\R_{\geq 0}$ & Decay rate \\
			$\kappa$ & $\R_{\geq 0}$ & Delegation rate \\
			$d(\theta , \theta' )$ & $[0,1]^2 \rightarrow \R_{\geq 0}$ & Distance between two skills $\theta$ and $\theta'$ \\
			$\ell(\theta , p )$ & $[0,1]^2 \rightarrow \R_{\geq 0}$ & Instantaneous performance loss w.r.t. skill $\theta$ and delegation level $p$ \\
			$\psi(\theta)$ & $[0,1] \rightarrow [0,1]$ & Mapping for the stable manifold from $\theta$ to a delegation level \\ 
			$G_\ell(t)$ & $\R_{\geq 0} \rightarrow \R$ & Performance gap between with and without AI at time $t$ \\
			$t_c$ & $\R_{\geq 0}$ & Maximum crossing time for performance loss \\
			\bottomrule
		\end{tabular}
		\label{tab:notation}
	\end{table}
	
	\section{Details for ODE \eqref{eq:ODE} from Section \ref{sec:model}}
	\label{sec:full}
	
	In this section, we first derive a dynamics from the learning procedure, and then show that the dynamics tracks to ODE \eqref{eq:ODE}.
	Finally, we prove some basic properties of ODE \eqref{eq:ODE}.
	Table \ref{tab:notation} summarizes notations used in this paper.
	
	\subsection{Deriving dynamics under AI assistance}
	\label{sec:derivation}
	
	We derive the dynamics corresponding to the repeated-task model in
	Section~\ref{sec:model}. Time is indexed by short windows $t=0,1,2,\ldots$.
	Within each window $t$, the learner faces $m$ comparable tasks. The learner's
	skill $\theta(t)$ and delegation level $p(t)$ are held fixed during this
	window, while the task-level delegation decisions vary across tasks.
	
	For each task $r\in[m]$, let
	$X_{t,r}\in\{0,1\}$ denote the delegation indicator, where $X_{t,r}=0$ means
	that the learner performs the task independently and $X_{t,r}=1$ means that
	the task is delegated to AI. We model
	\[
	X_{t,r}\sim \mathrm{Bern}(p(t)),
	\qquad r=1,\ldots,m,
	\]
	independently conditional on the current state $(\theta(t),p(t))$. Let
	\[
	\bar X_t := \frac{1}{m}\sum_{r=1}^m X_{t,r}
	\]
	denote the realized fraction of tasks delegated to AI in window $t$.
	
	The learning mechanism within one time window is as follows.
	
	\begin{itemize}
		\item \textbf{(Task completion stage)}
		For each task $r\in[m]$, the learner delegates the task to AI if
		$X_{t,r}=1$ and performs it independently if $X_{t,r}=0$.
		
		\item \textbf{(Evaluation stage)}
		The learner submits the resulting output to the teacher and receives
		task-level feedback. If the task is performed independently, the performance
		loss is $\ell(\theta(t))$; if the task is delegated to AI, the performance
		loss is $\ell(\theta_a)$. Thus the realized task-level performance loss is
		\[
		\ell_{t,r}
		=
		(1-X_{t,r})\ell(\theta(t)) + X_{t,r}\ell(\theta_a).
		\]
		
		\item \textbf{(Skill update)}
		Independent task execution produces an error-driven learning signal, while
		delegation produces non-use decay toward the default skill $\theta_d$. We
		encode these two effects by the task-level skill-update potential
		\[
		\Phi_{s,t,r}
		=
		(1-X_{t,r})\ell(\theta(t))
		+
		\Delta X_{t,r} g(\theta(t),\theta_d).
		\]
		Averaging over the $m$ tasks in the window gives
		\[
		\bar\Phi_s(\theta(t))
		=
		\frac{1}{m}\sum_{r=1}^m \Phi_{s,t,r}.
		\]
		Using the multiplicative-update metric
		$F_\theta(\theta)=1/(\theta(1-\theta))$, the skill update is
		\[
		\theta(t+1)
		=
		\theta(t)
		-
		\eta \theta(t)(1-\theta(t))
		\partial_\theta \bar\Phi_s(\theta(t)).
		\]
		
		\item \textbf{(Delegation update)}
		The delegation level is updated according to the relative performance of
		AI-assisted and independent work. The delegation-update potential is the
		expected performance loss
		\[
		\Phi_d(\theta(t),p(t))
		=
		(1-p(t))\ell(\theta(t)) + p(t)\ell(\theta_a).
		\]
		Using the multiplicative-update metric $F_p(p)=1/(p(1-p))$, the delegation
		update is
		\[
		p(t+1)
		=
		p(t)
		-
		\eta\kappa p(t)(1-p(t))
		\partial_p \Phi_d(\theta(t),p(t)).
		\]
		Equivalently,
		\[
		p(t+1)
		=
		p(t)
		+
		\eta\kappa p(t)(1-p(t))
		\bigl(\ell(\theta(t))-\ell(\theta_a)\bigr).
		\]
	\end{itemize}
	Combining the skill and delegation updates, the discrete-time stochastic
	dynamics are
	\begin{align}
		\label{eq:dynamics_general}
		\begin{aligned}
			X_{t,r} &\sim \mathrm{Bern}(p(t)), \qquad r=1,\ldots,m,\\
			\theta(t+1)
			&=
			\theta(t)
			-
			\eta\theta(t)(1-\theta(t))
			\left[
			(1-\bar X_t)\partial_\theta\ell(\theta(t))
			+
			\Delta \bar X_t \partial_\theta g(\theta(t),\theta_d)
			\right],\\
			p(t+1)
			&=
			p(t)
			+
			\eta\kappa p(t)(1-p(t))
			\bigl(\ell(\theta(t))-\ell(\theta_a)\bigr).
		\end{aligned}
	\end{align}
	For ODE \eqref{eq:ODE_simplified}
	\[
	\ell(\theta)=(1-\theta)^2,
	\qquad
	g(\theta,\theta_d)=(\theta-\theta_d)^2,
	\]
	we have
	\[
	\partial_\theta\ell(\theta)=-2(1-\theta),
	\qquad
	\partial_\theta g(\theta,\theta_d)=2(\theta-\theta_d).
	\]
	Substituting these derivatives into \eqref{eq:dynamics_general} yields
	\begin{align}
		\label{eq:dynamics}
		\begin{aligned}
			X_{t,r} &\sim \mathrm{Bern}(p(t)), \qquad r=1,\ldots,m,\\
			\theta(t+1)
			&=
			\theta(t)
			+
			2\eta\theta(t)(1-\theta(t))
			\left[
			(1-\bar X_t)(1-\theta(t))
			+
			\Delta \bar X_t(\theta_d-\theta(t))
			\right],\\
			p(t+1)
			&=
			p(t)
			+
			\eta\kappa p(t)(1-p(t))
			\left((1-\theta(t))^2-(1-\theta_a)^2\right).
		\end{aligned}
	\end{align}
	
	\subsection{Dynamics \eqref{eq:dynamics} tracks to ODE \eqref{eq:ODE}}
	\label{sec:track}
	
	We show that, as the step size $\eta\to 0$, the stochastic trajectory of
	Dynamics~\eqref{eq:dynamics} tracks the solution of ODE~\eqref{eq:ODE}. The
	key point is that each time window contains many task-level decisions, whose
	average delegation rate is $\bar X_t$.
	
	\begin{lemma}[\bf Expected discrete dynamics]
		\label{lem:expected_dynamics}
		Let $\mathcal{F}_t$ denote the filtration generated by the history up to time
		$t$. For Dynamics~\eqref{eq:dynamics}, conditional on the current state
		$(\theta(t),p(t))$, we have
		\begin{align}
			\label{eq:expected_dynamics}
			\begin{aligned}
				\mathbb{E}[\theta(t+1)\mid\mathcal{F}_t]
				&=
				\theta(t)
				+
				2\eta\theta(t)(1-\theta(t))
				\left[
				(1-p(t))(1-\theta(t))
				+
				\Delta p(t)(\theta_d-\theta(t))
				\right],\\
				p(t+1)
				&=
				p(t)
				+
				\eta\kappa p(t)(1-p(t))
				\left((1-\theta(t))^2-(1-\theta_a)^2\right).
			\end{aligned}
		\end{align}
	\end{lemma}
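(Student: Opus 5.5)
The plan is a direct conditional-expectation computation on the update rules in Dynamics~\eqref{eq:dynamics}, using that the only randomness within window $t$ is the vector of delegation indicators $X_{t,1},\dots,X_{t,m}$. The state $(\theta(t),p(t))$ is $\mathcal{F}_t$-measurable, and the $X_{t,r}$ are drawn conditionally i.i.d.\ $\mathrm{Bern}(p(t))$ given $\mathcal{F}_t$. We therefore only need to track how $\bar X_t$ enters each update.

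\textbf{The $p$-update.} Note first that the update for $p(t+1)$ in \eqref{eq:dynamics} involves only $p(t)$, $\theta(t)$ and the constants $\eta,\kappa,\theta_a$. It contains no $\bar X_t$. Hence $p(t+1)$ is itself $\mathcal{F}_t$-measurable, and its formula carries over unchanged. This is why the second line of \eqref{eq:expected_dynamics} is stated without an expectation.

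\textbf{The $\theta$-update.} Here the right-hand side is affine in $\bar X_t$, with coefficients that are $\mathcal{F}_t$-measurable. Explicitly,
\[
\theta(t+1)=\theta(t)+2\eta\theta(t)(1-\theta(t))\Bigl[(1-\theta(t))+\bar X_t\bigl(\Delta(\theta_d-\theta(t))-(1-\theta(t))\bigr)\Bigr].
\]
Taking $\mathbb{E}[\cdot\mid\mathcal{F}_t]$ and pulling out the measurable factors, linearity of conditional expectation gives
\[
\mathbb{E}[\bar X_t\mid\mathcal{F}_t]=\frac1m\sum_{r=1}^m\mathbb{E}[X_{t,r}\mid\mathcal{F}_t]=p(t).
\]
Substituting $p(t)$ for $\bar X_t$ yields exactly the first line of \eqref{eq:expected_dynamics}.

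No step is a genuine obstacle; the argument uses only linearity and does not need independence across tasks. The one point to state carefully is the measurability bookkeeping: $\theta(t)$ and $p(t)$ are determined by the history, so they can be factored out of the conditional expectation.
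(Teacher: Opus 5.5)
Your proposal is correct and follows essentially the same route as the paper: the $p$-update is $\mathcal{F}_t$-measurable, and the $\theta$-update is affine in $\bar X_t$ with $\mathbb{E}[\bar X_t\mid\mathcal{F}_t]=p(t)$, so linearity of conditional expectation gives the result. Your side remark is also accurate: only the conditional marginals $X_{t,r}\sim\mathrm{Bern}(p(t))$ are needed, not independence across tasks.
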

	
	\begin{proof}
		The update for $p(t+1)$ is deterministic conditional on $\mathcal{F}_t$. The
		skill update depends on the realized fraction of delegated tasks
		\[
		\bar X_t=\frac{1}{m}\sum_{r=1}^m X_{t,r}.
		\]
		Since $X_{t,r}\sim \mathrm{Bern}(p(t))$ independently conditional on
		$\mathcal{F}_t$, we have
		\[
		\mathbb{E}[\bar X_t\mid\mathcal{F}_t]=p(t).
		\]
		The update for $\theta(t+1)$ in \eqref{eq:dynamics} is affine in $\bar X_t$.
		Therefore,
		\begin{align*}
			\mathbb{E}[\theta(t+1)\mid\mathcal{F}_t]
			&=
			\theta(t)
			+
			2\eta\theta(t)(1-\theta(t))
			\left[
			(1-\mathbb{E}[\bar X_t\mid\mathcal{F}_t])(1-\theta(t))
			+
			\Delta\mathbb{E}[\bar X_t\mid\mathcal{F}_t](\theta_d-\theta(t))
			\right]\\
			&=
			\theta(t)
			+
			2\eta\theta(t)(1-\theta(t))
			\left[
			(1-p(t))(1-\theta(t))
			+
			\Delta p(t)(\theta_d-\theta(t))
			\right].
		\end{align*}
		This proves \eqref{eq:expected_dynamics}.
	\end{proof}
	
	The drift field associated with \eqref{eq:expected_dynamics} is
	\[
	\left(
	2\theta(1-\theta)\bigl((1-p)(1-\theta)+\Delta p(\theta_d-\theta)\bigr),
	\;
	\kappa p(1-p)\bigl((1-\theta)^2-(1-\theta_a)^2\bigr)
	\right),
	\]
	which is exactly the right-hand side of ODE~\eqref{eq:ODE}. By standard
	stochastic-approximation arguments \cite{Benaim,Heathcote2000}, as
	$\eta\to 0$, the stochastic dynamics~\eqref{eq:dynamics} track the solution of
	ODE~\eqref{eq:ODE} on finite time intervals.
	
	\subsection{Well-posedness and convergence of ODE~\eqref{eq:ODE}}
	\label{sec:ODE}
	
	We establish basic well-posedness properties of the general ODE~\eqref{eq:ODE}.
	Write
	\begin{align}
		\label{eq:general_ODE_rewrite}
		\begin{aligned}
			\dot{\theta}
			&=
			2\theta(1-\theta)
			\Bigl((1-p)(1-\theta)+\Delta p(\theta_d-\theta)\Bigr),\\
			\dot p
			&=
			\kappa p(1-p)
			\Bigl((1-\theta)^2-(1-\theta_a)^2\Bigr).
		\end{aligned}
	\end{align}
	The factor $2$ is immaterial for the phase portrait, but we keep it here to
	match the derivation from ODE \eqref{eq:ODE_simplified}
	$\ell(\theta)=(1-\theta)^2$ and $g(\theta,\theta_d)=(\theta-\theta_d)^2$.
	
	\begin{theorem}[\bf{Existence, invariance, and convergence}]
		\label{thm:existence_general}
		Assume $\kappa,\Delta>0$ and $\theta_a,\theta_d\in[0,1]$. For every initial
		state $(\theta_0,p_0)\in[0,1]^2$, ODE~\eqref{eq:ODE} admits a unique global
		solution $(\theta(t),p(t))$ defined for all $t\ge 0$. Moreover,
		\[
		(\theta(t),p(t))\in[0,1]^2
		\qquad\text{for all }t\ge 0.
		\]
		If the initial condition lies in the interior $(0,1)^2$, then the trajectory
		remains in $(0,1)^2$ for every finite time. In addition, when the equilibria
		are isolated, every trajectory converges to an equilibrium.
	\end{theorem}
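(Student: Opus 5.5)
The plan has four steps: local existence and uniqueness, invariance of the square, global existence from that invariance, and convergence from a gradient-like structure on the interior.

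\emph{Local existence and uniqueness.} The right-hand side of ODE~\eqref{eq:ODE} is a polynomial in $(\theta,p)$. It is therefore locally Lipschitz on $\R^2$, and Picard--Lindel\"of gives a unique maximal solution for every initial state.

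\emph{Invariance of $[0,1]^2$.} The $\theta$-component carries the factor $\theta(1-\theta)$. Hence the lines $\theta=0$ and $\theta=1$ are invariant: on each, the one-dimensional solution with $\theta\equiv$ const, $p$ solving the reduced ODE, is a solution, and by uniqueness no trajectory can reach these lines from off them in finite time. The factor $p(1-p)$ makes $p=0$ and $p=1$ invariant in the same way. A trajectory starting in the open square can therefore never touch its boundary in finite time, so it stays in $(0,1)^2$ for every finite $t$. A trajectory starting on the boundary stays on the invariant edge or corner it starts on.

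\emph{Global existence.} The solution remains in the compact set $[0,1]^2$, so it cannot blow up, and the maximal solution is defined for all $t\ge0$.

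\emph{Convergence.} This is the main obstacle. Boundary trajectories obey scalar autonomous ODEs, so they are monotone and bounded and converge to an equilibrium. For interior trajectories I would not look for a Lyapunov function. I would use Poincar\'e--Bendixson on the compact invariant square instead. The $\omega$-limit set of an interior trajectory is nonempty, compact, connected and invariant. If it contains no equilibrium, it is a periodic orbit. A periodic orbit cannot lie in the interior, by the Bendixson--Dulac criterion with $D=1/(\theta(1-\theta)p(1-p))$; for the general ODE~\eqref{eq:ODE} the analogous computation gives $\partial_\theta(Df)+\partial_p(Dh)=-2\bigl((1-p)+\Delta p\bigr)/(p(1-p))<0$, since the $\dot p$ term divided by $p(1-p)$ is independent of $p$. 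It also cannot lie on the boundary: the boundary consists of equilibria and connecting orbits between corners, and a closed orbit there would have to go around the whole square. That is impossible because the edge flows do not form a cycle. The edge $\theta=0$ flows toward $p=1$ when $\theta_a>0$, and the edge $\theta=1$ flows toward $p=0$, so the circulation is inconsistent. If $\theta_a=0$, the edges $\theta=0$ and $\theta=1$ both have $\dot p\le0$, which again rules out a cycle.

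So the $\omega$-limit set contains equilibria. With isolated equilibria, the Poincar\'e--Bendixson trichotomy leaves two possibilities: the limit set is a single equilibrium, or it is a cycle of equilibria joined by heteroclinic orbits. To exclude the second, I would use the Dulac argument again, applied to the region enclosed by such a graphic. If the graphic lies in the interior, Green's theorem applied to the divergence of $D\cdot(f,h)$ gives a contradiction. If it touches the boundary, I would check the corner and edge directions directly; I expect this to be the fiddliest case. Since the equilibria are finitely many and isolated, and a connected limit set made only of equilibria must be a single point, every trajectory converges to an equilibrium.
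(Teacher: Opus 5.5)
Your steps are the paper's: Picard--Lindel\"of for the polynomial field, invariance of the edges from the factors $\theta(1-\theta)$ and $p(1-p)$ together with uniqueness, global existence by compactness, and then Bendixson--Dulac with $D=1/(\theta(1-\theta)p(1-p))$ followed by Poincar\'e--Bendixson. Your divergence $-2((1-p)+\Delta p)/(p(1-p))$ is the correct one for ODE~\eqref{eq:ODE}; the paper drops the factor $2$, which is harmless.

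You go further than the paper by raising the third Poincar\'e--Bendixson alternative, a cycle of equilibria joined by connecting orbits. The paper passes straight from ``no interior periodic orbit, isolated equilibria'' to convergence and never excludes this case. However, your treatment of it contains a wrong step and leaves the key case open.

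The circulation argument fails as written. The left edge flowing up and the right edge flowing down is exactly the orientation of the circuit $(0,0)\to(0,1)\to(1,1)\to(1,0)\to(0,0)$, so those two edges are consistent with a boundary cycle. What breaks the cycle is the bottom edge, where $\dot\theta=2\theta(1-\theta)^2>0$. Both the bottom edge and the right edge run into $(1,0)$, so no boundary orbit leaves $(1,0)$.

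Two smaller points:
\begin{itemize}
\item A genuine periodic orbit in $\partial[0,1]^2$ is trivially impossible, because each nonconstant boundary orbit lies in one open edge and is monotone there. What you actually need to exclude is a heteroclinic cycle.
\item Your $\theta_a=0$ case is excluded by hypothesis. For $\theta_a=0$ the whole edge $\theta=0$ consists of equilibria, and the same holds for $\theta=1$ when $\theta_a=1$. So isolated equilibria force $\theta_a\in(0,1)$.
\end{itemize}

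The case you call fiddly closes by classifying equilibria rather than checking edges. Two facts do the work:
\begin{itemize}
\item A hyperbolic source lies in no $\omega$-limit set other than its own.
\item If an equilibrium that is asymptotically stable relative to the square lies in $\omega(x)$, the trajectory eventually enters its basin, so $\omega(x)$ is that single point.
\end{itemize}

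For $\theta_d=0$ these facts remove all four corners. The corners $(0,0)$ and $(1,1)$ are hyperbolic sources. The corners $(1,0)$ and $(0,1)$ are stable relative to the square. Because these two are non-hyperbolic, their stability needs a short direct argument, for example exponential decay of $p$, respectively $1-p$, near them, after which the skill coordinate grows by at most a bounded factor and then converges. A cycle could then contain only the interior saddle. It would therefore be a homoclinic loop in $(0,1)^2$, which your Green's-theorem version of the Dulac argument rules out.

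For $\theta_d\in(0,1]$, use the fact that every equilibrium on such a cycle must receive a connecting orbit from an equilibrium of the cycle.
\begin{itemize}
\item The only orbits converging to the boundary saddle $(0,1)$ lie on the left edge and come from the source $(0,0)$.
\item When $(\theta_d,1)$ is not a sink, the only orbits converging to it lie on the top edge and come from $(0,1)$ or $(1,1)$. The point $(0,1)$ is already excluded, and $(1,1)$ is a source when $\theta_d<1$.
\end{itemize}
With this added your argument is complete, and it is more rigorous than the paper's.
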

	
	\begin{proof}
		Let $x=(\theta,p)$ and let $F(x)$ denote the vector field in
		\eqref{eq:general_ODE_rewrite}. Each component of $F$ is a polynomial in
		$(\theta,p)$, so $F$ is smooth and hence locally Lipschitz on $\mathbb{R}^2$.
		Therefore, by the Picard-Lindel\"of theorem \cite{hirsch2013differential,Perko2013}, for every initial condition
		$x_0\in\mathbb{R}^2$ there exists a unique maximal solution
		$x(t)$ on some interval $[0,T_{\max})$.
		
		We first show positive invariance of $[0,1]^2$. Each coordinate has the
		logistic-barrier form
		\[
		\dot y(t)=y(t)(1-y(t))h(t),
		\]
		where $h(t)$ is continuous along the trajectory. For such an equation, the
		interval $[0,1]$ is positively invariant: if $y(t)$ reaches $0$ or $1$, the
		right-hand side vanishes at that boundary, and uniqueness prevents the
		trajectory from crossing it. Applying this argument to both $\theta(t)$ and
		$p(t)$ gives
		\[
		\theta(t),p(t)\in[0,1]
		\qquad\text{for all }t<T_{\max}.
		\]
		If the initial condition is in $(0,1)^2$, the same uniqueness argument implies
		that neither coordinate can hit $0$ or $1$ at finite time.
		
		Since $[0,1]^2$ is compact and $F$ is continuous, the vector field is bounded
		on $[0,1]^2$. Hence a solution starting in $[0,1]^2$ cannot blow up in finite
		time. Because $F$ is smooth on all of $\mathbb{R}^2$, there is also no
		singularity of the vector field. Thus the maximal existence time satisfies
		$T_{\max}=+\infty$.
		
		It remains to rule out periodic orbits in the interior. On $(0,1)^2$, define
		the Dulac function
		\[
		D(\theta,p)=\frac{1}{\theta(1-\theta)p(1-p)}.
		\]
		Let
		\[
		A(\theta,p)=(1-p)(1-\theta)+\Delta p(\theta_d-\theta),
		\qquad
		B(\theta)=(1-\theta)^2-(1-\theta_a)^2.
		\]
		Then
		\[
		D\dot{\theta}=\frac{A(\theta,p)}{p(1-p)},
		\qquad
		D\dot p=\frac{\kappa B(\theta)}{\theta(1-\theta)}.
		\]
		Therefore,
		\[
		\frac{\partial}{\partial\theta}(D\dot{\theta})
		+
		\frac{\partial}{\partial p}(D\dot p)
		=
		\frac{\partial_\theta A(\theta,p)}{p(1-p)}
		=
		-\frac{(1-p)+\Delta p}{p(1-p)}
		<0
		\]
		for all $(\theta,p)\in(0,1)^2$. 
		By the Bendixson-Dulac criterion \cite{Perko2013}, there are
		no periodic orbits in the interior. Since trajectories are bounded and the
		equilibria are isolated, the Poincar\'e-Bendixson theorem \cite{Perko2013} implies that every
		interior trajectory converges to an equilibrium. 
		Boundary trajectories reduce
		to one-dimensional dynamics and also converge to boundary equilibria.
	\end{proof}
	\noindent
	We next list the equilibria of the general system. This is useful for relating
	the general ODE~\eqref{eq:ODE} to the normalized system
	\eqref{eq:ODE_simplified} analyzed in the main text.
	
	\begin{theorem}[\bf{Equilibria of the general ODE}]
		\label{thm:convergence_general}
		Assume $\theta_d\in[0,1)$ and $\theta_a\in(\theta_d,1)$. The general
		ODE~\eqref{eq:ODE} has the following equilibria in $[0,1]^2$:
		\[
		(0,0),\qquad (0,1),\qquad (1,0),\qquad (1,1),
		\]
		the low-skill full-delegation equilibrium
		\[
		(\theta_d,1),
		\]
		and the interior equilibrium
		\[
		(\theta^\dagger,p^\dagger)
		=
		\left(
		\theta_a,\,
		\frac{1-\theta_a}
		{1-\theta_a+\Delta(\theta_a-\theta_d)}
		\right).
		\]
		When $\theta_d=0$, the equilibrium $(\theta_d,1)$ coincides with $(0,1)$.
		
		Moreover, $(1,0)$ is the high-skill stable equilibrium. If $\theta_d>0$,
		then $(\theta_d,1)$ is the low-skill stable equilibrium and $(0,1)$ is a
		boundary saddle. When $\theta_d=0$, the low-skill stable equilibrium is
		$(0,1)$. The interior equilibrium $(\theta^\dagger,p^\dagger)$ is a saddle.
	\end{theorem}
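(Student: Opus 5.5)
The approach is direct: enumerate the zeros of the vector field in \eqref{eq:general_ODE_rewrite} on $[0,1]^2$, then read off stability from the Jacobian together with the one-dimensional boundary dynamics. Write
\[
f=2\theta(1-\theta)A(\theta,p),\qquad h=\kappa p(1-p)B(\theta),
\]
with
\[
A=(1-p)(1-\theta)+\Delta p(\theta_d-\theta),\qquad B=(1-\theta)^2-(1-\theta_a)^2 .
\]
Since $\theta,\theta_a\in[0,1]$, $B=0$ is equivalent to $\theta=\theta_a$.

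\emph{Enumeration.} An equilibrium needs $\theta\in\{0,1\}$ or $A=0$, and also $p\in\{0,1\}$ or $\theta=\theta_a$. I will go through the cases.
\begin{itemize}
\item $\theta\in\{0,1\}$ together with $p\in\{0,1\}$ gives the four corners.
\item $\theta\in\{0,1\}$ together with $\theta=\theta_a$ is impossible because $\theta_a\in(0,1)$.
\item $A=0$ with $p=0$ forces $\theta=1$, a corner already counted.
\item $A=0$ with $p=1$ gives $\Delta(\theta_d-\theta)=0$, i.e.\ $(\theta_d,1)$.
\item $A=0$ with $\theta=\theta_a$ gives $(1-p)(1-\theta_a)=\Delta p(\theta_a-\theta_d)$. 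Solving yields the stated $p^\dagger$, which lies in $(0,1)$ because $\theta_a>\theta_d$.
\end{itemize}

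\emph{Stability at the corners and at $(\theta_d,1)$.} At any point with $p\in\{0,1\}$ the Jacobian is triangular, because $\partial_\theta h=\kappa p(1-p)B'=0$ there. Its eigenvalues are $\partial_\theta f$ and $\partial_p h=\kappa(1-2p)B(\theta)$.

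At $(1,0)$: $\partial_\theta f=-2A(1,0)=0$ and $\partial_p h=B(1)=-(1-\theta_a)^2\kappa<0$. The eigenvalue in the $\theta$ direction vanishes. I will handle it with the boundary dynamics: on $p=0$ we have $\dot\theta=2\theta(1-\theta)^2>0$, so $\theta$ moves toward $1$. Near $(1,0)$, for $p$ small, $A>0$ as well. Combining these gives asymptotic stability relative to the square, in the non-hyperbolic sense used in Theorem~\ref{thm:convergence}.

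At $(\theta_d,1)$ with $\theta_d>0$:
\[
\partial_\theta f=2\theta_d(1-\theta_d)\,\partial_\theta A=-2\Delta\theta_d(1-\theta_d)<0,
\qquad
\partial_p h=-\kappa B(\theta_d)<0,
\]
where $B(\theta_d)>0$ because $\theta_d<\theta_a$. So this equilibrium is hyperbolic and stable.

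At $(0,1)$ with $\theta_d>0$: $\partial_\theta f=2A(0,1)=2\Delta\theta_d>0$ and $\partial_p h=-\kappa B(0)<0$. This is a saddle, unstable along the edge $p=1$ and attracting in the $p$ direction. When $\theta_d=0$, the point $(0,1)$ coincides with $(\theta_d,1)$. Then $\partial_\theta f=0$, and I will use the edge flow $\dot\theta=-2\Delta\theta^2(1-\theta)<0$ to get stability exactly as in Theorem~\ref{thm:convergence}.

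The remaining corners $(0,0)$ and $(1,1)$ are not required by the statement. If wanted, their repelling character follows from the same triangular computation.

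\emph{Interior saddle.} At $(\theta^\dagger,p^\dagger)$ we have $\partial_p h=\kappa(1-2p)B(\theta_a)=0$. Hence
\[
\det J=-\partial_p f\,\partial_\theta h .
\]
The two factors have the following signs:
\[
\partial_p f=2\theta_a(1-\theta_a)\bigl[-(1-\theta_a)+\Delta(\theta_d-\theta_a)\bigr]<0,
\qquad
\partial_\theta h=-2\kappa p^\dagger(1-p^\dagger)(1-\theta_a)<0 .
\]
Therefore $\det J<0$, the eigenvalues are real with opposite signs, and the equilibrium is a saddle.

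The main point needing care is the degenerate cases at $(1,0)$ and, when $\theta_d=0$, at $(0,1)$. There a zero eigenvalue prevents linearization from deciding stability. I plan to settle it by checking the sign of the edge drift directly, as above, rather than by a center-manifold argument.
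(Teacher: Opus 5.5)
Your route is the paper's. You enumerate equilibria through the factors $\theta(1-\theta)$, $p(1-p)$, $A$, $B$; you get the saddle from $\partial_p h=0$ and $\det J=-\partial_p f\,\partial_\theta h<0$; and you handle the boundary points through the triangular Jacobian and the edge flows. Your case analysis and your eigenvalue computations at $(\theta_d,1)$ and at $(0,1)$ are correct. They are more explicit than the paper, which only says that the boundary statements follow by restricting the vector field to the faces.

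\textbf{The gap: stability of $(1,0)$.} Your claim that $A>0$ near $(1,0)$ for small $p$ is false on every neighborhood. The interior $\theta$-nullcline $p=(1-\theta)/\bigl((1-\theta)+\Delta(\theta-\theta_d)\bigr)$ terminates exactly at $(1,0)$. For example, at $(\theta,p)=(1-\varepsilon^2,\varepsilon)$ one has $A\approx-\Delta(1-\theta_d)\varepsilon<0$, so skill decreases at points arbitrarily close to $(1,0)$. The sign of the drift on the edge $p=0$, together with a negative transverse eigenvalue, does not by itself decide stability. That off-edge coupling is precisely what the zero eigenvalue leaves open, so your plan of ``checking the sign of the edge drift directly'' is not enough as stated. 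The paper's one-line justification has the same weakness.

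\textbf{A short repair.} Set $x=1-\theta$ and work on the box $x\le(1-\theta_a)/2$, $p\le 1/2$.
\begin{itemize}
\item There $\dot p\le -cp$ with $c=\tfrac38\kappa(1-\theta_a)^2$.
\item Since $-A=-(1-p)x+\Delta p(\theta-\theta_d)\le\Delta p$, you get $\dot x=-2\theta xA\le 2\Delta xp$.
\item Hence $p(t)\le p_0e^{-ct}$ and $x(t)\le x_0e^{2\Delta p_0/c}$.
\end{itemize}
So for small $(x_0,p_0)$ the trajectory never leaves the box, which gives Lyapunov stability. Convergence to $(1,0)$ then follows, e.g.\ from Theorem~\ref{thm:existence_general}, because $(1,0)$ is the only equilibrium in the closed box.

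\textbf{Same issue at $(0,1)$ when $\theta_d=0$.} The nullcline terminates there too, and the fix is the same. Near $(0,1)$, $y=1-p$ decays exponentially, at a rate of order $\kappa\bigl(1-(1-\theta_a)^2\bigr)$. Also $A\le y$ gives $\dot\theta\le 2\theta y$, so $\theta$ grows by at most a bounded factor before $y$ dies out.
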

	
	\begin{proof}
		Equilibria occur when both coordinates have zero drift. Because of the
		multiplicative factors, all four corners
		\[
		(0,0),\quad (0,1),\quad (1,0),\quad (1,1)
		\]
		are equilibria. On the boundary $p=1$, the $\theta$-equation becomes
		\[
		\dot{\theta}
		=
		2\Delta\theta(1-\theta)(\theta_d-\theta),
		\]
		so $(\theta_d,1)$ is also an equilibrium when $\theta_d\in(0,1)$.
		
		For an interior equilibrium, we must have
		\[
		(1-\theta)^2=(1-\theta_a)^2.
		\]
		Since $\theta,\theta_a\in[0,1]$, this gives $\theta=\theta_a$. Substituting
		$\theta=\theta_a$ into the interior $\theta$-nullcline gives
		\[
		(1-p)(1-\theta_a)+\Delta p(\theta_d-\theta_a)=0.
		\]
		Since $\theta_a>\theta_d$, solving for $p$ yields
		\[
		p^\dagger
		=
		\frac{1-\theta_a}
		{1-\theta_a+\Delta(\theta_a-\theta_d)}.
		\]
		This proves the list of equilibria.
		
		The stability claims follow from the Jacobian. At the interior equilibrium,
		write
		\[
		A(\theta,p)=(1-p)(1-\theta)+\Delta p(\theta_d-\theta),
		\qquad
		B(\theta)=(1-\theta)^2-(1-\theta_a)^2.
		\]
		Since $A=B=0$ at $(\theta^\dagger,p^\dagger)$, the Jacobian has entries
		\[
		\partial_\theta\dot{\theta}
		=
		2\theta_a(1-\theta_a)\partial_\theta A<0,
		\qquad
		\partial_p\dot{\theta}
		=
		2\theta_a(1-\theta_a)\partial_p A<0,
		\]
		\[
		\partial_\theta\dot p
		=
		\kappa p^\dagger(1-p^\dagger)\partial_\theta B<0,
		\qquad
		\partial_p\dot p=0.
		\]
		Thus the determinant is
		\[
		-(\partial_p\dot{\theta})(\partial_\theta\dot p)<0,
		\]
		so the interior equilibrium is a saddle. The boundary stability statements
		follow by restricting the vector field to the corresponding boundary faces.
	\end{proof}
	
	\noindent
	In particular, after the normalization $\theta_d=0$ and the constant time
	rescaling used in the main text, the low-skill equilibrium becomes $(0,1)$ and
	the interior saddle becomes
	\[
	\left(
	\theta_a,\,
	\frac{1-\theta_a}
	{1-\theta_a+\Delta\theta_a}
	\right),
	\]
	which is the stationary structure of the simplified system
	\eqref{eq:ODE_simplified}.

	\subsection{Analysis of stochastic variant of ODE \eqref{eq:ODE_simplified}}
	\label{sec:SDE}
	
	\begin{figure*}[htp]
		\centering
		\begin{subfigure}[b]{0.31\textwidth}
			\centering
			\includegraphics[width=\linewidth]{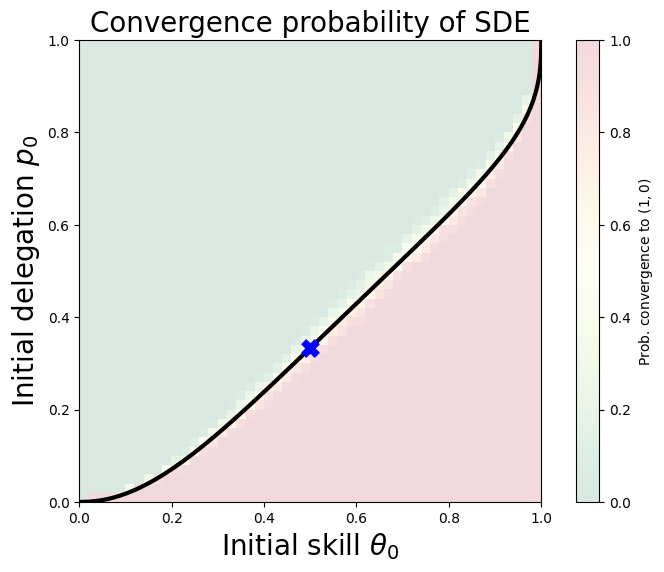}
			\caption{Noisy delegation update \eqref{eq:SDE}}
			\label{fig:SDE_1}
		\end{subfigure}
		\quad \quad
		\begin{subfigure}[b]{0.31\textwidth}
			\centering
			\includegraphics[width=\linewidth]{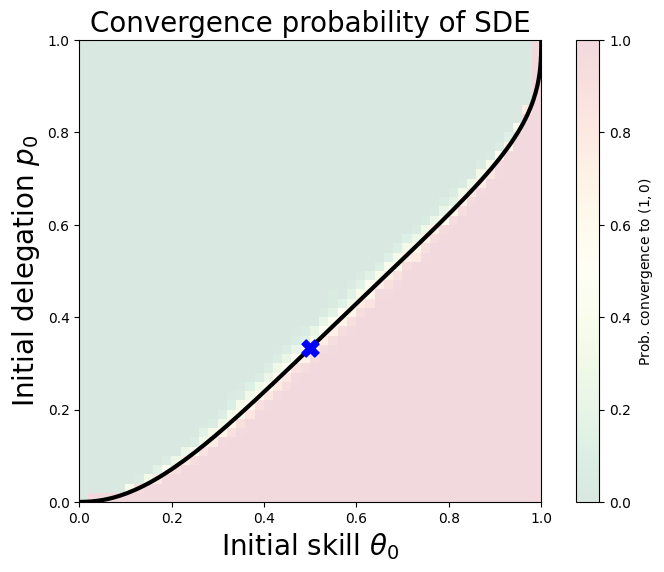}
			\caption{Noisy delegation update \eqref{eq:SDE_2}}
			\label{fig:SDE_2}
		\end{subfigure}
		\caption{Heatmap of the probability of converging to the high-skill equilibrium $(1,0)$ for SDEs \eqref{eq:SDE} and \eqref{eq:SDE_2} as a function of the initial state $(\theta_0,p_0)$, with default settings of $(\theta_a, \kappa, \Delta, \sigma) = (0.5, 3, 2, 0.1)$.
			Green indicates probability $0$, and red indicates probability $1$.
		}
		\label{fig:SDE}
	\end{figure*}
	
	The ODE analysis in Section~\ref{sec:result} describes the averaged dynamics
	of repeated-task adaptation. 
	In practice, however, the empirical feedback used
	to update delegation is noisy: task outcomes vary across instances, the learner
	observes only finitely many AI-assisted and independently completed tasks in
	each time window, and the comparison between the two modes may be aggregated
	imperfectly. 
	To test whether the qualitative conclusions of the deterministic
	ODE are stable under such perturbations, we consider the following Itô
	diffusion perturbation of the delegation update in
	ODE~\eqref{eq:ODE_simplified}:
	\begin{align}
		\label{eq:SDE}
		\begin{aligned}
			d\theta(t)
			&=
			\theta(t)(1-\theta(t))
			\Bigl((1-p(t))(1-\theta(t))-\Delta p(t)\theta(t)\Bigr)\,dt, \\
			dp(t)
			&=
			\kappa p(t)(1-p(t))
			\Bigl(\bigl((1-\theta(t))^2-(1-\theta_a)^2\bigr)\,dt
			+\sigma\,dW_t\Bigr),
		\end{aligned}
	\end{align}
	where $W_t$ is a standard Brownian motion and $\sigma\ge 0$ controls the
	magnitude of persistent noise in the delegation-update signal. When
	$\sigma=0$, the system reduces to ODE~\eqref{eq:ODE_simplified}.
	
	We use this SDE as a robustness model rather than as a full diffusion
	approximation of the finite-sample stochastic process. The noise term captures
	fluctuations in the perceived performance gap that drives the
	delegation-update potential $\Phi_d$, while the multiplicative factor
	$p(t)(1-p(t))$ preserves the state space $p(t)\in[0,1]$. We keep the skill
	equation deterministic to isolate the effect of noisy reliance adaptation; the
	noise is placed in the delegation update because the robustness question is
	whether the basin structure survives imperfect estimation of relative
	performance.
	
	Figure~\ref{fig:SDE_1} reports, for each initial condition
	$(\theta_0,p_0)\in(0,1)^2$, the empirical probability that a simulated
	trajectory reaches the high-skill regime. Specifically, we classify a
	trajectory as high-skill if, at a large terminal time $T$, it satisfies
	$\theta(T)\ge 1-\varepsilon$ and $p(T)\le \varepsilon$. The resulting heatmap
	closely follows the basin boundary of the deterministic ODE shown in
	Figure~\ref{fig:manifold}, with a sharp transition in the probability of
	reaching the high-skill regime. This provides numerical evidence that the ODE
	basin structure is stable under moderate delegation noise.
	
	We also consider a state-dependent noise variant:
	\begin{align}
		\label{eq:SDE_2}
		dp(t)
		=
		\kappa p(t)(1-p(t))
		\Bigl(
		((1-\theta(t))^2-(1-\theta_a)^2)\,dt
		+
		2\sigma(1-\theta(t))\,dW_t
		\Bigr).
	\end{align}
	This captures the possibility that lower-skill learners aggregate or interpret
	performance feedback less reliably. Even when both AI-assisted and independent
	tasks are observed within a time window, evaluating the quality of AI outputs,
	diagnosing mistakes, and comparing them with one's own work may be harder for
	less skilled users. The factor $1-\theta(t)$ therefore makes the delegation
	noise larger when learner skill is lower. Figure~\ref{fig:SDE_2} shows that
	the system still exhibits two dominant long-run regimes and a sharp transition
	between them. Thus, the bistable structure observed in the ODE is not merely an
	artifact of deterministic delegation updating, but is observed to persist under noisy and
	state-dependent reliance adjustment.

	\section{Proofs from Section \ref{sec:result}}
	\label{sec:result_details}
	
	In this section, we provide proofs for the results in Section \ref{sec:result} and show how to derive an approximation of the stable manifold $\psi$.
	
	\subsection{Proof of Theorem \ref{thm:convergence}: equilibrium of ODE \eqref{eq:ODE_simplified}}
	\label{sec:proof_convergence}
	
	\begin{theorem}[\bf{Restatement of Theorem \ref{thm:convergence}}]
		Let $\theta_a, \kappa, \Delta$ be the parameters of ODE~\eqref{eq:ODE_simplified}.
		When \(\theta_a\in(0,1)\), the system has two attracting corner equilibria,
		\((1,0)\) and \((0,1)\); two repelling corner equilibria, \((0,0)\) and
		\((1,1)\); and one interior saddle
		$
		(\theta^\dagger,p^\dagger)
		=
		\left(\theta_a,\frac{1-\theta_a}{1-(1-\Delta)\theta_a}\right)$.
	\end{theorem}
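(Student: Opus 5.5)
We first enumerate all equilibria and then classify each one using the Jacobian, adding the one-dimensional boundary dynamics wherever the Jacobian alone is inconclusive. Write the vector field as $f(\theta,p)=\theta(1-\theta)A(\theta,p)$ and $h(\theta,p)=\kappa p(1-p)B(\theta)$, with $A=(1-p)(1-\theta)-\Delta p\theta$ and $B=(1-\theta)^2-(1-\theta_a)^2$.

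\textbf{Enumeration.} Equilibria require $f=h=0$. We split into cases according to whether $\theta\in\{0,1\}$ and whether $p\in\{0,1\}$.
\begin{itemize}
\item If $\theta\in\{0,1\}$ and $p\in\{0,1\}$, we obtain the four corners.
\item If $\theta\in\{0,1\}$ and $p\in(0,1)$, we need $B(\theta)=0$. But $B(0)=1-(1-\theta_a)^2>0$ and $B(1)=-(1-\theta_a)^2<0$ because $\theta_a\in(0,1)$, so there are no equilibria on the vertical edges.
\item If $\theta\in(0,1)$ and $p\in\{0,1\}$, we need $A=0$. But $A(\theta,0)=1-\theta>0$ and $A(\theta,1)=-\Delta\theta<0$, so there are no equilibria on the horizontal edges.
\item In the interior, $B=0$ together with $\theta,\theta_a\in[0,1]$ forces $\theta=\theta_a$. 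Since $A$ is affine in $p$, solving $A(\theta_a,p)=0$ gives the unique value $p^\dagger=(1-\theta_a)/(1-\theta_a+\Delta\theta_a)\in(0,1)$.
\end{itemize}

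\textbf{Corners.} At each corner the off-diagonal Jacobian entries vanish, so the eigenvalues are the diagonal entries $\partial_\theta f=(1-2\theta)A$ and $\partial_p h=\kappa(1-2p)B$.
\begin{itemize}
\item At $(0,0)$ the entries are $1$ and $\kappa B(0)>0$, so $(0,0)$ is a hyperbolic source (repelling).
\item At $(1,1)$ we have $A(1,1)=-\Delta$ and $B(1)<0$, so the entries are $\Delta>0$ and $-\kappa B(1)=\kappa(1-\theta_a)^2>0$. This is also a source.
\item At $(1,0)$ we have $A(1,0)=0$, so $\partial_\theta f=0$, while $\partial_p h=\kappa B(1)<0$. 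This corner is non-hyperbolic.
\item At $(0,1)$, similarly $A(0,1)=0$ and $\partial_p h=-\kappa B(0)<0$, so it is non-hyperbolic as well.
\end{itemize}

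\textbf{Main obstacle: stability of $(1,0)$ and $(0,1)$.} Because of the zero eigenvalue, linearization does not settle stability at these two corners. We plan a direct local argument on a small neighborhood $U=[1-\epsilon,1]\times[0,\epsilon]$ of $(1,0)$ relative to $[0,1]^2$.
\begin{itemize}
\item For $\theta$ near $1$, $B(\theta)<0$ (its value tends to $-(1-\theta_a)^2$), so $\dot p\le 0$ on $U$, with strict decrease when $p\in(0,1)$.
\item For $p$ small, $\dot\theta=\theta(1-\theta)^2\bigl[(1-p)-\Delta p\theta/(1-\theta)\bigr]$. The bracket is not uniformly positive, because the term $\Delta p\theta/(1-\theta)$ blows up as $\theta\to1$. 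We therefore use the Lyapunov-type function $V=(1-\theta)+cp$.
\end{itemize}
Along trajectories, $\dot V=-\theta(1-\theta)^2(1-p)+\Delta\theta(1-\theta)p\theta+c\kappa p(1-p)B$. In $U$ the last term is at most $-c\kappa(1-\theta_a)^2p/2$, and the middle term is $O(\epsilon p)$. For $\epsilon$ small, the middle term is dominated by the last, so $\dot V<0$ on $U$ away from $(1,0)$. This gives local asymptotic stability relative to the square, and the fact that sublevel sets of $V$ intersected with the square are forward invariant gives Lyapunov stability. By the symmetric computation, $V=\theta+c(1-p)$ works at $(0,1)$: $B(0)>0$ makes $1-p$ decay, and $\dot\theta\approx-\Delta\theta^2$ on the edge $p=1$. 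Non-hyperbolicity follows from the zero eigenvalue.

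\textbf{Interior saddle.} Since $A(\theta^\dagger,p^\dagger)=B(\theta^\dagger)=0$, the Jacobian entries are
\begin{itemize}
\item $\partial_p h=0$;
\item $\partial_p f=\theta_a(1-\theta_a)\bigl(-(1-\theta_a)-\Delta\theta_a\bigr)<0$;
\item $\partial_\theta h=\kappa p^\dagger(1-p^\dagger)\bigl(-2(1-\theta_a)\bigr)<0$.
\end{itemize}
Hence $\det J=-\partial_p f\,\partial_\theta h<0$. The eigenvalues are therefore real with opposite signs, so $(\theta^\dagger,p^\dagger)$ is a hyperbolic saddle.
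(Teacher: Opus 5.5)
Your proof is correct. For the source corners and the interior saddle it is the same Jacobian computation the paper uses: the Jacobian is diagonal at the corners, and at $(\theta^\dagger,p^\dagger)$ one has $\partial_p h=0$ and $\det J=-\partial_p f\,\partial_\theta h<0$. It departs from the paper in two places, and both are improvements in rigor.

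\textbf{Edge equilibria.} You explicitly rule out equilibria on the open edges by checking the signs of $B(0)$, $B(1)$, $A(\theta,0)$ and $A(\theta,1)$. The paper goes directly from the corner factors to the interior nullclines and never examines the edges. Your enumeration is therefore what actually justifies that the five listed points are all the equilibria.

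\textbf{Stability of the non-hyperbolic corners.} At $(1,0)$ and $(0,1)$ the paper combines two facts: the negative transverse eigenvalue $\lambda_p$, and the one-dimensional drift on the invariant edge ($\dot\theta=\theta(1-\theta)^2$ on $p=0$, $\dot\theta=-\Delta\theta^2(1-\theta)$ on $p=1$). This is short and transparent. At an equilibrium with a zero eigenvalue, however, the combination is really an appeal to center-manifold reduction, with the invariant edge serving as the center manifold, and the paper leaves that step implicit.

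Your Lyapunov functions $V=(1-\theta)+cp$ and $V=\theta+c(1-p)$ make this step elementary and self-contained:
\begin{itemize}
\item $V$ is positive definite relative to the square.
\item The cross term $\Delta\theta^2(1-\theta)p=O(\epsilon p)$ is absorbed by $c\kappa p(1-p)B$ once $\epsilon$ is chosen small after $c$.
\item The term $-\theta(1-\theta)^2(1-p)$ gives strict decrease on the edge $p=0$; the symmetric bookkeeping works at $(0,1)$.
\end{itemize}
The small sublevel sets of $V$ also give explicit forward-invariant neighborhoods of the two sinks. These are exactly the ``local basins'' that the paper later invokes without construction, in the proof of Theorem~\ref{thm:saddle} and in the endpoint-limit lemma.

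The cost of your route is a little more estimation. The benefit is a complete argument that needs no center-manifold theory.
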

	
	\begin{proof}
		To establish the equilibria and their stability properties, we first identify the fixed points of the system and then analyze the Jacobian matrix of the linearized system around these points.
		
		\paragraph{Existence of equilibria.}
		The equilibria are the solutions to the system of algebraic equations $\dot{\theta} = 0$ and $\dot{p} = 0$:
		\begin{align}
			\theta(1-\theta)\left((1-p)(1-\theta)-\Delta p\,\theta\right) &= 0, \label{eq:null_theta} \\
			\kappa\,p(1-p)\left((1-\theta)^2-(1-\theta_a)^2\right) &= 0. \label{eq:null_p}
		\end{align}
		From the factors $\theta(1-\theta)$ in \eqref{eq:null_theta} and $p(1-p)$ in \eqref{eq:null_p}, we immediately identify the four corner equilibria in the domain $\{0,1\}^2$:
		\[
		(\theta, p) \in \{(0,0), (0,1), (1,0), (1,1)\}.
		\]
		To find the interior equilibrium $(\theta^\dagger, p^\dagger) \in (0,1)^2$, we require the non-trivial factors to vanish:
		\begin{align}
			(1-\theta)^2 - (1-\theta_a)^2 &= 0, \label{eq:interior_p} \\
			(1-p)(1-\theta) - \Delta p\,\theta &= 0. \label{eq:interior_theta}
		\end{align}
		Since $\theta, \theta_a \in [0,1]$, Equation \eqref{eq:interior_p} implies $1-\theta = 1-\theta_a$, yielding the unique skill solution $\theta^\dagger = \theta_a$. Substituting $\theta^\dagger = \theta_a$ into \eqref{eq:interior_theta} and solving for $p$ yields:
		\[
		(1-p)(1-\theta_a) = \Delta p \theta_a \implies 1-\theta_a = p\left(1-\theta_a + \Delta \theta_a\right).
		\]
		Thus, the unique interior equilibrium is
		\[
		(\theta^\dagger, p^\dagger) = \left(\theta_a, \frac{1-\theta_a}{1 - (1-\Delta)\theta_a}\right).
		\]
		
		\paragraph{Stability analysis.}
		We analyze the linear stability via the Jacobian matrix $J$ of the system. Let $f(\theta, p) = \dot{\theta}$ and $g(\theta, p) = \dot{p}$. The Jacobian is given by
		\[
		J(\theta, p) = 
		\begin{bmatrix}
			\partial_\theta f & \partial_p f \\
			\partial_\theta g & \partial_p g
		\end{bmatrix}.
		\]
		
		\noindent \emph{Corner equilibria.}
		At the corner points where $(\theta, p) \in \{0,1\}^2$, the off-diagonal terms vanish (i.e., $\partial_p f = 0$ and $\partial_\theta g = 0$). The Jacobian becomes diagonal:
		\[
		J = \mathrm{diag}(\lambda_\theta, \lambda_p),
		\]
		where the eigenvalues are given by the partial derivatives evaluated at the equilibrium $(\theta^\infty, p^\infty)$:
		\begin{align*}
			\lambda_\theta &= (1 - 2\theta^\infty)\left((1 - p^\infty)(1 - \theta^\infty) - \Delta p^\infty \theta^\infty\right), \\
			\lambda_p &= \kappa (1 - 2p^\infty)\left((1 - \theta^\infty)^2 - (1 - \theta_a)^2\right).
		\end{align*}
		We examine two cases:
		\begin{enumerate}
			\item \textbf{Sources ($\theta^\infty = p^\infty$):}
			\begin{itemize}
				\item At $(0,0)$: $\lambda_\theta = 1 > 0$ and $\lambda_p = \kappa(1 - (1-\theta_a)^2) > 0$. Both eigenvalues are positive.
				\item At $(1,1)$: $\lambda_\theta = \Delta > 0$ and $\lambda_p = \kappa(1-\theta_a)^2 > 0$. Both eigenvalues are positive.
			\end{itemize}
			Thus, $(0,0)$ and $(1,1)$ are unstable nodes.
			
			\item \textbf{Sinks ($\theta^\infty \neq p^\infty$):}
			\begin{itemize}
				\item At $(1,0)$: $\lambda_\theta = 0$ and $\lambda_p = -\kappa(1 - (1-\theta_a)^2) < 0$.
				The negative $\lambda_p$ implies exponential decay in the delegation direction (transverse stability). Along the invariant boundary $p=0$, the skill dynamics reduce to $\dot{\theta} = \theta(1-\theta)^2$, which is positive for $\theta \in (0,1)$. This implies that $\theta$ monotonically increases toward 1. Thus, $(1,0)$ is asymptotically stable.
				
				\item At $(0,1)$: $\lambda_\theta = 0$ and $\lambda_p = -\kappa(1-\theta_a)^2 < 0$.
				The negative $\lambda_p$ implies exponential decay toward full delegation. Along the invariant boundary $p=1$, the skill dynamics reduce to $\dot{\theta} = -\Delta \theta^2(1-\theta)$, which is negative for $\theta \in (0,1)$. This implies that $\theta$ monotonically decays toward 0. Thus, $(0,1)$ is asymptotically stable.
			\end{itemize}
			Thus, the high-skill equilibrium $(1,0)$ and low-skill equilibrium $(0,1)$ are stable nodes.
		\end{enumerate}
		
		\noindent \emph{Interior saddle point.}
		At the interior equilibrium, the nullcline conditions imply
		\[
		\partial_p \dot p(\theta^\dagger,p^\dagger)=0.
		\]
		Moreover,
		\[
		\partial_p \dot{\theta}(\theta^\dagger,p^\dagger)
		=
		-\theta_a(1-\theta_a)\bigl((1-\theta_a)+\Delta\theta_a\bigr)
		=
		-\frac{\theta_a(1-\theta_a)^2}{p^\dagger}<0,
		\]
		and
		\[
		\partial_\theta \dot p(\theta^\dagger,p^\dagger)
		=
		-2\kappa p^\dagger(1-p^\dagger)(1-\theta_a)<0.
		\]
		Also,
		\[
		\partial_\theta \dot{\theta}(\theta^\dagger,p^\dagger)<0.
		\]
		Since \(\partial_p\dot p(\theta^\dagger,p^\dagger)=0\), the determinant is
		\[
		\det J(\theta^\dagger,p^\dagger)
		=
		-\partial_p\dot\theta(\theta^\dagger,p^\dagger)
		\partial_\theta\dot p(\theta^\dagger,p^\dagger)
		<0.
		\]
		Thus the interior equilibrium is a hyperbolic saddle.
		
	\end{proof}
	
	\subsection{Proof of Theorem \ref{thm:saddle}: two basins divided by saddle point}
	\label{sec:proof_saddle}
	
	\begin{theorem}[\bf{Restatement of Theorem \ref{thm:saddle}}]
		Let $\theta_a, \kappa, \Delta$ be the parameters of ODE \eqref{eq:ODE_simplified}.
		Let $(\theta^\dagger,p^\dagger)$ be the interior saddle point guaranteed by Theorem \ref{thm:convergence}.
		There exists a nondecreasing differentiable function $\psi:(0,1)\to(0,1)$
		that extends continuously to $[0,1]$ with $\psi(0)=0$ and $\psi(1)=1$,
		whose graph coincides with the one-dimensional stable manifold of $(\theta^\dagger,p^\dagger)$
		within $(0,1)^2$.
		The curve $p = \psi(\theta)$ partitions the state space $[0,1]^2$ into two distinct basins of attraction:
		\begin{itemize}[leftmargin=*, itemsep=1pt, topsep=2pt, parsep=0pt]
			\item For any initial state with $p_0 > \psi(\theta_0)$ (higher delegation), the learner converges to the low-skill equilibrium $(0,1)$.
			\item For any initial state with $p_0 < \psi(\theta_0)$ (lower delegation), the learner converges to the high-skill equilibrium $(1,0)$.
		\end{itemize}
	\end{theorem}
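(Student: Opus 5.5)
The plan is to build $\psi$ directly from the order structure of the system and then identify its graph with $W^s(\theta^\dagger,p^\dagger)$, instead of first continuing the stable manifold globally.

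\textbf{Step 1: convergence and attractors.} By Theorem~\ref{thm:existence_general}, every interior trajectory converges to one of the five equilibria of Theorem~\ref{thm:convergence}. The corners $(0,0)$ and $(1,1)$ are repelling relative to the square, so no interior orbit converges to them. Hence each interior point lies in exactly one of three sets:
\begin{itemize}
\item $B_{\rm high}$, the basin of $(1,0)$;
\item $B_{\rm low}$, the basin of $(0,1)$;
\item $W^s$, the stable set of the saddle.
\end{itemize}
Local asymptotic stability makes $B_{\rm high}$ and $B_{\rm low}$ open. The local stable manifold theorem makes $W^s$ near the saddle a smooth arc.

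\textbf{Step 2: order structure.} Pass to $x=1-\theta$. As computed in the proof overview, the system is cooperative there, and Kamke's theorem gives the comparison principle (Lemma~\ref{lm:coupling}). Suppose $(\theta_0',p_0')$ has $\theta_0'\le\theta_0$ and $p_0'\ge p_0$, and $(\theta_0,p_0)\in B_{\rm low}$. Then the primed trajectory is ordered below in $\theta$ and above in $p$, so its limit must also be $(0,1)$. Thus $B_{\rm low}$ is closed upward in $p$ and leftward in $\theta$, and symmetrically $B_{\rm high}$ is closed downward and rightward.

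Define $\psi(\theta)=\inf\{p:(\theta,p)\in B_{\rm low}\}$. I would check the vertical slices are nonempty:
\begin{itemize}
\item Near $p=1$, the $p=1$ face flows to $(0,1)$ and $\dot p>0$ for $\theta<\theta_a$. A continuity-of-flow argument near that face should put high-$p$ points in $B_{\rm low}$.
\item Similarly, points near $p=0$ lie in $B_{\rm high}$.
\end{itemize}
Consequently $\psi(\theta)\in(0,1)$, and $\psi$ is nondecreasing by the order property.

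\textbf{Step 3 (main obstacle): the threshold set equals $W^s$ and is a graph.} By openness, $(\theta,\psi(\theta))$ lies in neither basin, so by Step 1 it lies in $W^s$. Conversely, I need each slice to meet $W^s$ in a single point, so that $W^s$ is not a thick band. First I would try strict monotonicity. If $(\theta,p_1)$ and $(\theta,p_2)$ with $p_1<p_2$ both converge to the saddle, the difference $\delta(t)$ of the two trajectories in $(x,p)$ coordinates satisfies a linear cooperative system. The off-diagonal term $\partial_x\dot p=2\kappa p(1-p)x$ is strictly positive in the interior, so the system is irreducible and $\delta(t)$ becomes strictly positive in both coordinates. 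This forces both trajectories onto $W^s$ at strictly ordered points. But the stable eigenvector at the saddle is strictly decreasing in the $(x,p)$ order: the stable direction has $\theta$ and $p$ moving oppositely to the unstable direction. So two strictly ordered points cannot both approach the saddle tangentially along $W^s$; this is the standard fact that stable manifolds of hyperbolic points in strongly monotone planar systems are unordered. This gives uniqueness, hence graph$(\psi)=W^s\cap(0,1)^2$, and the partition claims follow from the definition of $\psi$.

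\textbf{Step 4: regularity and endpoints.} $W^s$ is a smooth immersed curve, and I would show it has no vertical tangents.
\begin{itemize}
\item Away from the saddle, a vertical tangent needs $f=0$. That would put the point on the strictly decreasing $\theta$-nullcline $p=(1-\theta)/(1-\theta+\Delta\theta)$, which is incompatible with a nondecreasing graph being tangent there transversally. More precisely, crossing the nullcline flips the sign of $\dot\theta$, so the orbit would turn back in $\theta$ and violate the graph property.
\item At the saddle, the stable eigenvector has nonzero $\theta$-component because $\partial_p\dot p=0$ and $\partial_\theta\dot p\neq0$.
\end{itemize}
The implicit function theorem then makes $\psi$ differentiable.

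For the endpoints, a monotone $\psi$ has limits at $0$ and $1$. The two backward branches of $W^s$, by the Dulac argument, must accumulate on $\partial[0,1]^2$ at equilibria. The only sources are $(0,0)$ and $(1,1)$, which yields $\psi(0)=0$ and $\psi(1)=1$.
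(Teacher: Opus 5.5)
Your proof is correct, and its key step takes a different route from the paper's. The paper first globalizes the stable manifold. Bendixson--Dulac plus Poincar\'e--Bendixson force both backward branches to the boundary, and excluding the attracting corners as $\alpha$-limits sends them to $(0,0)$ and $(1,1)$. The paper then asserts that this arc cuts the interior into two components, and uses a ``two-sided separation propagates'' argument to identify $W^s\cap(0,1)^2$ with the common interior boundary of $B_{\rm low}$ and $B_{\rm high}$. Only after that does it use Lemma~\ref{lm:coupling} to define $\psi=\inf I_\theta$.

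You bypass all of this. The trichotomy from Theorem~\ref{thm:existence_general}, together with openness of the basins, puts $(\theta,\psi(\theta))$ on $W^s$ directly. Strong monotonicity then does the remaining work. The difference of two solutions solves a linear system whose off-diagonal coefficients are strictly positive in the interior. (For $\delta(0)=(0,p_2-p_1)$, the entry that makes $\delta_x$ positive is $\partial_p\dot x=x(1-x)[\Delta(1-x)+x]>0$, not $\partial_x\dot p$; both are positive, so your irreducibility claim stands.) Hence distinct ordered initial points become strictly ordered. That is incompatible with both points eventually lying on the local stable manifold, which near the saddle is a strictly decreasing curve in $(x,p)$ because the stable eigenvector has components of opposite sign there. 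So $W^s$ contains no two ordered points and meets each vertical line exactly once.

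This single-crossing property is exactly what justifies $p<\psi(\theta)\Rightarrow(\theta,p)\in B_{\rm high}$. The paper uses that step implicitly: its no-vertical-tangency argument comes afterwards and already presupposes the graph. Your route also gives strict monotonicity of $\psi$ for free.

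It also makes the endpoint step cheaper than you wrote. Once the graph covers all of $(0,1)$, no Dulac argument is needed. Backward along the left branch, $\theta\downarrow 0$ while $p=\psi(\theta)\to\lim_{0^+}\psi<1$. So the $\alpha$-limit is a single equilibrium on $\{\theta=0\}$, necessarily $(0,0)$, and symmetrically $(1,1)$. This is also a sounder justification than ``the only sources are $(0,0)$ and $(1,1)$,'' since $\alpha$-limits need not be sources.

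What the paper's route buys in exchange is an explicit global description of $W^s$ as an arc between the two source corners, obtained with standard planar tools and without invoking irreducibility.
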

	
	\noindent
	For preparation, we give the following lemma that characterizes a monotone coupling between skill gap and delegation.
	
	\begin{lemma}[\bf{Monotone coupling between skill gap and delegation}]
		\label{lm:coupling}
		Let $x=1-\theta$ denote the skill gap. For fixed $\kappa,\Delta>0$, the
		system in variables $(x,p)$ is cooperative:
		\[
		\dot x
		= x(1-x)\bigl[\Delta p(1-x)-(1-p)x\bigr],
		\qquad
		\dot p
		= \kappa p(1-p)\bigl[x^2-(1-\theta_a)^2\bigr].
		\]
		Consequently, if $x_0\leq \widetilde x_0$ and
		$p_0\leq \widetilde p_0$, then for all $t\geq 0$,
		\[
		x(t;x_0,p_0)\leq x(t;\widetilde x_0,\widetilde p_0),
		\qquad
		p(t;x_0,p_0)\leq p(t;\widetilde x_0,\widetilde p_0).
		\]
		Moreover, for fixed initial condition, increasing $\theta_a$ weakly
		increases $p(t)$ and $x(t)$ for all $t\geq 0$.
	\end{lemma}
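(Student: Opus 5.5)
Our plan is to verify the change of variables, check the Kamke (cooperativity) condition on the invariant square, and then apply the comparison theorem for cooperative systems from \cite{smith1995monotone}. For the parameter monotonicity we will use an augmented-system trick.

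\emph{Change of variables.} With $x=1-\theta$ we have $\dot x=-\dot\theta$. Substituting $\theta=1-x$ into \eqref{eq:ODE_simplified} gives
\[
\dot x=-(1-x)x\bigl((1-p)x-\Delta p(1-x)\bigr)=x(1-x)\bigl[\Delta p(1-x)-(1-p)x\bigr],
\]
and $\dot p=\kappa p(1-p)\bigl[x^2-(1-\theta_a)^2\bigr]$. This is the stated system.

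\emph{Cooperativity.} The off-diagonal Jacobian entries are
\[
\partial_p\dot x=x(1-x)\bigl[\Delta(1-x)+x\bigr]
\qquad\text{and}\qquad
\partial_x\dot p=2\kappa p(1-p)x.
\]
Both are nonnegative on $[0,1]^2$. By Theorem~\ref{thm:existence_general}, the square is positively invariant and convex, so the Kamke condition holds on a convex invariant domain. The comparison theorem for cooperative systems then says that the flow preserves the componentwise order. This gives the first conclusion.

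If one wants a self-contained argument, we would proceed in two steps.
\begin{itemize}
\item First prove strict preservation for the perturbed system $\dot y=F(y)+\varepsilon(1,1)$ via a first-touching-time argument. At the first time some coordinate of the larger trajectory meets the smaller one, the other coordinate is still weakly ordered. Since $\partial_p\dot x\ge0$, the drift difference at the touching coordinate is at least $\varepsilon>0$, which is a contradiction.
\item Then let $\varepsilon\to0$, using continuous dependence on the vector field over finite time intervals.
\end{itemize}
The one delicate point is that the perturbed trajectories may leave $[0,1]^2$. We would handle this either by working with the polynomial vector field on all of $\mathbb{R}^2$ for small $\varepsilon$, on compact time intervals where the solutions exist, or by invoking the theorem in \cite{smith1995monotone} directly.

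\emph{Monotonicity in $\theta_a$.} Adjoin the parameter as a state variable with $\dot c=0$, where $c:=(1-\theta_a)^2$, and order it reversely, i.e.\ use $b:=-c$. Then $\partial_b\dot p=\kappa p(1-p)\ge0$ and $\partial_b\dot x=0$. Also $\partial_x\dot b=\partial_p\dot b=0$. So the augmented system in $(x,p,b)$ is still cooperative. Increasing $\theta_a$ decreases $c$, hence increases $b$. Applying the comparison theorem to initial data $(x_0,p_0,b_1)\le(x_0,p_0,b_2)$ gives $x(t),p(t)$ weakly larger for the larger $\theta_a$. The required domain of the augmented system is $[0,1]^2\times[-1,0]$, which is convex and invariant.

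The main obstacle is the rigorous justification of the comparison principle near the boundary, where the off-diagonal entries can vanish. Since only weak inequalities are claimed, weak cooperativity suffices, and this is exactly the setting of Kamke's theorem. We therefore expect to cite it rather than redo the $\varepsilon$-perturbation argument in full.
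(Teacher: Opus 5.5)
Your proposal is correct and matches the paper's proof: rewrite the system in $(x,p)$, check that $\partial_p\dot x=x(1-x)[\Delta(1-x)+x]$ and $\partial_x\dot p=2\kappa p(1-p)x$ are nonnegative on the invariant square, invoke Kamke's comparison theorem, and handle $\theta_a$ by comparing the ordered parameterized vector fields (the paper uses $\partial_{\theta_a}\dot x=0$ and $\partial_{\theta_a}\dot p\ge0$ directly, which your frozen variable $b=-(1-\theta_a)^2$ encodes equivalently). The boundary issue you flag in the self-contained variant is harmless: if you perturb the larger trajectory upward, the smaller one stays in $[0,1]^2$, and at a touching time $\dot x$ is affine in $p$ with slope $\bar x(1-\bar x)[\Delta(1-\bar x)+\bar x]\ge0$ at the common $\bar x\in[0,1]$, while $\dot p$ is increasing in $x\ge0$ at the common $\bar p\in[0,1]$, so no cooperativity outside the square is needed.
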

	
	\noindent
	The proof can be found in Section \ref{sec:proof_coupling}.
	Now we are ready to prove Theorem \ref{thm:saddle}.
	
	\begin{proof}
		The proof proceeds in four steps: first, obtain the local stable manifold
		$W^s$ from the stable manifold theorem; second, continue its two
		branches globally and show they reach the boundary; third, identify the global stable manifold $W^s$ as the separatrix; fourth, use the monotone comparison lemma to represent $W^s$ as a nondecreasing graph $p=\psi(\theta)$.
		
		\paragraph{Local stable manifold exists and is differentiable.}
		From Theorem~\ref{thm:convergence}, the interior equilibrium $(\theta^\dagger, p^\dagger)$ is a hyperbolic saddle point. 
		By the Stable Manifold Theorem for planar system \cite{Perko2013}, there exists a unique one-dimensional stable manifold $W^s(\theta^\dagger, p^\dagger)$ passing through $(\theta^\dagger, p^\dagger)$ tangent to the stable eigenvector.
		Since $W^s$ is a one-dimensional stable manifold of a smooth planar vector field, it is a differentiable embedded curve.

		\paragraph{Each backward branch reaches the boundary.}
		Let $\Omega=(0,1)^2$, and write the vector field of~\eqref{eq:ODE_simplified} as
		\[
		F(\theta,p)=(f(\theta,p),h(\theta,p)).
		\]
		On $\Omega$, take the Dulac function
		\[
		D(\theta,p)=\frac{1}{\theta(1-\theta)p(1-p)}.
		\]
		A direct computation gives
		\[
		\partial_\theta(Df)+\partial_p(Dh)
		=
		-\frac{(1-p)+\Delta p}{p(1-p)}<0
		\qquad\text{on }\Omega .
		\]
		Since $\Omega$ is simply connected, the Bendixson-Dulac criterion rules out
		closed invariant curves in $\Omega$; in particular, there are no periodic
		or homoclinic orbits contained in the interior.
		
		We now show that each branch of the stable manifold reaches the boundary.
		Let $\Gamma$ be one branch of the local stable manifold
		$W^s(\theta^\dagger,p^\dagger)\setminus\{(\theta^\dagger,p^\dagger)\}$,
		and define its backward continuation by
		\[
		\Gamma^-:=\{\phi_{-t}(z):z\in\Gamma,\ t\geq 0\},
		\]
		where $\phi_t$ denotes the flow. Suppose, for contradiction, that
		\[
		\overline{\Gamma^-}\subset \Omega .
		\]
		Then the alpha-limit set
		\[
		\alpha(\Gamma)
		:=
		\bigcap_{T\geq 0}
		\overline{\{\phi_{-t}(z):z\in\Gamma,\ t\geq T\}}
		\]
		is a nonempty compact invariant subset of $\Omega$. 
		By the Poincare-Bendixson theorem \cite{Perko2013} and the absence of periodic or homoclinic orbits in $\Omega$, the only possible interior limit set is an equilibrium.
		By Theorem~\ref{thm:convergence}, the only interior equilibrium is
		$(\theta^\dagger,p^\dagger)$. 
		Hence,
		\[
		\alpha(\Gamma)=\{(\theta^\dagger,p^\dagger)\}.
		\]
		But every point on $\Gamma$ also converges to
		$(\theta^\dagger,p^\dagger)$ in forward time, because
		$\Gamma\subset W^s(\theta^\dagger,p^\dagger)$. 
		Therefore, $\Gamma^-$ is a
		homoclinic branch to the saddle, contradicting the Bendixson-Dulac
		exclusion above. 
		Thus,
		\[
		\overline{\Gamma^-}\cap\partial[0,1]^2\neq\emptyset .
		\]
		Applying the same argument to the other local branch shows that the global stable manifold has two branches accumulating on the boundary of the square.
		
		\paragraph{The two backward branches limit to $(0,0)$ and $(1,1)$.}
		The two branches of
		$W^s(\theta^\dagger,p^\dagger)$ have alpha-limit sets as time goes to
		$-\infty$. 
		By compactness and invariance, each alpha-limit set is a compact
		invariant subset of the boundary or an equilibrium in the interior. 
		The Dulac argument rules out interior periodic or homoclinic limit sets, and the only interior equilibrium is the saddle itself. 
		Therefore, each backward branch must limit to a boundary equilibrium.
		
		The attracting corner equilibria $(1,0)$ and $(0,1)$ cannot be alpha-limits of nontrivial forward trajectories on the stable manifold, because they have no forward unstable direction into the interior. 
		Hence, the two alpha-limits must be the repelling corners $(0,0)$ and $(1,1)$. 
		The local stable eigenvector at the saddle has positive slope, so one branch extends toward the lower-left corner $(0,0)$ and the other toward the upper-right corner $(1,1)$.
		Therefore, the graph connects $(0,0)$ to $(1,1)$, which gives
		\[
		\lim_{\theta\downarrow 0}\psi(\theta)=0,
		\qquad
		\lim_{\theta\uparrow 1}\psi(\theta)=1.
		\]

		\paragraph{Partitioning the basins of attraction.}
		Let $W^s$ denote the global stable manifold of the interior saddle
		$(\theta^\dagger,p^\dagger)$. 
		Since $W^s$ is invariant and connects the
		boundary of $[0,1]^2$, it partitions the interior of the state space into two
		connected components, denoted by $\Omega_{\mathrm{high}}$ and
		$\Omega_{\mathrm{low}}$.
		
		For any initial condition on $W^s$, the trajectory converges to the saddle.
		For an initial condition off $W^s$, uniqueness of ODE solutions implies that
		the trajectory cannot cross $W^s$ (Theorem \ref{thm:existence_general}). 
		Since every trajectory converges to an
		equilibrium and the only attracting equilibria in the interior dynamics are
		$(1,0)$ and $(0,1)$, each connected component must belong to one of these two
		basins. 
		The component adjacent to $(1,0)$ is the high-skill basin, and the
		component adjacent to $(0,1)$ is the low-skill basin. 

		Let $B_{\rm low}$ and $B_{\rm high}$ be the basins of $(0,1)$ and $(1,0)$,
		respectively. We first identify the basin boundary in the interior. We claim that
		\[
		\partial B_{\rm low}\cap(0,1)^2
		=
		\partial B_{\rm high}\cap(0,1)^2
		=
		W^s(\theta^\dagger,p^\dagger)\cap(0,1)^2 .
		\]
		
		Indeed, basin boundaries are invariant under the flow. Let
		$z\in \partial B_{\rm low}\cap(0,1)^2$. By Theorem~\ref{thm:existence_general},
		the trajectory from $z$ converges to an equilibrium. It cannot converge to $(0,1)$ or $(1,0)$: if it did, then since these equilibria are attracting,
		$z$ would lie in the interior of the corresponding basin, contradicting
		$z\in\partial B_{\rm low}$. 
		It also cannot converge to the repelling corners
		$(0,0)$ or $(1,1)$ from an interior initial condition. 
		Hence, the trajectory must converge to the unique interior equilibrium
		$(\theta^\dagger,p^\dagger)$, and therefore
		\[
		z\in W^s(\theta^\dagger,p^\dagger).
		\]
		This proves
		\[
		\partial B_{\rm low}\cap(0,1)^2
		\subseteq
		W^s(\theta^\dagger,p^\dagger)\cap(0,1)^2 .
		\]
		The same argument gives the corresponding inclusion for
		$\partial B_{\rm high}$.
		
		Conversely, let $z\in W^s(\theta^\dagger,p^\dagger)\cap(0,1)^2$.
		Then the trajectory from $z$ converges to the saddle, so $z$ is not in
		either attracting basin. 
		It remains to show that $z$ lies on their common
		boundary. 
		Near the saddle, the local stable manifold separates the local
		phase portrait into two sides: points on one side leave a neighborhood of the saddle toward the high-skill basin, while points on the other side leave toward the low-skill basin. 
		Since the flow is a diffeomorphism on every
		finite time interval and the basins are invariant, this two-sided separation property propagates along the entire global stable manifold.
		Hence, every neighborhood of $z$ contains points from both $B_{\rm high}$ and $B_{\rm low}$, so
		\[
		z\in \partial B_{\rm high}\cap \partial B_{\rm low}.
		\]
		Therefore,
		\[
		\partial B_{\rm low}\cap(0,1)^2
		=
		\partial B_{\rm high}\cap(0,1)^2
		=
		W^s(\theta^\dagger,p^\dagger)\cap(0,1)^2 .
		\]
		Hence, $W^s$ is the separatrix between the two basins.
		
		\paragraph{Graph representation and monotonicity of the stable manifold.}
		Work in the order $(\theta,p)\preceq(\theta',p')$ if
		$\theta\geq \theta'$ and $p\leq p'$, equivalently if the first state has
		weakly smaller skill gap and weakly smaller delegation. 
		By Lemma \ref{lm:coupling},
		$B_{\rm low}$ is upward closed in skill gap and delegation: if
		$(\theta,p)\in B_{\rm low}$, $\theta'\leq \theta$, and $p'\geq p$, then
		$(\theta',p')\in B_{\rm low}$.
		
		For each fixed $\theta\in(0,1)$, define
		\[
		I_\theta=\{p\in(0,1):(\theta,p)\in B_{\rm low}\}.
		\]
		The upward-closed property implies that $I_\theta$ is an upper interval.
		Moreover, $I_\theta$ is nonempty and not all of $(0,1)$, because the
		vertical line $\{\theta\}\times(0,1)$ must cross the basin boundary
		separating $B_{\rm low}$ from $B_{\rm high}$. Define
		\[
		\psi(\theta):=\inf I_\theta .
		\]
		For every fixed $\theta\in(0,1)$, the slice is nonempty because the
		boundary point $(\theta,1)$ evolves along $p=1$ toward $(0,1)$, and by
		stability of $(0,1)$, all sufficiently nearby interior points also lie in $B_{\rm low}$. 
		Similarly, the slice is not all of $(0,1)$ because
		$(\theta,0)$ evolves along $p=0$ toward $(1,0)$, and sufficiently nearby
		interior points lie in $B_{\rm high}$.
		
		Then
		\[
		(\theta,p)\in B_{\rm low}\quad\text{if }p>\psi(\theta),
		\qquad
		(\theta,p)\in B_{\rm high}\quad\text{if }p<\psi(\theta),
		\]
		and
		\[
		(\theta,\psi(\theta))\in
		\partial B_{\rm low}\cap (0,1)^2
		=
		W^s(\theta^\dagger,p^\dagger)\cap (0,1)^2 .
		\]
		Thus, the stable manifold is the graph of $\psi$.
		
		If $\theta_1<\theta_2$, then a learner at skill $\theta_1$ has weakly
		larger skill gap than a learner at skill $\theta_2$. 
		Since $B_{\rm low}$ is
		upward closed in skill gap and delegation, the threshold needed to enter
		$B_{\rm low}$ cannot be lower at the smaller skill gap. Hence
		\[
		\psi(\theta_1)\leq \psi(\theta_2),
		\]
		so $\psi$ is nondecreasing.
		
		\paragraph{Smoothness of the graph.}
		The stable manifold theorem gives that
		$W^s(\theta^\dagger,p^\dagger)$ is a differentiable embedded curve in the interior.
		Since we have shown that this curve is the graph of a monotone function
		$\psi$, it remains only to rule out vertical tangencies.
		
		The $\theta$-nullcline in the interior is
		\[
		p=h(\theta):=
		\frac{1-\theta}{1-(1-\Delta)\theta}.
		\]
		It is strictly decreasing because
		\[
		h'(\theta)=
		-\frac{\Delta}{\bigl(1-(1-\Delta)\theta\bigr)^2}<0.
		\]
		Since $\psi$ is nondecreasing, the graph $p=\psi(\theta)$ can intersect the strictly decreasing nullcline $p=h(\theta)$ at most once. 
		It intersects it
		at the saddle $(\theta^\dagger,p^\dagger)$. 
		Therefore, $\dot\theta\neq 0$ on the stable manifold away from the saddle, and the embedded curve has no vertical tangent away from the saddle.
		
		At the saddle, the stable eigenvector also has nonzero $\theta$ component.
		Indeed, for the Jacobian
		\[
		J=
		\begin{pmatrix}
			a & b\\
			c & 0
		\end{pmatrix},
		\qquad b<0,\quad c<0,
		\]
		the stable eigenvalue $\lambda_s<0$ satisfies
		$c v_\theta=\lambda_s v_p$.
		Hence, if $v_\theta=0$, then
		$v_p=0$, impossible for an eigenvector. 
		Thus, the local stable manifold is
		also a differentiable graph near the saddle. 
		Consequently, $\psi$ is differentiable on
		$(0,1)$.

		Overall, we complete the proof.
	\end{proof}
	
	\subsection{Proof of Theorem \ref{thm:monotonicity}: effects of $\theta_a$ on $\psi$}
	\label{sec:proof_AI_skill}
	
	\begin{theorem}[\textbf{Restatement of Theorem \ref{thm:monotonicity}}]
		For every $\theta\in(0,1)$, the boundary value
		$\psi_{\theta_a}(\theta)$ is nonincreasing and continuously differentiable in $\theta_a$. 
		Moreover, the family $\{\psi_{\theta_a}\}_{\theta_a\in(0,1)}$ sweeps the state space in the following sense: for every $(\theta_0,p_0)\in(0,1)^2$, there exists
		$\theta_a\in(0,1)$ such that
		$
		p_0=\psi_{\theta_a}(\theta_0).
		$
	\end{theorem}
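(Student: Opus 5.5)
The proof has three parts: monotonicity via the comparison principle, regularity via smooth dependence of the stable manifold on the parameter, and sweeping via endpoint limits plus the intermediate value theorem.

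\emph{Monotonicity.} Fix $\theta\in(0,1)$ and $\theta_{a,1}<\theta_{a,2}$, and write $\psi_i=\psi_{\theta_{a,i}}$. Suppose $\psi_2(\theta)>\psi_1(\theta)$, and pick $p_0\in(\psi_1(\theta),\psi_2(\theta))$. By Theorem~\ref{thm:saddle}, the point $(\theta,p_0)$ lies in $B_{\rm low}$ for $\theta_{a,1}$, so $\theta(t)\to 0$. For $\theta_{a,2}$ it lies in $B_{\rm high}$, so $\theta(t)\to 1$. The last clause of Lemma~\ref{lm:coupling} says that raising $\theta_a$ weakly increases the skill gap $x(t)$ for every $t$. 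Hence $\theta(t;\theta_{a,2})\le\theta(t;\theta_{a,1})$, and letting $t\to\infty$ gives $1\le 0$, a contradiction. This step is essentially immediate.

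\emph{Continuous differentiability in $\theta_a$.} The vector field is polynomial in $(\theta,p,\theta_a)$. By Theorem~\ref{thm:convergence}, the saddle $(\theta_a,p^\dagger(\theta_a))$ is hyperbolic for every $\theta_a\in(0,1)$ and depends smoothly on $\theta_a$. The parameter-dependent stable manifold theorem then says the local stable manifold is a $C^1$ family of embedded curves. Near the saddle, the stable eigenvector has nonzero $\theta$-component, as shown in the proof of Theorem~\ref{thm:saddle}. So the local manifold is a graph $p=\psi^{\rm loc}_{\theta_a}(\theta)$ that is jointly $C^1$ in $(\theta,\theta_a)$.

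To reach an arbitrary $\theta$, flow backward. Every point $(\theta,\psi_{\theta_a}(\theta))$ is $\phi_{-T}$ of a point on the local manifold for some finite $T$. The flow map is $C^1$ in the initial condition, the time, and the parameter. The graph point at abscissa $\theta$ is therefore characterized by $F(s,T,\theta_a)=\theta$, where $s$ parametrizes the local manifold. Away from the saddle, $\dot\theta\ne0$ along $W^s$, and this makes the implicit function theorem applicable. I expect this to be the main technical obstacle. The task is to make the time $T$ locally uniform in $\theta_a$ and to verify the transversality condition $\partial_T F=f\ne0$, which was established in the smoothness step of Theorem~\ref{thm:saddle}.

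\emph{Sweeping.} Fix $(\theta_0,p_0)\in(0,1)^2$. It suffices to show $\psi_{\theta_a}(\theta_0)\to1$ as $\theta_a\downarrow0$ and $\psi_{\theta_a}(\theta_0)\to0$ as $\theta_a\uparrow1$. Continuity of $\theta_a\mapsto\psi_{\theta_a}(\theta_0)$ and the intermediate value theorem then produce the desired $\theta_a$.

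For $\theta_a\uparrow 1$: the saddle $(\theta_a,p^\dagger)$ tends to $(1,0)$. Since $\psi$ is nondecreasing with $\psi\le p^\dagger$ for $\theta\le\theta_a$, we get $\psi_{\theta_a}(\theta_0)\le p^\dagger(\theta_a)\to0$.

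For $\theta_a\downarrow 0$: the saddle tends to $(0,1)$, and $\psi\ge p^\dagger$ for $\theta\ge\theta_a$. Hence $\psi_{\theta_a}(\theta_0)\ge p^\dagger(\theta_a)\to1$ once $\theta_a<\theta_0$.

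These bounds come from monotonicity of $\psi$ in $\theta$ together with the graph passing through the saddle. So the endpoint limits follow directly, without any comparison to the degenerate systems at $\theta_a\in\{0,1\}$.
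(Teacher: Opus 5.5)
Your proposal is correct. Your monotonicity step is the same contradiction argument as the paper's. Your regularity step follows the same idea as the paper (parameter-dependent stable manifold theorem plus backward flow), but you spell out the implicit-function step that the paper only asserts. Note that applying the implicit function theorem to $G(T,\theta_a)=\pi_\theta\,\phi_{-T}(\gamma_{\theta_a}(\bar s);\theta_a)-\bar\theta$, with $\partial_T G=-f\neq 0$ off the saddle, already yields a locally $C^1$ time $T(\theta_a)$. So the uniformity you flag as the main obstacle comes for free, and abscissae near the saddle are covered by the jointly $C^1$ local graph.

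The genuine difference is in the endpoint limits for sweeping. The paper passes to the degenerate systems $\theta_a=1$ and $\theta_a=0$, where $\dot p$ has a fixed sign. It shows that $(\theta_0,\varepsilon)$ (resp.\ $(\theta_0,1-\varepsilon)$) enters a local basin of $(0,1)$ (resp.\ $(1,0)$) in finite time, and transfers this to nearby $\theta_a$ by continuous dependence on parameters. You instead use a squeeze. The graph of $\psi_{\theta_a}$ passes through the saddle and is nondecreasing, so $\psi_{\theta_a}(\theta_0)\le p^\dagger(\theta_a)$ for $\theta_a\ge\theta_0$ and $\psi_{\theta_a}(\theta_0)\ge p^\dagger(\theta_a)$ for $\theta_a\le\theta_0$. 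Moreover, $p^\dagger(\theta_a)=\frac{1-\theta_a}{1-(1-\Delta)\theta_a}$ tends to $0$ and $1$ at the two ends.

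Your route is shorter and uses only Theorem~\ref{thm:saddle} and the explicit saddle. It gives quantitative bounds on $\psi_{\theta_a}(\theta_0)$. It also sidesteps the limiting systems in which the saddle collides with a non-hyperbolic corner, where one must additionally check that that corner's local basin is uniform in $\theta_a$. The paper's route relies on neither the explicit saddle location nor monotonicity of $\psi$ in $\theta$, so it would transfer more readily to variants where those structural facts are unavailable.
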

	
	\begin{proof}
		We prove the properties of the separatrix $\psi_{\theta_a}$ sequentially.
		
		\paragraph{Monotonicity with respect to $\theta_a$.}
		We show that for any fixed $\theta \in (0,1)$, $\psi_{\theta_a}(\theta)$ is decreasing in $\theta_a$.
		Let $\theta_{a,1} < \theta_{a,2}$ be two distinct AI skill levels. 
		Let $\psi_1$ and $\psi_2$ denote the corresponding separatrices.
		Suppose, for the sake of contradiction, that there exists some $\theta^* \in (0,1)$ such that $\psi_2(\theta^*) > \psi_1(\theta^*)$.
		Since the functions are continuous and connect $(0,0)$ to $(1,1)$ (Theorem~\ref{thm:saddle}), we can select a test point $(\theta^*, p_0)$ in the state space such that:
		\[
		\psi_1(\theta^*) < p_0 < \psi_2(\theta^*).
		\]
		Now, consider the long-run convergence of a learner starting at $(\theta^*, p_0)$ under the two different AI regimes:
		\begin{itemize}
			\item \textbf{Regime 1 ($\theta_{a,1}$):} Since $p_0 > \psi_1(\theta^*)$, the initial state lies in the \emph{Low-Skill Basin} (above the separatrix). By Theorem~\ref{thm:saddle}, the learner converges to the low-skill equilibrium: $\theta^\infty_1 = 0$.
			\item \textbf{Regime 2 ($\theta_{a,2}$):} Since $p_0 < \psi_2(\theta^*)$, the initial state lies in the \emph{High-Skill Basin} (below the separatrix). By Theorem~\ref{thm:saddle}, the learner converges to the high-skill equilibrium: $\theta^\infty_2 = 1$.
		\end{itemize}
		Comparing the outcomes, we have $\theta^\infty_2 > \theta^\infty_1$.
		However, Lemma~\ref{lm:coupling} states that for any time $t$, the skill $\theta(t)$ is non-increasing in $\theta_a$. 
		Taking the limit $t \to \infty$, this implies that increasing the AI skill from $\theta_{a,1}$ to $\theta_{a,2}$ must result in $\theta^\infty_2 \le \theta^\infty_1$.
		The result $\theta^\infty_2 = 1 > 0 = \theta^\infty_1$ is a direct contradiction.
		Therefore, the assumption must be false, and it must hold that $\psi_2(\theta) \leq \psi_1(\theta)$ for all $\theta \in (0,1)$.
		
		\paragraph{Differentiability.}
		The vector field defined by ODE~\eqref{eq:ODE_simplified} is $C^\infty$-smooth with respect to the parameter $\theta_a$.
		Since the saddle point $(\theta^\dagger, p^\dagger)$ is hyperbolic for all $\theta_a \in (0,1)$, the Stable Manifold Theorem with parameters \cite{Perko2013} guarantees that the local stable manifold varies smoothly (is differentiable) with respect to $\theta_a$.
		Since the global separatrix $\psi_{\theta_a}$ is obtained by backward integration of the local manifold—and the flow of the ODE is smooth with respect to parameters—the global function $\psi_{\theta_a}(\theta)$ is continuously differentiable with respect to $\theta_a$.
		
		\paragraph{Sweeping Property.}
		We show that for any point $(\theta_0, p_0)$, there exists a $\theta_a$ such that the separatrix passes through it.
		We analyze the geometric limits of the saddle point $(\theta^\dagger, p^\dagger)$ as $\theta_a$ approaches the boundaries:
		\begin{itemize}
			\item \textbf{Limit $\theta_a \to 1$:} The saddle point $(\theta^\dagger, p^\dagger)$ converges to the high-skill corner $(1,0)$. For every fixed $\theta\in(0,1)$,
			\[
			\lim_{\theta_a\uparrow 1}\psi_{\theta_a}(\theta)=0.
			\]
			Equivalently, the basin boundary moves downward toward the $\theta$-axis
			as AI skill approaches one.
			Thus, $\lim_{\theta_a \to 1} \psi_{\theta_a}(\theta_0) = 0$.
			\item \textbf{Limit $\theta_a \to 0$:} The saddle point converges to the low-skill corner $(0,1)$ (since $p^\dagger \to 1$). 
			The stable manifold connects the source $(0,0)$ to the saddle $(0,1)$ and the threshold converges pointwise to $1$ as $\theta_a\downarrow 0$, pushing the boundary toward $p=1$. 
			Thus, $\lim_{\theta_a \to 0} \psi_{\theta_a}(\theta_0) = 1$.
		\end{itemize}
		Since $\psi_{\theta_a}(\theta_0)$ is a continuous function of $\theta_a$ and its range includes the interval $(0,1)$ as $\theta_a$ varies, the Intermediate Value Theorem implies that for any $p_0 \in (0,1)$, there exists a value $\theta_a \in (0,1)$ such that $\psi_{\theta_a}(\theta_0) = p_0$.
		
		\paragraph{Continuity in $\theta_a$.}
		Fix $\bar\theta_a\in(0,1)$ and $\bar\theta\in(0,1)$. 
		The vector field is $C^\infty$ in $(\theta,p,\theta_a)$, and the interior equilibrium
		\[
		(\theta^\dagger(\theta_a),p^\dagger(\theta_a))
		=
		\left(
		\theta_a,
		\frac{1-\theta_a}{1-(1-\Delta)\theta_a}
		\right)
		\]
		is hyperbolic for every $\theta_a\in(0,1)$. 
		By the parameter-dependent stable manifold theorem \cite{Perko2013}, the local stable manifold varies continuously, in
		fact differentiable, with respect to $\theta_a$. 
		Since the global branch is obtained by flowing the local manifold backward, and the flow depends continuously on initial condition and parameter on every finite time interval, the intersection height of the global stable manifold with the vertical section $\{\bar\theta\}\times(0,1)$ varies continuously in $\theta_a$. 
		Therefore, $\psi_{\theta_a}(\bar\theta)$ is continuous in $\theta_a$.
		
		To prove that for any $(\theta_0, p_0)\in (0,1)^2$, there exists $\theta_a\in (0,1)$ such that $p_0 = \psi_{\theta_a}(\theta_0)$, it suffices to prove the following lemma.
		
		\begin{lemma}[\bf{Endpoint limits of the separatrix}]
			For every fixed $\theta_0\in(0,1)$,
			\[
			\lim_{\theta_a\uparrow 1}\psi_{\theta_a}(\theta_0)=0,
			\qquad
			\lim_{\theta_a\downarrow 0}\psi_{\theta_a}(\theta_0)=1.
			\]
		\end{lemma}
		
		\begin{proof}
			We prove the two limits by comparison with the limiting systems
			$\theta_a=1$ and $\theta_a=0$.
			
			First consider $\theta_a=1$. The delegation equation becomes
			\[
			\dot p=\kappa p(1-p)(1-\theta)^2\geq 0.
			\]
			For every interior initial condition with $p_0>0$, delegation is
			nondecreasing, and the only possible attracting regime in the interior is
			the low-skill full-delegation regime. More precisely, every trajectory with
			$(\theta_0,p_0)\in(0,1)^2$ and $p_0>0$ converges to $(0,1)$ under the
			limiting system $\theta_a=1$.
			
			Fix $\theta_0\in(0,1)$ and any $\varepsilon>0$. Under the limiting system
			$\theta_a=1$, the trajectory starting from $(\theta_0,\varepsilon)$
			converges to $(0,1)$. Hence there exists a finite time $T$ at which this
			trajectory enters a compact subset of the local basin of $(0,1)$. By
			continuous dependence of ODE solutions on parameters, for all
			$\theta_a<1$ sufficiently close to $1$, the trajectory starting from
			$(\theta_0,\varepsilon)$ also enters the same local basin. Therefore
			$(\theta_0,\varepsilon)$ lies in the low-skill basin for all sufficiently
			large $\theta_a$. Since $\varepsilon>0$ was arbitrary, the basin threshold
			satisfies
			\[
			\limsup_{\theta_a\uparrow 1}\psi_{\theta_a}(\theta_0)\leq \varepsilon.
			\]
			Letting $\varepsilon\downarrow 0$ gives
			\[
			\lim_{\theta_a\uparrow 1}\psi_{\theta_a}(\theta_0)=0.
			\]
			Next consider $\theta_a=0$. The delegation equation becomes
			\[
			\dot p=\kappa p(1-p)\bigl((1-\theta)^2-1\bigr)
			=
			-\kappa p(1-p)\theta(2-\theta)\leq 0.
			\]
			For every interior initial condition with $p_0<1$, delegation is
			nonincreasing, and the limiting system converges to the high-skill
			low-delegation regime $(1,0)$. Fix $\theta_0\in(0,1)$ and any
			$\varepsilon>0$. Under the limiting system $\theta_a=0$, the trajectory
			starting from $(\theta_0,1-\varepsilon)$ converges to $(1,0)$. Hence it
			enters a compact subset of the local basin of $(1,0)$ in finite time.
			Again by continuous dependence on parameters, for all $\theta_a>0$
			sufficiently close to $0$, the trajectory starting from
			$(\theta_0,1-\varepsilon)$ remains in the high-skill basin. Therefore
			\[
			\psi_{\theta_a}(\theta_0)\geq 1-\varepsilon
			\]
			for all sufficiently small $\theta_a$. Letting $\varepsilon\downarrow 0$
			gives
			\[
			\lim_{\theta_a\downarrow 0}\psi_{\theta_a}(\theta_0)=1.
			\]
		\end{proof}
		\noindent
		Fix $(\theta_0,p_0)\in(0,1)^2$. By the endpoint lemma,
		\[
		\lim_{\theta_a\downarrow 0}\psi_{\theta_a}(\theta_0)=1>p_0,
		\qquad
		\lim_{\theta_a\uparrow 1}\psi_{\theta_a}(\theta_0)=0<p_0.
		\]
		Since $\psi_{\theta_a}(\theta_0)$ is continuous in $\theta_a$, the
		intermediate value theorem implies that there exists
		$\theta_a\in(0,1)$ such that
		\[
		\psi_{\theta_a}(\theta_0)=p_0.
		\]
		This proves the sweeping property.
	\end{proof}

	\subsection{Proof of Theorem \ref{thm:performance}: short-term gain, long-term losses}
	\label{sec:proof_performance}
	
	{
		\begin{theorem}[\bf{Restatement of Theorem \ref{thm:performance}}]
			Let \(\theta_a,\kappa,\Delta\) be the parameters of
			ODE~\eqref{eq:ODE_simplified}. Fix an initial state
			\((\theta_0,p_0)\in(0,1)^2\) with \(\theta_0<\theta_a\). Let
			\[
			t_c := \inf\left\{t\ge 0:\ \forall t'>t,\ G_\ell(t')>0\right\}
			\]
			denote the final crossing time, after which adaptive delegation incurs higher
			loss than the no-AI baseline. Then \(t_c<\infty\) and
			\(\theta(t_c;\theta_0,0)\le \theta_a\).
	\end{theorem}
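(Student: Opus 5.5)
The plan is to decompose the performance gap into a ``skill deficit'' term and an ``AI advantage'' term, and sign each separately. Write $\theta(t)=\theta(t;\theta_0,p_0)$, $p(t)=p(t;\theta_0,p_0)$ and $\bar\theta(t)=\theta(t;\theta_0,0)$. Expanding $\ell(\theta,p)=(1-p)(1-\theta)^2+p(1-\theta_a)^2$ gives
\[
G_\ell(t)=(1-p(t))\bigl[(1-\theta(t))^2-(1-\bar\theta(t))^2\bigr]+p(t)\bigl[(1-\theta_a)^2-(1-\bar\theta(t))^2\bigr].
\]

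For the first bracket, I will invoke Lemma~\ref{lm:coupling} with initial skill gaps $x_0=\widetilde x_0=1-\theta_0$ and delegation levels $0\le p_0$. The boundary $p=0$ is invariant, so the no-AI trajectory is the solution of the cooperative $(x,p)$ system from $(1-\theta_0,0)$. Kamke's comparison then gives $1-\bar\theta(t)\le 1-\theta(t)$ for all $t\ge0$, so the first bracket is nonnegative. The lemma is stated for ordered initial data, and $p_0=0$ lies on the closed boundary, where the comparison principle still applies.

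For the second bracket, I will analyze the no-AI trajectory. It solves $\dot{\bar\theta}=\bar\theta(1-\bar\theta)^2$, which is strictly positive on $(0,1)$, so $\bar\theta$ is strictly increasing and tends to $1$. This uses only one-dimensional arguments: the limit must be an equilibrium, and $1$ is the only equilibrium above $\theta_0>0$. Since $\theta_0<\theta_a<1$, continuity and strict monotonicity give a unique finite $t^*>0$ with $\bar\theta(t^*)=\theta_a$. For $t>t^*$ we have $\bar\theta(t)>\theta_a$, so $(1-\theta_a)^2-(1-\bar\theta(t))^2>0$. Moreover $p(t)\in(0,1)$ for all finite $t$ by the interior invariance in Theorem~\ref{thm:existence_general}. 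Hence for $t>t^*$ the second term is strictly positive and the first is nonnegative, so $G_\ell(t)>0$. This shows $t^*$ belongs to the set in the definition of $t_c$, so $t_c\le t^*<\infty$.

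For the final claim, monotonicity of $\bar\theta$ gives $\bar\theta(t_c)\le\bar\theta(t^*)=\theta_a$. The only step needing care is justifying the comparison with the boundary trajectory $p\equiv0$ in Lemma~\ref{lm:coupling}. If necessary, I will approximate it by initial delegations $\varepsilon\downarrow0$ and use continuous dependence on initial data to pass to the limit.
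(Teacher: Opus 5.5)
Your proposal is correct and follows the paper's proof essentially step for step: the same two-bracket decomposition of $G_\ell$, Lemma~\ref{lm:coupling} (comparison with the $p\equiv 0$ trajectory) to make the skill-deficit bracket nonnegative, and the finite time $t^*$ at which the no-AI trajectory reaches $\theta_a$, after which the AI-advantage bracket is strictly positive, giving $t_c\le t^*$. Your final step, $\theta(t_c;\theta_0,0)\le\theta(t^*;\theta_0,0)=\theta_a$ by monotonicity of the baseline, states the paper's last sentence more cleanly, and citing interior invariance (Theorem~\ref{thm:existence_general}) for $p(t)>0$ is more accurate than the paper's appeal to the coupling lemma.
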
}
	
	\begin{proof}
		We first express the performance gap $G_\ell(t)$ by expanding the loss term $\ell(\theta, p) = (1-p)(1-\theta)^2 + p(1-\theta_a)^2$:
		\begin{align*}
			G_\ell(t) &= \left[ (1 - p(t; \theta_0, p_0))(1 - \theta(t; \theta_0, p_0))^2 + p(t; \theta_0, p_0)(1 - \theta_a)^2 \right] - (1 - \theta(t; \theta_0, 0))^2 \\
			&= (1 - p(t; \theta_0, p_0)) \left[ (1 - \theta(t; \theta_0, p_0))^2 - (1 - \theta(t; \theta_0, 0))^2 \right] \\
			& \quad + p(t; \theta_0, p_0) \left[ (1 - \theta_a)^2 - (1 - \theta(t; \theta_0, 0))^2 \right].
		\end{align*}
		Since $p_0 > 0$, Lemma~\ref{lm:coupling} implies that for all $t > 0$, the delegation level remains positive, $p(t; \theta_0, p_0) > 0$, and the AI-assisted skill is no more than the baseline skill:
		\[
		\theta(t; \theta_0, p_0) \leq \theta(t; \theta_0, 0).
		\]
		Consequently, the first bracketed term (skill loss difference) is non-negative for all $t > 0$:
		\begin{equation}
			\label{eq:skill_diff}
			(1 - \theta(t; \theta_0, p_0))^2 - (1 - \theta(t; \theta_0, 0))^2 \geq 0.
		\end{equation}
		
		\noindent
		The baseline skill $\theta(t; \theta_0, 0)$ evolves according to $\dot{\theta} = \theta(1-\theta)^2$, which strictly increases from $\theta_0$ to 1.
		Since $\theta_0 < \theta_a < 1$, by the Intermediate Value Theorem, there exists a unique finite time $t^* > 0$ such that $\theta(t^*; \theta_0, 0) = \theta_a$.
		For any time $t > t^*$, we have $\theta(t; \theta_0, 0) > \theta_a$, which implies:
		\[
		(1 - \theta_a)^2 - (1 - \theta(t; \theta_0, 0))^2 > 0.
		\]
		Combining this with \eqref{eq:skill_diff} and the fact that $p(t; \theta_0, p_0) > 0$, we conclude that $G_\ell(t)$ is a sum of strictly positive terms for all $t > t^*$.
		The crossing time is defined as $t_c := \inf\{t \ge 0 : \forall t' > t, G_\ell(t') > 0\}$.
		Since $G_\ell(t) > 0$ holds for all $t > t^*$, it follows that $t_c \le t^* < \infty$.
		Finally, because $G_\ell(t)$ becomes strictly positive whenever $\theta(t; \theta_0, 0) > \theta_a$, any period where the AI provides a benefit ($G_\ell(t) \le 0$) must occur while the baseline skill is still below the AI skill.
		Therefore, at the crossing time, the baseline skill satisfies $\theta(t_c; \theta_0, 0) \le \theta_a$.
	\end{proof}
	
	\subsection{Proof of Lemma \ref{lm:coupling}: monotone coupling between skill gap and delegation}
	\label{sec:proof_coupling}
	
	\begin{lemma}[\bf{Restatement of Lemma \ref{lm:coupling}}]
		Let $x=1-\theta$ denote the skill gap. For fixed $\kappa,\Delta>0$, the
		system in variables $(x,p)$ is cooperative:
		\[
		\dot x
		= x(1-x)\bigl[\Delta p(1-x)-(1-p)x\bigr],
		\qquad
		\dot p
		= \kappa p(1-p)\bigl[x^2-(1-\theta_a)^2\bigr].
		\]
		Consequently, if $x_0\leq \widetilde x_0$ and
		$p_0\leq \widetilde p_0$, then for all $t\geq 0$,
		\[
		x(t;x_0,p_0)\leq x(t;\widetilde x_0,\widetilde p_0),
		\qquad
		p(t;x_0,p_0)\leq p(t;\widetilde x_0,\widetilde p_0).
		\]
		Moreover, for fixed initial condition, increasing $\theta_a$ weakly
		increases $p(t)$ and $x(t)$ for all $t\geq 0$.
	\end{lemma}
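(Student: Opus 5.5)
The plan has three parts: change variables, verify the sign conditions that make the system cooperative, and then invoke the Kamke--M\"uller comparison principle twice, once for initial data and once for the parameter \(\theta_a\).

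\emph{Change of variables and cooperativity.} Substituting \(\theta=1-x\) into \eqref{eq:ODE_simplified} gives \(\dot x=-\dot\theta\), with \(\theta(1-\theta)=x(1-x)\). The bracket \((1-p)(1-\theta)-\Delta p\theta\) becomes \((1-p)x-\Delta p(1-x)\), so
\[
\dot x = x(1-x)\bigl[\Delta p(1-x)-(1-p)x\bigr],
\qquad
\dot p = \kappa p(1-p)\bigl[x^2-(1-\theta_a)^2\bigr].
\]
Call these right-hand sides \(F_1(x,p)\) and \(F_2(x,p;\theta_a)\). I then compute the off-diagonal partials on \([0,1]^2\):
\[
\partial_p F_1 = x(1-x)\bigl[\Delta(1-x)+x\bigr]\ge 0,
\qquad
\partial_x F_2 = 2\kappa p(1-p)x\ge 0.
\]
The square \([0,1]^2\) is convex and, by Theorem~\ref{thm:existence_general}, forward invariant. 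Hence the system is cooperative (quasimonotone) on this domain.

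\emph{Comparison in the initial data.} I apply Kamke's theorem in the form given in \cite{smith1995monotone}. For a \(C^1\) vector field that is cooperative on a convex invariant set, the flow preserves the componentwise order. If one wants an elementary proof, the standard route works. First perturb the larger system's right-hand side by adding \(\epsilon(1,1)\). At the first time two components would touch, the touching component has a strictly larger derivative in the perturbed system: the other component is ordered and the off-diagonal partial is nonnegative. So strict order persists. Then let \(\epsilon\to0\), using continuous dependence on parameters. This yields \(x(t;x_0,p_0)\le x(t;\widetilde x_0,\widetilde p_0)\) and the corresponding inequality for \(p\), for all \(t\ge0\). A subtlety is that the perturbed trajectory may leave \([0,1]^2\) slightly. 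To avoid this, I extend the polynomial field to all of \(\mathbb{R}^2\), where the solution still exists locally and the comparison argument is unaffected on compact time intervals.

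\emph{Monotonicity in \(\theta_a\).} Let \(\theta_{a,1}<\theta_{a,2}\). Then \((1-\theta_{a,2})^2<(1-\theta_{a,1})^2\), so \(F_2(x,p;\theta_{a,2})\ge F_2(x,p;\theta_{a,1})\) pointwise, because \(p(1-p)\ge0\); meanwhile \(F_1\) does not depend on \(\theta_a\). For a cooperative system, a pointwise larger vector field produces larger solutions from the same initial point. This is the differential-inequality version of Kamke's theorem: the \(\theta_{a,1}\)-trajectory is a subsolution of the \(\theta_{a,2}\)-system. The same \(\epsilon\)-perturbation and first-touching-time argument applies. It requires cooperativity only of the larger field, which holds for every \(\theta_a\). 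Therefore \(x(t)\) and \(p(t)\) are nondecreasing in \(\theta_a\).

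The main obstacle is handling the boundary and non-strict inequalities carefully in the touching argument, including the case where both components touch simultaneously. Citing the general Kamke--M\"uller comparison theorem for quasimonotone systems avoids redoing this by hand.
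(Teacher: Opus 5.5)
Your proposal is correct and follows the paper's proof: the same change of variables $x=1-\theta$, the same nonnegative off-diagonal partials $\partial_p F_1$ and $\partial_x F_2$, and the same appeal to Kamke's comparison theorem, both for ordered initial data and for the parameter $\theta_a$ (using $F_2(\cdot;\theta_{a,2})\ge F_2(\cdot;\theta_{a,1})$ with $F_1$ independent of $\theta_a$). Your $\epsilon$-perturbation and first-touching-time sketch is a more careful version of the step the paper simply cites. It goes through even with the non-cooperative polynomial extension, because at a touching time the shared coordinate comes from the unperturbed trajectory, which stays in $[0,1]^2$.
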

	
	\begin{proof}
		In the variables $x=1-\theta$ and $p$, write the vector field as
		\[
		F_1(x,p)=x(1-x)\bigl[\Delta p(1-x)-(1-p)x\bigr],
		\]
		\[
		F_2(x,p;\theta_a)
		=\kappa p(1-p)\bigl[x^2-(1-\theta_a)^2\bigr].
		\]
		On the interior $(0,1)^2$,
		\[
		\frac{\partial F_1}{\partial p}
		=x(1-x)\bigl[\Delta(1-x)+x\bigr]\geq 0,
		\]
		and
		\[
		\frac{\partial F_2}{\partial x}
		=2\kappa p(1-p)x\geq 0.
		\]
		Thus the system is cooperative in \((x,p)\). By Kamke's comparison theorem
		for monotone dynamical systems~\cite{smith1995monotone}, larger initial skill gap
		\(x_0\), larger initial delegation \(p_0\), or larger AI skill \(\theta_a\)
		weakly increase future skill gap and delegation
		(Lemma~\ref{lm:coupling}).
		
		For the parameter $\theta_a$, observe that
		\[
		\frac{\partial F_1}{\partial \theta_a}=0,
		\qquad
		\frac{\partial F_2}{\partial \theta_a}
		=2\kappa p(1-p)(1-\theta_a)\geq 0.
		\]
		Therefore, increasing $\theta_a$ increases the vector field in the $p$
		coordinate and does not decrease it in the $x$ coordinate. 
		Applying the
		same comparison theorem to the parameterized systems gives the claimed
		monotonicity in AI skill.
	\end{proof}
	
	\subsection{Approximation of the stable manifold of the saddle point}
	\label{sec:basin_approximation}
	
	\begin{figure}[t]
		\centering
		\begin{subfigure}[b]{0.3\textwidth}
			\centering
			\includegraphics[width=\linewidth]{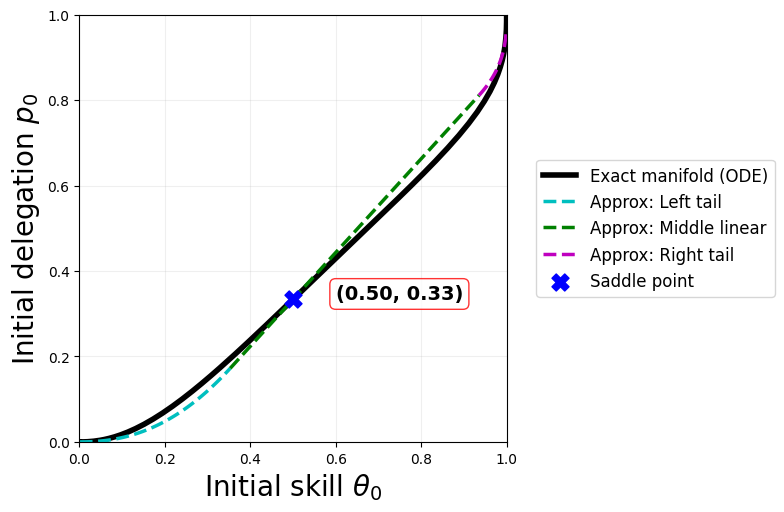}
			\caption{Default setting.}
			\label{fig:approximation}   
		\end{subfigure}
		\hfill
		\begin{subfigure}[b]{0.3\textwidth}
			\centering
			\includegraphics[width=\linewidth]{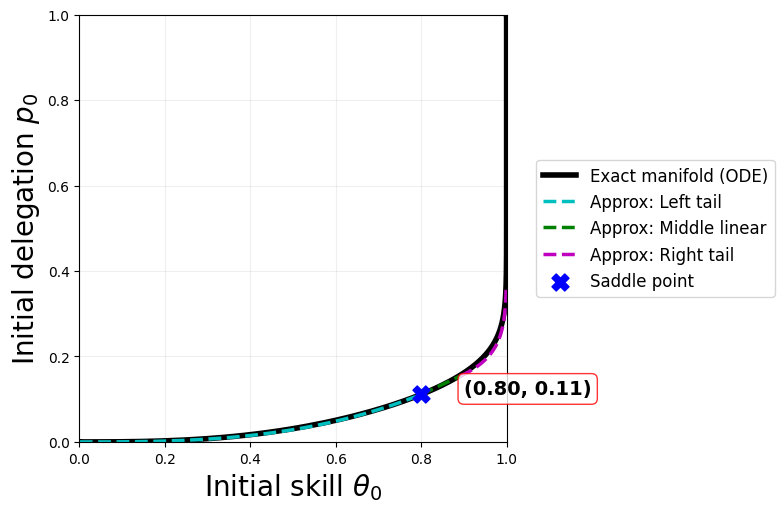}
			\caption{Varying $\theta_a = 0.8$.}
			\label{fig:approximation_AI}
		\end{subfigure}
		\hfill
		\begin{subfigure}[b]{0.3\textwidth}
			\centering
			\includegraphics[width=\linewidth]{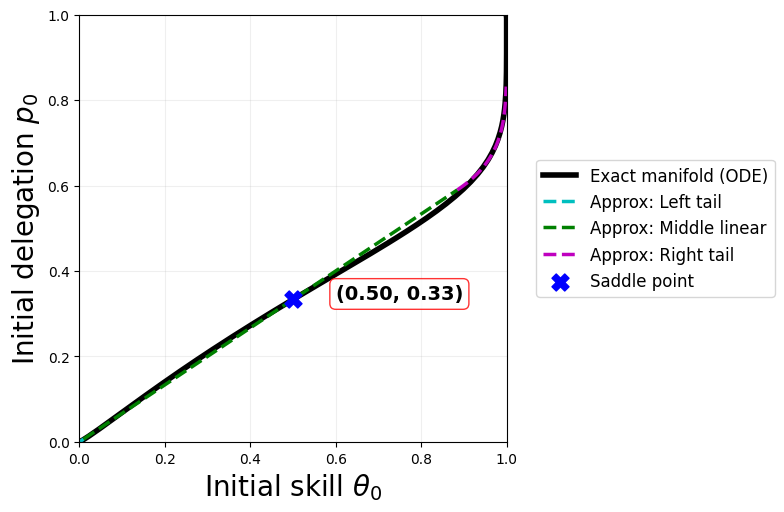}
			\caption{Varying $\kappa = 1.5$.}
			\label{fig:approximation_kappa}
		\end{subfigure}
		\caption{Plots illustrating the closeness between the stable manifold $\psi(\cdot)$ and its approximation $\widetilde{\psi}$, with default parameter settings $(\theta_a, \kappa, \Delta) = (0.5, 3, 2)$.}
		\label{fig:stable_other}
	\end{figure}
	
	In this section, we construct an explicit approximation $\widetilde{\psi}$ to the stable manifold $\psi$ (Eq.~\eqref{eq:basin}) to enable actionable predictions.
	Figure~\ref{fig:approximation} plots $\widetilde{\psi}$ under different parameter choices and shows that it closely tracks $\psi$.

	We construct $\widetilde{\psi}(\theta)$ by approximating the global stable manifold as a composite of three locally valid functions: a linear segment near the saddle point derived from the stable eigenvector, and two power-law segments near the corners $(0,0)$ and $(1,1)$ derived from the local eigenvalues.
	We determine the transition points $\theta_l$ and $\theta_r$ by enforcing continuity of both the function and its first derivative (differentiable continuity).
	
	\paragraph{1. Linear approximation at the saddle.}
	We first linearize the system around the saddle point $(\theta^\dagger, p^\dagger)$. The Jacobian matrix $J$ evaluated at the saddle is given by:
	\[
	J = 
	\begin{bmatrix}
		J_{11} & J_{12} \\
		J_{21} & J_{22}
	\end{bmatrix}
	=
	\begin{bmatrix}
		\partial_\theta \dot{\theta} & \partial_p \dot{\theta} \\
		\partial_\theta \dot{p} & \partial_p \dot{p}
	\end{bmatrix}.
	\]
	At the saddle, we have $\partial_p \dot{p} = 0$, so $J_{22} = 0$. The other entries are:
	\begin{align*}
		J_{11} = & \ (1-2\theta^\dagger)\left[(1-p^\dagger)(1-\theta^\dagger) - \Delta p^\dagger \theta^\dagger\right] \\
		& \ + \theta^\dagger(1-\theta^\dagger)[-(1-p^\dagger) - \Delta p^\dagger]= -\theta^\dagger(1-\theta^\dagger)(1-p^\dagger + \Delta p^\dagger) \\
		J_{12} = & \ \theta^\dagger(1-\theta^\dagger)[-(1-\theta^\dagger) - \Delta \theta^\dagger] = -\theta^\dagger(1-\theta^\dagger)(1 - \theta^\dagger(1-\Delta)) \\
		J_{21} = & \ -2\kappa p^\dagger(1-p^\dagger)(1-\theta^\dagger) 
	\end{align*}
	The characteristic equation for eigenvalues $\lambda$ is $\lambda^2 - J_{11}\lambda - J_{12}J_{21} = 0$. 
	The stable eigenvalue $\lambda_s$ is the negative root. 
	The associated eigenvector $v_s = [1, m^\dagger]^T$ satisfies  $(J_{11} - \lambda_s) + J_{12}m^\dagger = 0$.
	Solving for the slope $m^\dagger$ yields the expression:
	\[
	m^\dagger = \frac{\lambda_s - J_{11}}{J_{12}} = \frac{-J_{11} - \sqrt{J_{11}^2 + 4 J_{12} J_{21}}}{2 J_{12}}.
	\]
	Thus, near the saddle, the manifold is approximated by the line:
	\[
	\psi_{\text{mid}}(\theta) = p^\dagger + m^\dagger(\theta - \theta^\dagger).
	\]
	
	\paragraph{2. Power-law approximation at $(0,0)$.}
	Near the origin $(0,0)$, we approximate the dynamics by keeping only the lowest-order linear terms.
	For $\theta \approx 0, p \approx 0$:
	\begin{align*}
		\dot{\theta} &\approx \theta(1)(1 - 0) = \theta. \\
		\dot{p} &\approx \kappa p (1) (1 - (1-\theta_a)^2) = \kappa p (1 - (1-\theta^\dagger)^2).
	\end{align*}
	We define the ratio of the growth rates as $\beta_l := \frac{\dot{p}/\rho}{\dot{\theta}/\theta} = \kappa (1 - (1-\theta^\dagger)^2)$.
	The differential equation describing the trajectory shape is $\frac{dp}{d\theta} = \frac{\dot{p}}{\dot{\theta}} = \beta_l \frac{p}{\theta}$.
	Integrating this yields the power law form:
	\[
	\psi_{\text{left}}(\theta) = C_l \cdot \theta^{\beta_l}.
	\]
	
	\paragraph{3. Power-law approximation at $(1,1)$.}
	Near the corner $(1,1)$, we transform variables to $x = 1-\theta$ and $y = 1-p$, where $x,y \approx 0$.
	The dynamics become:
	\begin{align*}
		\dot{x} = -\dot{\theta} &\approx -\theta(1-\theta)(-\Delta p \theta) \approx \Delta x. \quad (\text{since } \theta \to 1, p \to 1). \\
		\dot{y} = -\dot{p} &\approx -\kappa p(1-p)(-(1-\theta_a)^2) \approx \kappa y (1-\theta^\dagger)^2.
	\end{align*}
	We define the ratio of decay rates as $\beta_r := \frac{\dot{y}}{\dot{x}} = \frac{\kappa (1-\theta^\dagger)^2}{\Delta}$.
	The trajectory shape satisfies $\frac{dy}{dx} = \beta_r \frac{y}{x}$. Integrating yields $y = C_r x^{\beta_r}$.
	Transforming back to original coordinates, we get:
	\[
	1 - \psi_{\text{right}}(\theta) = C_r (1-\theta)^{\beta_r} \implies \psi_{\text{right}}(\theta) = 1 - C_r (1-\theta)^{\beta_r}.
	\]
	
	\paragraph{4. Smooth pasting at breakpoints.}
	We determine the breakpoints $\theta_l$ and $\theta_r$ by matching the linear segment $\psi_{\text{mid}}$ to the power laws $\psi_{\text{left}}$ and $\psi_{\text{right}}$.
	\begin{itemize}
		\item \textbf{Left Breakpoint $\theta_l$:} We require $\psi_{\text{left}}(\theta_l) = \psi_{\text{mid}}(\theta_l)$ and $\psi'_{\text{left}}(\theta_l) = \psi'_{\text{mid}}(\theta_l)$.
		The derivative condition gives:
		\[
		C_l \beta_l \theta_l^{\beta_l - 1} = m^\dagger \implies \frac{\beta_l}{\theta_l} (C_l \theta_l^{\beta_l}) = m^\dagger \implies \frac{\beta_l}{\theta_l} \psi_{\text{mid}}(\theta_l) = m^\dagger.
		\]
		Substituting the linear form $\psi_{\text{mid}}(\theta_l) = p^\dagger + m^\dagger(\theta_l - \theta^\dagger)$:
		\[
		\beta_l (p^\dagger + m^\dagger \theta_l - m^\dagger \theta^\dagger) = m^\dagger \theta_l.
		\]
		Rearranging terms to solve for $\theta_l$:
		\[
		\beta_l (p^\dagger - m^\dagger \theta^\dagger) = m^\dagger \theta_l (1 - \beta_l) \implies \theta_l = \frac{\beta_l (p^\dagger - m^\dagger \theta^\dagger)}{m^\dagger (1 - \beta_l)}.
		\]
		
		\item \textbf{Right Breakpoint $\theta_r$:} Similarly, we match the function and derivative at $\theta_r$.
		The derivative condition for the right segment is:
		\[
		\psi'_{\text{right}}(\theta) = -C_r \beta_r (1-\theta)^{\beta_r - 1} (-1) = \frac{\beta_r}{1-\theta} C_r (1-\theta)^{\beta_r} = \frac{\beta_r}{1-\theta} (1 - \psi_{\text{right}}(\theta)).
		\]
		Setting this equal to $m^\dagger$ at $\theta_r$:
		\[
		\frac{\beta_r}{1-\theta_r} (1 - \psi_{\text{mid}}(\theta_r)) = m^\dagger.
		\]
		Substituting $\psi_{\text{mid}}(\theta_r) = p^\dagger + m^\dagger(\theta_r - \theta^\dagger)$:
		\[
		\beta_r (1 - p^\dagger - m^\dagger \theta_r + m^\dagger \theta^\dagger) = m^\dagger (1-\theta_r).
		\]
		Rearranging terms to solve for $\theta_r$:
		\[
		\beta_r (1 - p^\dagger + m^\dagger \theta^\dagger) - \beta_r m^\dagger \theta_r = m^\dagger - m^\dagger \theta_r
		\]
		\[
		m^\dagger \theta_r (1 - \beta_r) = m^\dagger - \beta_r(1 - p^\dagger + m^\dagger \theta^\dagger)
		\]
		\[
		\theta_r = \frac{m^\dagger - \beta_r(1 - p^\dagger + m^\dagger \theta^\dagger)}{m^\dagger (1 - \beta_r)}.
		\]
	\end{itemize}
	To ensure $\theta_l \leq \theta_a \leq \theta_r$, we let
	\[
	\theta_l := \min\{\theta^\dagger, \frac{\beta_l (p^\dagger - m^\dagger \theta^\dagger)}{m^\dagger (1 - \beta_l)}\}, \text{ and } \theta_r := \min\{1, \frac{m^\dagger - \beta_r(1-p^\dagger + m^\dagger \theta^\dagger)}{(1-\beta_r) m^\dagger} \}.
	\]
	In summary, the approximation $\widetilde{\psi}$ is:

	\begin{align}
		\label{eq:basin}
		\widetilde{\psi}(\theta) := 
		\begin{cases}
			(p^\dagger +  m^\dagger\cdot (\theta_l - \theta^\dagger)) \cdot (\frac{\theta}{\theta_l})^{\beta_l} & \text{ if } 0\leq \theta \leq \theta_l \\
			p^\dagger +  m^\dagger\cdot (\theta - \theta^\dagger)  & \text{ if } \theta_l < \theta < \theta_r \\
			1 - (1 - \psi_{\mathrm{mid}}(\theta_r)) \cdot (\frac{1-\theta}{1 - \theta_r})^{\beta_r}   & \text{ if } \theta_r \leq \theta \leq 1.
		\end{cases}
	\end{align}
	
	\subsection{Effects of parameters on the stable manifold}
	\label{sec:variation_other}

	\begin{figure}[t]
		\centering
		\begin{subfigure}[b]{0.3\textwidth}
			\centering
			\includegraphics[width=\linewidth]{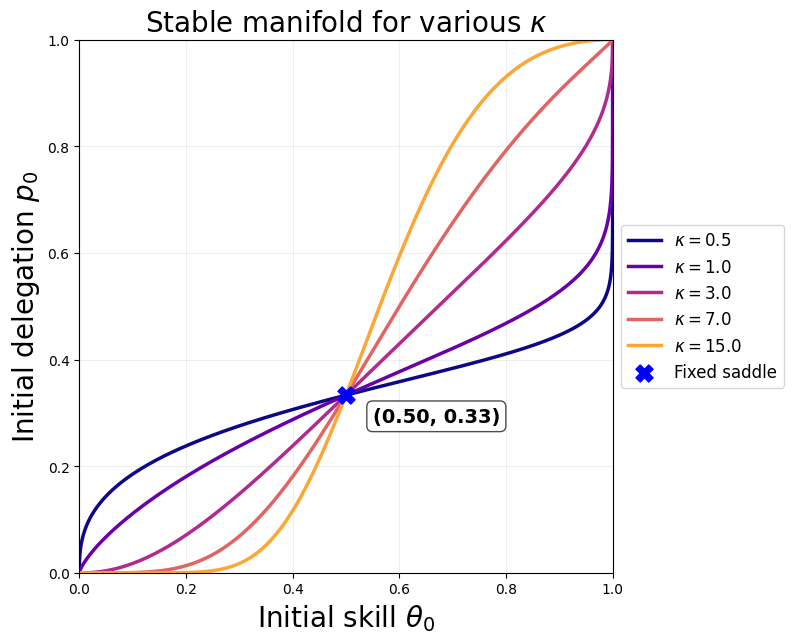}
			\caption{Effects of $\kappa$}
			\label{fig:stable_kappa}
		\end{subfigure}
		\qquad
		\begin{subfigure}[b]{0.3\textwidth}
			\centering
			\includegraphics[width=\linewidth]{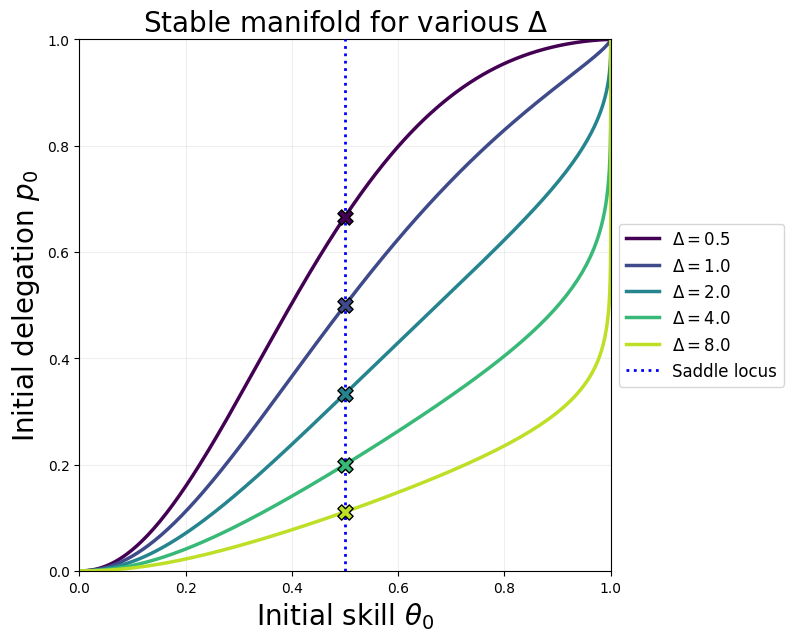}
			\caption{Effects of $\Delta$}
			\label{fig:stable_Delta}
		\end{subfigure}
		\caption{Plots illustrating the relationship between the basin boundary $\psi(\cdot)$ and model parameters $\theta_a, \kappa, \Delta$, with default setting $(\theta_a, \kappa, \Delta) = (0.5, 3,2)$.}
		\label{fig:effects} 
	\end{figure}

	Beyond Theorem~\ref{thm:monotonicity}, we also study how $\kappa$ and $\Delta$ affect the stable manifold.
	Figure~\ref{fig:effects} shows the resulting variation in the basin boundary as $\kappa$ and $\Delta$ change.
	We observe that changing $\kappa$ does not affect the saddle location, whereas changing $\Delta$ shifts the saddle along the line $\theta=\theta_a$.

	\begin{lemma}[\bf{Effects of $\kappa$ on stable manifold}]
		\label{lm:variation_kappa}
		Fix $\theta_a$.
		For $\theta \in (0,\theta_a)$, $\psi(\theta)$ is monotonically decreasing in $\kappa$ and continuously differentiable with respect to $\kappa$.
		For $\theta \in (\theta_a, 1)$, $\psi(\theta)$ is nondecreasing in $\kappa$ and continuously differentiable with respect to $\kappa$.
	\end{lemma}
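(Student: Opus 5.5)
The plan is to exploit the fact that $\kappa$ enters the vector field only as a multiplicative factor in the $p$-equation, while the saddle $(\theta^\dagger,p^\dagger)=(\theta_a,p^\dagger)$ does not depend on $\kappa$. By Theorem~\ref{thm:saddle}, the stable manifold is the graph $p=\psi_\kappa(\theta)$ of an orbit. From the smoothness part of that proof, $f(\theta,p)=\dot\theta\neq0$ on the manifold away from the saddle. Hence on each of the two intervals $(0,\theta_a)$ and $(\theta_a,1)$, $\psi_\kappa$ solves the scalar orbit equation
\[
\frac{dp}{d\theta}=G_\kappa(\theta,p):=\kappa\,\frac{p(1-p)\bigl((1-\theta)^2-(1-\theta_a)^2\bigr)}{f(\theta,p)} .
\]
I would compare the two solutions $\psi_1=\psi_{\kappa_1}$ and $\psi_2=\psi_{\kappa_2}$, with $\kappa_1<\kappa_2$, by a first-crossing argument, starting from the common point $(\theta_a,p^\dagger)$.

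\emph{Sign of the slope field.} The $\theta$-nullcline $p=(1-\theta)/(1-(1-\Delta)\theta)$ is strictly decreasing and passes through the saddle, while $\psi_\kappa$ is nondecreasing. For $\theta<\theta_a$ the graph therefore lies strictly below the nullcline, so $f>0$. There also $(1-\theta)^2>(1-\theta_a)^2$, so the numerator is positive. For $\theta>\theta_a$ the graph lies strictly above the nullcline, so $f<0$, and the numerator is negative. In both cases $G_\kappa>0$, and at any fixed point $(\theta,p)$ off the saddle, $G_{\kappa_2}(\theta,p)>G_{\kappa_1}(\theta,p)$ strictly.

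\emph{Behaviour at the saddle.} The orbit equation is singular at the saddle ($0/0$), so the initial ordering must come from the stable eigenvector slope $m^\dagger(\kappa)$ computed in Section~\ref{sec:basin_approximation}:
\[
m^\dagger(\kappa)=\frac{-J_{11}-\sqrt{J_{11}^2+4J_{12}J_{21}}}{2J_{12}} .
\]
Here $J_{11}<0$ and $J_{12}<0$ do not depend on $\kappa$, while $J_{21}\propto\kappa$ is negative. Thus $J_{12}J_{21}>0$ increases with $\kappa$, the numerator becomes more negative, and after division by $J_{12}<0$ the slope $m^\dagger$ strictly increases in $\kappa$. Since $\psi_i'(\theta_a)=m^\dagger(\kappa_i)$, we get $\psi_2<\psi_1$ on some interval $(\theta_a-\varepsilon,\theta_a)$ and $\psi_2>\psi_1$ on $(\theta_a,\theta_a+\varepsilon)$.

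\emph{Left branch, $\theta<\theta_a$.} Let $d=\psi_1-\psi_2$, which is positive just left of $\theta_a$. Suppose $\theta^*<\theta_a$ is the largest zero of $d$. Then $d>0$ immediately to the right of $\theta^*$, so $d'(\theta^*)\ge0$, i.e.\ $\psi_1'(\theta^*)\ge\psi_2'(\theta^*)$. But both graphs pass through the same point there, so $\psi_2'(\theta^*)=G_{\kappa_2}>G_{\kappa_1}=\psi_1'(\theta^*)$, a contradiction. Hence $\psi$ is strictly decreasing in $\kappa$ on $(0,\theta_a)$.

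\emph{Right branch, $\theta>\theta_a$.} The symmetric argument applies: if $\theta^*>\theta_a$ is the first zero of $\psi_2-\psi_1>0$, then approaching from the left forces $\psi_2'(\theta^*)\le\psi_1'(\theta^*)$, contradicting $G_{\kappa_2}>G_{\kappa_1}$. This gives $\psi$ nondecreasing (indeed increasing) in $\kappa$ on $(\theta_a,1)$.

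\emph{Continuous differentiability.} This follows exactly as in the proof of Theorem~\ref{thm:monotonicity}. The field is $C^\infty$ in $(\theta,p,\kappa)$ and the saddle is hyperbolic for every $\kappa>0$ (its determinant is negative). The parameter-dependent stable manifold theorem then gives $C^1$ dependence of the local manifold on $\kappa$. Backward flow, together with transversality of the manifold to vertical lines (no vertical tangencies, as shown in Theorem~\ref{thm:saddle}), transfers this to $\psi_\kappa(\theta)$ via the implicit function theorem.

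The main obstacle is the singularity of the orbit equation at the saddle. The comparison cannot start from equal initial data with a strict slope inequality, so the ordering just off the saddle must come from the eigenvector slope computation together with the $C^1$ tangency of each manifold to its eigenvector. One must also verify that no crossing can occur arbitrarily close to $\theta_a$. This holds because $\psi_1'-\psi_2'$ tends to $m^\dagger(\kappa_1)-m^\dagger(\kappa_2)\neq0$, so $d$ has a strict sign on a punctured neighbourhood.
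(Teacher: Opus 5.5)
Your proof is correct, and it takes a somewhat different route from the paper's.

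\textbf{Shared starting point.} Both arguments rest on the same observation: the orbit slope $dp/d\theta=\kappa\,p(1-p)\bigl((1-\theta)^2-(1-\theta_a)^2\bigr)/f$ (your $G_\kappa$) is positive along the separatrix and scales linearly with $\kappa$.

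\textbf{The paper's route.} It argues dynamically. It places a test point $z$ on the $\kappa_1$-separatrix and flows it under the $\kappa_2$ field. It then asserts that the steeper trajectory passes above (resp.\ below) the saddle, and reads off $\kappa_2$-basin membership from Theorem~\ref{thm:saddle}.

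\textbf{Your route.} You never classify basins. Instead you:
\begin{enumerate}
\item compare the two separatrices directly as solutions of the scalar orbit equation;
\item fix the sign of $G_\kappa$ on each branch from the monotonicity of $\psi$ and the decreasing $\theta$-nullcline;
\item anchor the comparison at the common, $\kappa$-independent saddle, using the strict increase of the stable-eigenvector slope $m^\dagger(\kappa)$;
\item propagate the ordering by a first-crossing argument.
\end{enumerate}

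\textbf{What your version buys.} It supplies rigor exactly where the paper is informal.
\begin{itemize}
\item A pointwise slope comparison at the same point does not by itself order two trajectories located at different points.
\item Both slope fields are $0/0$ at the saddle, which is where the $\kappa_1$-trajectory converges.
\item The $\kappa_2$-trajectory may turn back at the $\theta$-nullcline before ever reaching $\theta=\theta_a$.
\end{itemize}
So ``passing above the saddle'' needs precisely the near-saddle ordering and crossing argument you provide. You also get strict monotonicity on both branches, which is stronger than the stated ``nondecreasing'' on $(\theta_a,1)$. Your implicit-function argument, using the absence of vertical tangencies, makes the $C^1$ dependence on $\kappa$ precise where the paper only asserts it.

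\textbf{What the paper's version buys.} It is shorter, and it carries the behavioral reading that faster adaptation tips a learner on the old boundary into the low-skill basin.

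\textbf{One wording fix.} $G_{\kappa_2}>G_{\kappa_1}$ does not hold at every point off the saddle, only where $G_{\kappa_1}>0$. You invoke it only at common points of the two graphs, where you have established that sign, so the proof is unaffected.
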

	
	\begin{proof}
		The lemma establishes the behavior of the stable manifold $\psi(\theta)$ (separatrix) as the delegation speed parameter $\kappa$ varies. We treat differentiability and monotonicity separately.
		
		\paragraph{1. Differentiability.}
		The vector field defined by ODE~\eqref{eq:ODE_simplified} is $C^\infty$-smooth with respect to all state variables and the parameter $\kappa$. Since the interior equilibrium $(\theta^\dagger, p^\dagger)$ is a hyperbolic saddle point for all $\kappa > 0$ (Theorem~\ref{thm:convergence}), the Stable Manifold Theorem with parameters guarantees that the local stable manifold varies smoothly (differentiable) with respect to $\kappa$.
		Since the global separatrix $\psi(\theta)$ is the backward extension of this local manifold, and the flow is smooth, $\psi(\theta)$ is continuously differentiable with respect to $\kappa$.
		
		\paragraph{2. Monotonicity via vector field comparison.}
		We prove the monotonicity by comparing the slopes of the trajectories under two different parameters $\kappa_1 < \kappa_2$.
		Let $F_\kappa(\theta, p) = (\dot{\theta}, \dot{p}_\kappa)$ denote the vector field, where:
		\[
		\dot{\theta} = f(\theta, p), \quad \dot{p}_\kappa = \kappa \cdot g(\theta, p),
		\]
		with $g(\theta, p) = p(1-p)[(1-\theta)^2 - (1-\theta_a)^2]$. Note that $f$ is independent of $\kappa$, and $g$ captures the sign of the delegation drift.
		The slope of the vector field at any point $(\theta, p)$ where $f \neq 0$ is given by:
		\[
		m(\theta, p; \kappa) = \frac{d p}{d \theta} = \frac{\dot{p}_\kappa}{\dot{\theta}} = \kappa \frac{g(\theta, p)}{f(\theta, p)}.
		\]
		
		\paragraph{Case 1: Region $\theta \in (0, \theta_a)$.}
		In this region, the skill is below the AI skill, so $(1-\theta)^2 > (1-\theta_a)^2$, implying $g(\theta, p) > 0$.
		The separatrix $\psi$ connects the source $(0,0)$ to the saddle $(\theta_a, p^\dagger)$. Along this curve, $\theta$ is increasing, so $\dot{\theta} = f(\theta, p) > 0$.
		Since $g > 0$ and $f > 0$, the slope $m$ is positive.
		Comparing $\kappa_1 < \kappa_2$:
		\[
		m(\theta, p; \kappa_2) = \frac{\kappa_2}{\kappa_1} m(\theta, p; \kappa_1) > m(\theta, p; \kappa_1).
		\]
		The vector field for $\kappa_2$ is strictly ``steeper'' (points more upward) than for $\kappa_1$ everywhere in this region.
		Consider a point $z = (\theta_0, p_0)$ lying exactly on the separatrix $\psi_1$ for $\kappa_1$. Under the flow of $\kappa_1$, the trajectory $\gamma_1(t)$ starting at $z$ converges to the saddle.
		Under the flow of $\kappa_2$, the trajectory $\gamma_2(t)$ starting at $z$ has a strictly larger slope at every point than $\gamma_1$. Geometrically, this implies $\gamma_2$ must rise \emph{above} $\gamma_1$.
		Since $\gamma_1$ eventually hits the saddle, the steeper trajectory $\gamma_2$ must pass ``above'' the saddle (i.e., to the left of the stable manifold of the saddle in the local linearization), entering the basin of attraction of the low-skill equilibrium $(0,1)$.
		For $z$ to be in the low-skill basin of $\kappa_2$, it must lie \emph{above} the separatrix $\psi_2$.
		Thus, $p_0 > \psi_2(\theta_0)$. Since $p_0 = \psi_1(\theta_0)$, we have $\psi_1(\theta_0) > \psi_2(\theta_0)$.
		Therefore, $\psi(\theta)$ is monotonically decreasing in $\kappa$ on $(0, \theta_a)$.
		
		\paragraph{Case 2: Region $\theta \in (\theta_a, 1)$.}
		In this region, $(1-\theta)^2 < (1-\theta_a)^2$, implying $g(\theta, p) < 0$.
		The separatrix connects the saddle to the source $(1,1)$ (in backward time), or equivalently, trajectories flow from the saddle towards the high-skill sink $(1,0)$. 
		However, technically $\psi$ is defined as the stable manifold. 
		In this region, $\psi$ separates flow to $(0,1)$ and $(1,0)$.

		Consider the slope again. 
		Note that $g(\theta, p)$ is negative, which means delegation is being suppressed.
		Increasing $\kappa$ makes $\dot{p}$ \emph{more negative}.
		Consider a point $z$ on the separatrix $\psi_1$ (for $\kappa_1$). 
		Under $\kappa_1$, it flows to the saddle.
		Under $\kappa_2$, the downward push is stronger. 
		The trajectory starting at $z$ will drop \emph{below} the trajectory of $\kappa_1$.
		Since the $\kappa_1$ trajectory hits the saddle, the $\kappa_2$ trajectory (being lower) will pass ``below'' the saddle, entering the basin of attraction of the high-skill equilibrium $(1,0)$.
		For $z$ to be in the high-skill basin of $\kappa_2$, it must lie \emph{below} the separatrix $\psi_2$.
		Thus, $p_0 < \psi_2(\theta_0)$. Since $p_0 = \psi_1(\theta_0)$, we have $\psi_1(\theta_0) < \psi_2(\theta_0)$.
		Therefore, $\psi(\theta)$ is nondecreasing in $\kappa$ on $(\theta_a, 1)$.
	\end{proof}
	
	\begin{lemma}[\bf{Effects of $\Delta$ on stable manifold}]
		\label{lm:variation_Delta}
		For $\theta \in (0,1)$, $\psi(\theta)$ is monotonically decreasing in $\Delta$ and continuously differentiable with respect to $\Delta$.
	\end{lemma}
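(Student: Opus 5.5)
The plan is to reproduce the comparison argument used for Theorem~\ref{thm:monotonicity}, with $\Delta$ in place of $\theta_a$, and then to upgrade it to strict monotonicity by tracking where the saddle moves.

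\textbf{Step 1: $\Delta$ is an order-preserving parameter.} Work in the coordinates $x=1-\theta$ of Lemma~\ref{lm:coupling}, where the system is cooperative. The parameter $\Delta$ enters only the first component, and
\[
\partial_\Delta F_1(x,p)=x(1-x)\,p(1-x)\ge 0,\qquad \partial_\Delta F_2=0 .
\]
Kamke's comparison theorem applies to parameterized cooperative systems exactly as in the $\theta_a$ part of Lemma~\ref{lm:coupling}. Hence, for $\Delta_1<\Delta_2$ and a common initial condition, the skill gap $x(t)$ and the delegation $p(t)$ under $\Delta_2$ weakly dominate those under $\Delta_1$ for all $t\ge0$.

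\textbf{Step 2: weak monotonicity.} Suppose $\psi_{\Delta_2}(\theta^*)>\psi_{\Delta_1}(\theta^*)$ for some $\theta^*$, and pick $p_0$ strictly between the two values. By Theorem~\ref{thm:saddle}, the point $(\theta^*,p_0)$ then lies in the low-skill basin for $\Delta_1$, so $\theta(t)\to0$. It lies in the high-skill basin for $\Delta_2$, so $\theta(t)\to1$. Step~1 says skill under $\Delta_2$ is never larger, which is a contradiction.

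\textbf{Step 3: strict monotonicity.} Fix $z=(\theta_0,\psi_{\Delta_1}(\theta_0))$, which lies on $W^s$ for $\Delta_1$. We show $z$ lies in the low-skill basin $B_{\rm low}^{\Delta_2}$. The argument runs as follows.

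\begin{enumerate}
\item The saddle is $(\theta_a,p^\dagger(\Delta))$ with $p^\dagger(\Delta)=(1-\theta_a)/(1-\theta_a+\Delta\theta_a)$, which is strictly decreasing in $\Delta$.
\item At the old saddle $s_1=(\theta_a,p^\dagger(\Delta_1))$, the $\Delta_2$ field has
\[
\dot\theta=\theta_a(1-\theta_a)\bigl[(1-p_1)(1-\theta_a)-\Delta_2p_1\theta_a\bigr]<0 ,
\]
since $s_1$ lies above the $\Delta_2$ $\theta$-nullcline. So $\theta$ drops below $\theta_a$, after which $\dot p>0$.
\item More simply, $s_1$ sits strictly above $\psi_{\Delta_2}(\theta_a)=p^\dagger(\Delta_2)$. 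Therefore $s_1$ lies in the open set $B_{\rm low}^{\Delta_2}$, together with a neighborhood $U$ of $s_1$.
\item Choose $T$ with $\varphi^{\Delta_1}_T(z)\in U$. Step~1 gives $\varphi^{\Delta_2}_T(z)\succeq\varphi^{\Delta_1}_T(z)$ in the (larger skill gap, larger delegation) order.
\item $B_{\rm low}^{\Delta_2}$ is upward closed in this order, as shown in the proof of Theorem~\ref{thm:saddle}. Hence $\varphi^{\Delta_2}_T(z)\in B_{\rm low}^{\Delta_2}$, and so $z\in B_{\rm low}^{\Delta_2}$.
\end{enumerate}

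Since $z$ lies in the open low-skill basin, it is strictly above the separatrix: $\psi_{\Delta_1}(\theta_0)>\psi_{\Delta_2}(\theta_0)$.

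\textbf{Step 4: differentiability.} The saddle is hyperbolic for every $\Delta>0$ by Theorem~\ref{thm:convergence}, and the field is $C^\infty$ in $\Delta$. The parameter-dependent stable manifold theorem gives a local stable manifold that is $C^1$ in $\Delta$, and smooth dependence of the flow on finite time intervals extends this to any compact piece of the global branch. The proof of Theorem~\ref{thm:saddle} shows that $W^s$ is transverse to vertical lines (there are no vertical tangencies). So the implicit function theorem applied to the intersection of $W^s_\Delta$ with $\{\theta\}\times(0,1)$ gives that $\psi_\Delta(\theta)$ is $C^1$ in $\Delta$.

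\textbf{Main obstacle.} The delicate point is Step~3, because upward closure and continuity alone give only weak inequalities. The key is to exploit the strict downward shift of the saddle, so that a trajectory on the old separatrix lands in the open new low-skill basin at a finite time. A secondary technicality in Step~4 is uniformity near the corners $\theta\to0,1$. This is avoided because each fixed $\theta\in(0,1)$ is reached by the branch in finite backward time from a compact local piece.
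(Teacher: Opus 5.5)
Your proof is correct. Steps 1, 2 and 4 follow the paper's route. The paper also orders the $\Delta_1$ and $\Delta_2$ flows using the sign of $\partial_\Delta f=-\theta^2(1-\theta)p$, which is your $\partial_\Delta F_1\ge 0$ in the cooperative $(x,p)$ coordinates. It then gets a contradiction from a test point placed strictly between the two separatrix values, and obtains differentiability from the parameter-dependent stable manifold theorem plus smooth dependence of the flow.

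The real difference is your Step 3. The paper claims the strict inequality $\psi_{\Delta_2}<\psi_{\Delta_1}$ by assuming $\psi_{\Delta_2}(\theta^*)\ge\psi_{\Delta_1}(\theta^*)$ and then choosing $p_0$ with $\psi_{\Delta_1}(\theta^*)<p_0<\psi_{\Delta_2}(\theta^*)$. No such $p_0$ exists in the equality case, so its argument gives only weak monotonicity. Your argument closes exactly this gap, and it makes Step 2 redundant:
\begin{itemize}
\item the saddle shifts strictly down, $p^\dagger(\Delta_1)>p^\dagger(\Delta_2)=\psi_{\Delta_2}(\theta_a)$, so a neighborhood of the old saddle lies in the open basin $B^{\Delta_2}_{\rm low}$;
\item comparison and upward closure then carry $(\theta_0,\psi_{\Delta_1}(\theta_0))$ into that basin.
\end{itemize}

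Likewise, your transversality and implicit-function step uses the absence of vertical tangencies from Theorem~\ref{thm:saddle}. That step is needed to pass from a family of curves that is $C^1$ in $\Delta$ to $C^1$ dependence of the value $\psi_\Delta(\theta)$, and the paper leaves it implicit. The paper's version is shorter, but yours is the one that proves the strict monotonicity and the $C^1$ claim as stated.
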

	
	\begin{proof}
		We analyze the dependence of the stable manifold $\psi(\theta)$ on the drift parameter $\Delta$.
		
		\paragraph{Differentiability.}
		The vector field of ODE~\eqref{eq:ODE_simplified} is smooth ($C^\infty$) with respect to the state variables $(\theta, p)$ and the parameter $\Delta$.
		The interior equilibrium $(\theta^\dagger, p^\dagger)$ depends smoothly on $\Delta$ (specifically, $p^\dagger(\Delta) = \frac{1-\theta_a}{1-\theta_a + \Delta \theta_a}$).
		Since the equilibrium remains a hyperbolic saddle for all $\Delta > 0$, the Stable Manifold Theorem with parameters \cite{Perko2013} guarantees that the local stable manifold varies smoothly (differentiable) with respect to $\Delta$.
		By the smoothness of the flow extending the local manifold to the global separatrix, $\psi(\theta)$ is continuously differentiable with respect to $\Delta$.
		
		\paragraph{Monotonicity via trajectory comparison.}
		We prove that $\psi(\theta)$ decreases as $\Delta$ increases by establishing a monotonicity property for the trajectories and using a basin-of-attraction argument.

		Let $\Delta_1 < \Delta_2$. Consider two systems starting from the same initial state $(\theta_0, p_0)$.
		Comparing the drift functions:
		\begin{itemize}
			\item The delegation update $\dot{p} = g(\theta, p)$ is independent of $\Delta$.
			\item The skill update $\dot{\theta} = f(\theta, p; \Delta)$ satisfies:
			\[
			\frac{\partial f}{\partial \Delta} = -\theta^2(1-\theta)p < 0 \quad \text{for } \theta, p \in (0,1).
			\]
		\end{itemize}
		Since a higher $\Delta$ strictly reduces the skill growth rate, the system forms a monotone dynamical system with respect to this parameter.
		Specifically, similar to Lemma~\ref{lm:coupling}, strictly lower skill growth leads to lower future skill $\theta(t)$, which in turn induces higher delegation $p(t)$.
		Thus, for any finite $t > 0$, the solutions satisfy:
		\[
		\theta(t; \Delta_2) \le \theta(t; \Delta_1) \quad \text{and} \quad p(t; \Delta_2) \ge p(t; \Delta_1).
		\]
		Taking the limit $t \to \infty$, the equilibria reached must satisfy the same ordering: $\theta^\infty(\Delta_2) \le \theta^\infty(\Delta_1)$.

		Let $\psi_1$ and $\psi_2$ be the separatrices for $\Delta_1$ and $\Delta_2$ respectively.
		We claim $\psi_2(\theta) < \psi_1(\theta)$ for all $\theta \in (0,1)$.
		Suppose, for contradiction, that there exists some $\theta^*$ such that $\psi_2(\theta^*) \ge \psi_1(\theta^*)$.
		We can choose an initial delegation $p_0$ such that $\psi_1(\theta^*) < p_0 < \psi_2(\theta^*)$.
		Consider the asymptotic outcome of a learner starting at $(\theta^*, p_0)$ under both regimes:
		\begin{itemize}
			\item \textbf{Under $\Delta_1$:} Since $p_0 > \psi_1(\theta^*)$, the state is in the low-skill basin (above the separatrix). The learner converges to the low-skill equilibrium $(\theta^\infty = 0)$.
			\item \textbf{Under $\Delta_2$:} Since $p_0 < \psi_2(\theta^*)$, the state is in the high-skill basin (below the separatrix). The learner converges to the high-skill equilibrium $(\theta^\infty = 1)$.
		\end{itemize}
		This implies $\theta^\infty(\Delta_2) = 1 > 0 = \theta^\infty(\Delta_1)$.
		However, this contradicts the trajectory monotonicity, which requires $\theta^\infty(\Delta_2) \le \theta^\infty(\Delta_1)$.
		Therefore, the assumption is false, and it must hold that $\psi_2(\theta) < \psi_1(\theta)$.
		Thus, $\psi(\theta)$ is monotonically decreasing in $\Delta$.
	\end{proof}
	
	\section{Interventions}
	\label{sec:intervention}
	
	The basin characterization gives a geometric way to reason about interventions:
	an intervention is effective if it moves the separatrix $p=\psi(\theta)$ so
	that more initial conditions converge to the high-skill equilibrium. We study
	three representative mechanisms. 
	Section~\ref{sec:conservative_reliance}
	considers a learner-side intervention that makes reliance updates more
	conservative after evidence against AI. 
	Section~\ref{sec:forward_looking} considers another learner-side
	forward-looking cost that makes users internalize the long-run value of
	preserving skill.
	Section~\ref{sec:delegation_penalty} considers a teacher-side intervention that penalizes detected AI-assisted outputs. 
	
	\begin{figure*}[t]
		\centering
		\begin{subfigure}[b]{0.32\textwidth}
			\centering
			\includegraphics[width=\linewidth]{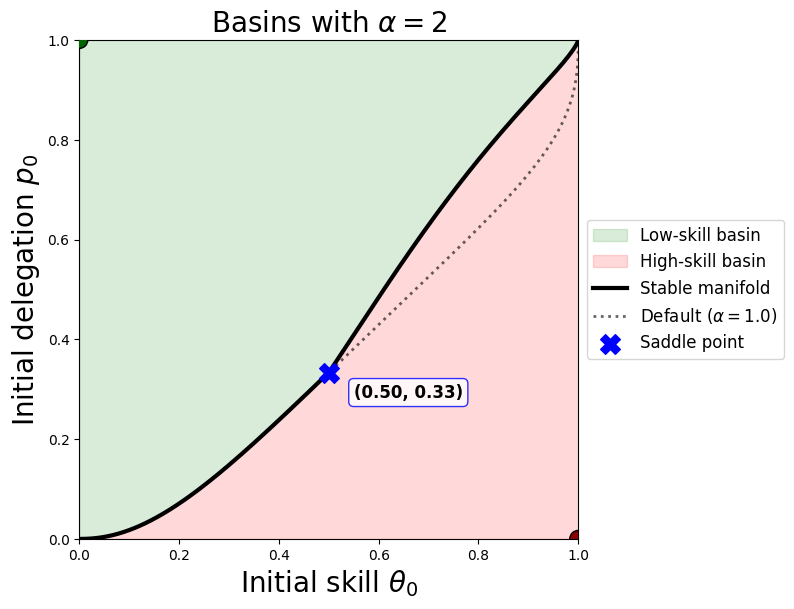}
			\caption{\footnotesize Conservative reliance updates \eqref{eq:asymmetric}}
			\label{fig:asymmetric_delegation}
		\end{subfigure}
		\hfill
		\begin{subfigure}[b]{0.32\textwidth}
			\centering
			\includegraphics[width=\linewidth]{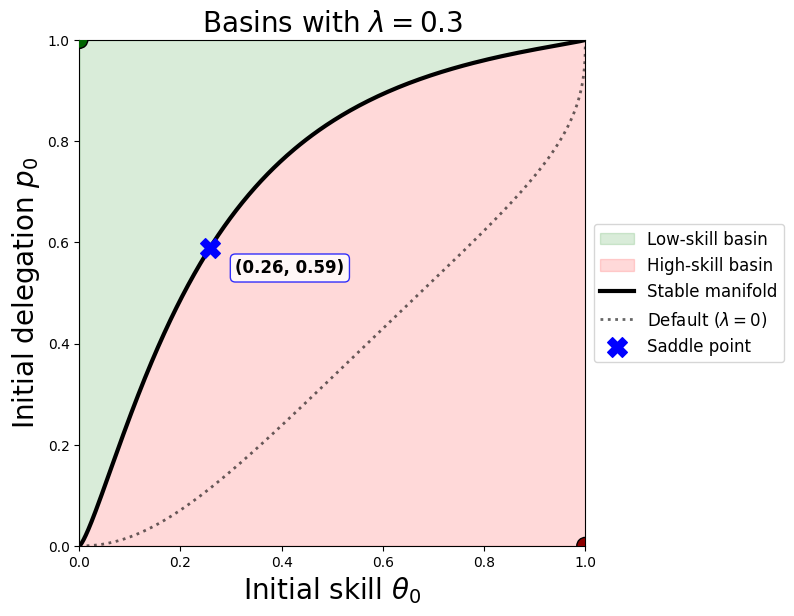}
			\caption{Forward-looking delegation \eqref{eq:forward_looking}}
			\label{fig:forward_looking_I}
		\end{subfigure}
		\hfill
		\begin{subfigure}[b]{0.32\textwidth}
			\centering
			\includegraphics[width=\linewidth]{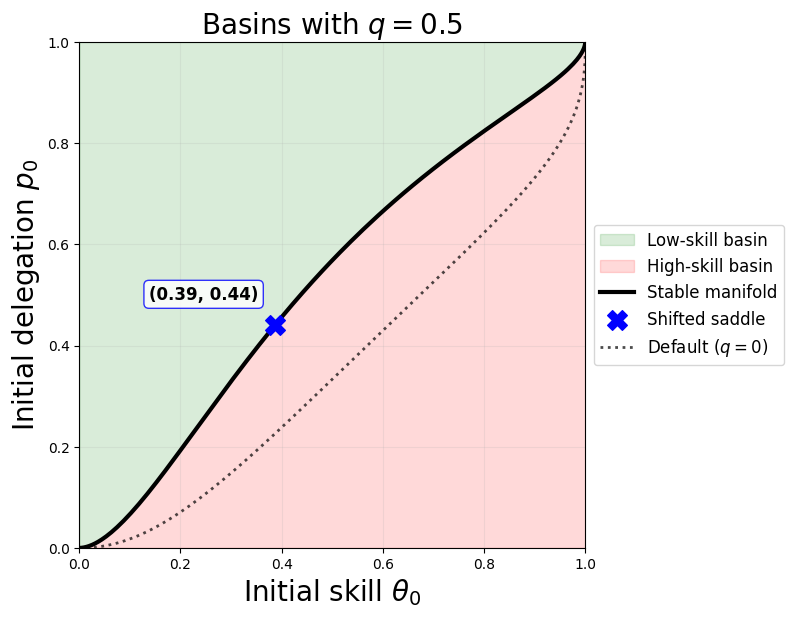}
			\caption{\footnotesize Delegation penalties \eqref{eq:alternative}}
			\label{fig:alternative_loss}
		\end{subfigure}
		
		\caption{Plots illustrating how the basins of attraction change under learner- and teacher-side interventions, with default parameter settings $(\theta_a, \kappa, \Delta) = (0.5, 3, 2)$.
		}
		\label{fig:extension}
	\end{figure*}

	\subsection{Conservative reliance updates}
	\label{sec:conservative_reliance}
	
	A natural learner-side intervention is to make users more conservative after
	evidence that AI performs worse than independent work. This does not change the
	performance loss, the skill-update potential $\Phi_s$, or the
	delegation-update potential $\Phi_d$; it only changes how the learner maps the
	delegation-update signal into changes in reliance.
	
	Let
	\[
	D(\theta):=(1-\theta)^2-(1-\theta_a)^2
	\]
	denote the baseline performance gap driving delegation. Instead of using the
	symmetric signal $D(\theta)$, we use
	\[
	D_\alpha(\theta)
	=
	[D(\theta)]_+
	-
	\alpha[-D(\theta)]_+,
	\]
	where $\alpha>0$ controls the relative sensitivity to negative evidence about
	AI. When $\alpha>1$, the learner reduces reliance faster whenever independent
	work outperforms AI.
	
	Applying the same multiplicative-update geometry gives
	\begin{align}
		\label{eq:asymmetric}
		\begin{aligned}
			\dot{\theta}
			&=
			\theta(1-\theta)\left((1-p)(1-\theta)-\Delta p\theta\right),\\
			\dot p
			&=
			\kappa p(1-p)
			\left(
			\left[(1-\theta)^2-(1-\theta_a)^2\right]_+
			-
			\alpha\left[(1-\theta_a)^2-(1-\theta)^2\right]_+
			\right).
		\end{aligned}
	\end{align}
	When $\alpha=1$, this reduces to the baseline ODE~\eqref{eq:ODE_simplified}.
	
	Figure~\ref{fig:asymmetric_delegation} shows the basin structure for
	$\alpha=2$. Compared with the symmetric baseline, conservative reliance updates
	expand the high-skill basin by making learners more responsive to evidence
	against AI. Thus increasing $\alpha$ can be interpreted as a learner-side
	intervention that encourages vigilance against over-reliance without
	restricting AI access.
	
	\subsection{Forward-looking delegation costs}
	\label{sec:forward_looking}
	
	A second intervention is to make learners internalize the long-run value of
	preserving their own skill. This can represent an explicit self-control cost,
	a training objective that rewards practice, or an institutional design that
	adds friction to unnecessary delegation. Mathematically, the intervention leaves
	the skill-update potential $\Phi_s$ unchanged and modifies the
	delegation-update potential $\Phi_d$ by adding a perceived cost
	$\lambda\ge 0$ to AI-assisted task completion:
	\[
	\Phi_d^\lambda(\theta,p)
	=
	(1-p)(1-\theta)^2
	+
	p\bigl((1-\theta_a)^2+\lambda\bigr).
	\]
	Applying the same multiplicative-update geometry gives
	\begin{align}
		\label{eq:forward_looking}
		\begin{aligned}
			\dot{\theta}
			&=
			\theta(1-\theta)\left((1-p)(1-\theta)-\Delta p\theta\right),\\
			\dot p
			&=
			\kappa p(1-p)
			\left(
			(1-\theta)^2-(1-\theta_a)^2-\lambda
			\right).
		\end{aligned}
	\end{align}
	When $\lambda=0$, this reduces to ODE~\eqref{eq:ODE_simplified}.
	
	Equivalently, the intervention makes AI behave as if it had a higher effective
	loss. When $(1-\theta_a)^2+\lambda\le 1$, define
	$\widehat{\theta}_a$ by
	\[
	(1-\widehat{\theta}_a)^2
	=
	(1-\theta_a)^2+\lambda .
	\]
	Then $\widehat{\theta}_a\le \theta_a$, so the intervention is equivalent to
	reducing the effective attractiveness of AI in the delegation update. 
	By Theorem \ref{thm:monotonicity}, this shifts the basin boundary in favor of the high-skill
	regime. 
	Figure~\ref{fig:forward_looking_I} illustrates this effect: larger
	$\lambda$ weakens over-delegation and enlarges the set of initial conditions
	that converge to sustained skill.

	\subsection{Delegation penalties}
	\label{sec:delegation_penalty}
	
	We next consider a penalty on AI-generated outputs. This specification modifies
	the performance loss and delegation-update potential, but leaves the
	skill-update potential unchanged. Let $q\in[0,1]$ be the probability that an
	AI-generated output is detected. If detected, the AI output incurs the harsher
	loss $1-\theta_a$ instead of $(1-\theta_a)^2$. 
	Therefore, the effective
	AI-side loss becomes
	\[
	L_a^{\mathrm{pen}}
	:=
	(1-q)(1-\theta_a)^2+q(1-\theta_a).
	\]
	The delegation-update potential is
	\[
	\Phi_d^{\mathrm{pen}}(\theta,p)
	=
	(1-p)(1-\theta)^2+pL_a^{\mathrm{pen}}.
	\]
	The normalized dynamics are then
	\begin{align}
		\label{eq:alternative}
		\begin{aligned}
			\dot{\theta}
			&=
			\theta(1-\theta)\left((1-p)(1-\theta)-\Delta p\theta\right),\\
			\dot p
			&=
			\kappa p(1-p)
			\left(
			(1-\theta)^2
			-
			\bigl((1-q)(1-\theta_a)^2+q(1-\theta_a)\bigr)
			\right).
		\end{aligned}
	\end{align}
	When $q=0$, this reduces to the baseline ODE~\eqref{eq:ODE_simplified}.
	Increasing $q$ raises the effective loss of AI delegation, thereby weakening
	the incentive to rely on AI and shifting the basin boundary toward greater
	independent practice. 
	Figure~\ref{fig:alternative_loss} illustrates this
	effect. 
	This suggests that detection-based penalties can serve as an external intervention for reducing excessive AI reliance
	\cite{Veselovsky2023ArtificialAA}.
	

	\section{Conclusion, limitations, and future work}
	\label{sec:conclusion}
	
	We developed a dynamical framework for repeated AI-assisted learning in which
	human skill and AI reliance co-evolve. In the main specification, adaptive
	delegation changes the global geometry of learning: fixed delegation yields a
	one-dimensional learning-decay process with a single attractor, whereas
	adaptive delegation can produce two attracting regimes separated by the stable
	manifold of an interior saddle. This boundary determines which initial
	skill-reliance states lead to sustained learning and which lead to persistent
	delegation and skill loss. Increasing AI capability reshapes the boundary,
	expanding the low-skill basin even as assistance appears beneficial for longer.
	
	The resulting risk is not AI delegation itself, but the feedback loop between
	short-run performance gains and reduced opportunities for practice. In this
	sense, the model formalizes a mechanism for cognitive debt: assistance can
	improve immediate outcomes while shifting the learner toward a region of the
	state space from which independent skill is harder to recover. Interventions
	that reduce early reliance, preserve opportunities for practice, or modify the
	feedback signal associated with delegation can be understood as attempts to
	reshape the basin boundary toward sustained skill development.
	
	The model is intentionally stylized. It isolates repeated adaptive delegation
	rather than the full space of human-AI workflows, and it is meant to
	characterize qualitative dynamics rather than serve as a calibrated structural
	model of human behavior. Across alternative specifications, the phase
	structure is observed to persist under jagged or misperceived AI performance, alternative
	losses, multiple skills, learning from AI, skill-dependent AI performance, and
	changing AI capability. 
	Section~\ref{sec:usability} outlines how the model parameters can be
	estimated from repeated-task data, illustrates the calculation on a synthetic
	example, and states empirical predictions that would support or challenge the
	basin mechanism.
	Future work should connect these dynamics to richer workflows that separate
	generation, review, repair, verification, and effort allocation.
	
	\section*{Acknowledgments} We thank 
	Mariia Eremeeva for useful discussions. This work was funded by NSF Award CCF-2112665.

	\bibliography{references}
	\bibliographystyle{plain}
	
	\appendix
	
	\section{Details from Section \ref{sec:extension}: alternative  specifications and robustness}
	\label{sec:details_extension}
	
	\begin{table}[h]
		\centering
		\label{tab:robustness_status}
		\begin{tabular}{ll}
			\toprule
			Variant & Status \\
			\midrule
			Jagged AI performance & Exact reduction via effective loss \\
			Biased perceived AI skill & Analytically characterized \\
			Power-\(z\) loss & Analytic nullcline deformation and simulations \\
			Stochastic delegation updates & Simulation-supported \\
			Multiple skills & Simulation-supported \\
			Learning from AI during delegation & Analytic equilibrium shift and simulations \\
			Improving AI capability & Quasi-static analysis and simulations \\
			Skill-dependent AI performance & Simulation-supported \\
			\bottomrule
		\end{tabular}
		\caption{Status of robustness variants.}
	\end{table}
	Recall that ODE~\eqref{eq:ODE_simplified} relies on several
	baseline modeling choices: AI performance is represented by a fixed and
	correctly perceived effective skill $\theta_a$; performance is measured by
	squared loss; learner ability is represented by a single scalar skill
	$\theta$; and delegation provides no direct learning signal. These assumptions
	make the phase-plane analysis tractable. 
	In this section, we relax them and
	study whether the qualitative conclusions depend on these particular
	specifications.
	
	The variants below modify one of the main components of the framework: the
	performance loss $\ell$, the skill-update potential $\Phi_s$, the
	delegation-update potential $\Phi_d$, or the learner/AI state evolution. We
	first consider uncertainty and misperception in AI performance: jagged AI
	performance in Section~\ref{sec:noisy_AI} and biased perceived AI skill in
	Section~\ref{sec:biased_AI_skill}. We then study alternative performance-loss
	specifications in Section~\ref{sec:model_robust}. Finally, we consider richer
	learner and AI states, including multiple skills
	(Section~\ref{sec:multiple_skill}), learning from AI during delegation
	(Section~\ref{sec:learner_AI_extension}), improving AI capability over time
	(Section~\ref{sec:improving_AI}), and skill-dependent AI performance
	(Section~\ref{sec:skill_dependent_AI}).
	
	For each specification, we state the modified dynamical system, identify which
	model component changes, and provide theoretical analysis or simulations when
	available. Across these variants, the same qualitative phase structure
	is observed to persist: adaptive reliance can generate two attracting regimes separated by a
	basin boundary, so small differences in initial skill or reliance can lead to
	different long-run outcomes.
	
	\subsection{Jagged AI performance}
	\label{sec:noisy_AI}
	
	With a jagged AI, the performance loss of AI-generated outputs is random.
	We model this by letting $s_{t,r}\sim \mu_a$ denote the realized AI skill on task $r$ in time window $t$. 
	Define the effective AI loss
	\[
	L_a := \mathbb{E}_{s\sim\mu_a}\bigl[(1-s)^2\bigr].
	\]
	Compared with the baseline model, the jagged-AI extension changes only the AI-side performance loss from $\ell(a) = (1-\theta_a)^2$ to $L_a$.
	Equivalently, define the loss-equivalent effective AI skill
	$\theta_a^{\mathrm{eff}}$ by
	\[
	(1-\theta_a^{\mathrm{eff}})^2=L_a.
	\]
	For notational simplicity, we write $\theta_a$ for this effective skill below.
	
	As in Section~\ref{sec:derivation}, each time window $t$ contains $m$
	comparable tasks. During this window, $\theta(t)$ and $p(t)$ are held fixed,
	while task-level delegation decisions vary across tasks. For each
	$r\in[m]$, let
	\[
	X_{t,r}\sim \mathrm{Bern}(p(t)),
	\qquad
	s_{t,r}\sim \mu_a,
	\]
	independently conditional on the current state $(\theta(t),p(t))$. Let
	\[
	\bar X_t := \frac{1}{m}\sum_{r=1}^m X_{t,r}
	\]
	be the realized fraction of delegated tasks.
	
	The learning mechanism within one time window is as follows.
	
	\begin{itemize}
		\item \textbf{(Task completion stage)}
		For each task $r\in[m]$, the learner delegates to AI if $X_{t,r}=1$ and
		performs the task independently if $X_{t,r}=0$.
		
		\item \textbf{(Evaluation stage)}
		The teacher provides task-level feedback. If the task is performed
		independently, the loss is $(1-\theta(t))^2$. If the task is delegated to AI,
		the loss is $(1-s_{t,r})^2$. Thus
		\[
		\ell_{t,r}
		=
		(1-X_{t,r})(1-\theta(t))^2
		+
		X_{t,r}(1-s_{t,r})^2 .
		\]
		
		\item \textbf{(Skill update)}
		The skill update is unchanged from the baseline mechanism: independent work
		generates an error-driven learning signal, while delegation induces non-use
		decay. Averaging over the $m$ tasks in the window gives
		\[
		\theta(t+1)
		=
		\theta(t)
		+
		2\eta\theta(t)(1-\theta(t))
		\left[
		(1-\bar X_t)(1-\theta(t))
		+
		\Delta\bar X_t(\theta_d-\theta(t))
		\right].
		\]
		
		\item \textbf{(Delegation update)}
		The delegation update uses the realized feedback from delegated tasks and the
		learner's belief $L_a$ about the expected AI loss on non-delegated tasks. The
		realized performance gap driving the update is
		\[
		G_t
		:=
		\frac{1}{m}\sum_{r=1}^m
		\left[
		(1-X_{t,r})\bigl((1-\theta(t))^2-L_a\bigr)
		+
		X_{t,r}\bigl((1-\theta(t))^2-(1-s_{t,r})^2\bigr)
		\right].
		\]
		The delegation level updates as
		\[
		p(t+1)
		=
		p(t)
		+
		\eta\kappa p(t)(1-p(t))G_t .
		\]
	\end{itemize}
	
	\noindent
	In summary, the stochastic dynamics under jagged AI are
	\begin{align*}
		\begin{aligned}
			X_{t,r} &\sim \mathrm{Bern}(p(t)),\qquad r=1,\ldots,m,\\
			s_{t,r} &\sim \mu_a,\qquad r=1,\ldots,m,\\
			\theta(t+1)
			&=
			\theta(t)
			+
			2\eta\theta(t)(1-\theta(t))
			\left[
			(1-\bar X_t)(1-\theta(t))
			+
			\Delta\bar X_t(\theta_d-\theta(t))
			\right],\\
			p(t+1)
			&=
			p(t)
			+
			\eta\kappa p(t)(1-p(t)) \\
			& \quad \times 
			\frac{1}{m}\sum_{r=1}^m
			\left[
			(1-X_{t,r})\bigl((1-\theta(t))^2-L_a\bigr)
			+
			X_{t,r}\bigl((1-\theta(t))^2-(1-s_{t,r})^2\bigr)
			\right].
		\end{aligned}
	\end{align*}
	The key difference from Dynamics~\eqref{eq:dynamics} is the introduction of
	the random AI skill $s_{t,r}$ and the resulting stochasticity in the
	delegation update.
	
	We now track this stochastic dynamics to its expectation. Since
	$X_{t,r}$ and $s_{t,r}$ are independent conditional on
	$(\theta(t),p(t))$, and $\mathbb{E}[s_{t,r}]=\mathbb{E}_{s\sim\mu_a}[s]$ with
	$\mathbb{E}[(1-s_{t,r})^2]=L_a$, we have
	\[
	\mathbb{E}[\bar X_t\mid \mathcal{F}_t]=p(t)
	\]
	and
	\begin{align*}
		\mathbb{E}[G_t\mid \mathcal{F}_t]
		&=
		(1-p(t))\bigl((1-\theta(t))^2-L_a\bigr)
		+
		p(t)\bigl((1-\theta(t))^2-L_a\bigr)\\
		&=
		(1-\theta(t))^2-L_a .
	\end{align*}
	Therefore,
	\begin{align*}
		\begin{aligned}
			\mathbb{E}[\theta(t+1)\mid\mathcal{F}_t]
			&=
			\theta(t)
			+
			2\eta\theta(t)(1-\theta(t))
			\left[
			(1-p(t))(1-\theta(t))
			+
			\Delta p(t)(\theta_d-\theta(t))
			\right],\\
			\mathbb{E}[p(t+1)\mid\mathcal{F}_t]
			&=
			p(t)
			+
			\eta\kappa p(t)(1-p(t))
			\left((1-\theta(t))^2-L_a\right).
		\end{aligned}
	\end{align*}
	
	After normalizing $\theta_d=0$ and applying the same time rescaling as in
	ODE~\eqref{eq:ODE_simplified}, the jagged-AI dynamics become
	\begin{align*}
		\begin{aligned}
			\dot{\theta}
			&=
			\theta(1-\theta)
			\left[
			(1-p)(1-\theta)
			+
			\Delta p(\theta_d-\theta)
			\right],\\
			\dot p
			&=
			\kappa p(1-p)
			\left((1-\theta)^2-\mathbb{E}_{s\sim\mu_a}[(1-s)^2]\right).
		\end{aligned}
	\end{align*}
	Using the fact that $(1-\theta_a)^2=\mathbb{E}_{s\sim\mu_a}[(1-s)^2]$, this dynamics has the same form as the baseline ODE with the
	effective AI skill $\theta_a$.
	
	\subsection{Biased perceived AI skill}
	\label{sec:biased_AI_skill}
	
	ODE~\eqref{eq:ODE_simplified} assumes that the learner updates delegation
	using the true effective AI skill $\theta_a$. We now relax this assumption by
	allowing the learner's perceived AI skill to differ from the true one. Let
	$\widetilde{\theta}_a\in[0,1]$ denote the AI skill perceived by the learner.
	Realized task outcomes are still evaluated using the true AI skill $\theta_a$;
	only the learner's reliance update uses $\widetilde{\theta}_a$.
	
	This extension leaves the true performance loss and the skill-update potential
	$\Phi_s$ unchanged. 
	The only change is in the
	delegation-update potential used by the learner:
	\[
	\widetilde{\Phi}_d(\theta,p)
	=
	(1-p)(1-\theta)^2
	+
	p(1-\widetilde{\theta}_a)^2 .
	\]
	Applying the same multiplicative-update geometry gives
	\begin{align}
		\label{eq:ODE_biased_delegation}
		\begin{aligned}
			\dot{\theta}
			&=
			\theta(1-\theta)\left((1-p)(1-\theta)-\Delta p\theta\right),\\
			\dot p
			&=
			\kappa p(1-p)
			\left((1-\theta)^2-(1-\widetilde{\theta}_a)^2\right).
		\end{aligned}
	\end{align}
	When $\widetilde{\theta}_a=\theta_a$, this reduces to the baseline
	ODE~\eqref{eq:ODE_simplified}.
	
	This specification separates actual AI performance from perceived AI performance.
	If $\widetilde{\theta}_a>\theta_a$, the learner overestimates AI quality, so
	the delegation nullcline shifts from $\theta=\theta_a$ to
	$\theta=\widetilde{\theta}_a$. As a result, delegation increases over a larger
	range of human skill levels, enlarging the high-reliance region. Conversely,
	if $\widetilde{\theta}_a<\theta_a$, the learner underestimates AI quality,
	which shrinks the region in which delegation increases and favors independent
	practice.
	
	\subsection{Robustness to $z$-th power loss}
	\label{sec:model_robust}
	
	We consider a $z$-th power loss, which changes the sensitivity of feedback
	to errors:
	\[
	\ell_z(\theta)=(1-\theta)^z,\qquad z\ge 1.
	\]
	To keep the skill-decay term consistent with the same geometry, we also use
	the power distance $g_z(\theta,0)=\theta^z$ in the normalized setting
	$\theta_d=0$. Thus the skill-update and delegation-update potentials become
	\[
	\Phi_s^z(\theta,p)
	=
	(1-p)(1-\theta)^z+\Delta p\theta^z,
	\qquad
	\Phi_d^z(\theta,p)
	=
	(1-p)(1-\theta)^z+p(1-\theta_a)^z.
	\]
	Applying the same multiplicative-update geometry as in the baseline model,
	and absorbing the common factor $z$ into the time scale and the free
	relative-speed parameter $\kappa$, gives
	\begin{align}
		\label{eq:power_z_ODE}
		\begin{aligned}
			\dot{\theta}
			&=
			\theta(1-\theta)
			\left(
			(1-p)(1-\theta)^{z-1}
			-
			\Delta p\theta^{z-1}
			\right),\\
			\dot p
			&=
			\kappa p(1-p)
			\left(
			(1-\theta)^z-(1-\theta_a)^z
			\right).
		\end{aligned}
	\end{align}
	The case $z=2$ recovers the baseline squared-loss dynamics.
	
	The qualitative phase structure is observed to persist. The $p$-nullcline remains
	$\theta=\theta_a$, while the interior $\theta$-nullcline is
	\[
	p=\psi_z(\theta)
	:=
	\frac{(1-\theta)^{z-1}}
	{(1-\theta)^{z-1}+\Delta\theta^{z-1}}.
	\]
	Hence the interior equilibrium is
	\[
	(\theta^\dagger,p^\dagger)
	=
	\left(
	\theta_a,\,
	\frac{(1-\theta_a)^{z-1}}
	{(1-\theta_a)^{z-1}+\Delta\theta_a^{z-1}}
	\right).
	\]
	Changing $z$ deforms the nullclines and changes basin sizes, but preserves the
	two-regime structure. Thus the path-dependent dynamics are not an artifact of
	the quadratic loss. Figure~\ref{fig:alternative_learning} illustrates this
	extension.

	\subsection{Multiple-skill extension}
	\label{sec:multiple_skill}
	
	\begin{figure*}[t]
		\centering
		\begin{subfigure}[b]{0.27\textwidth}
			\centering
			\includegraphics[width=\linewidth]{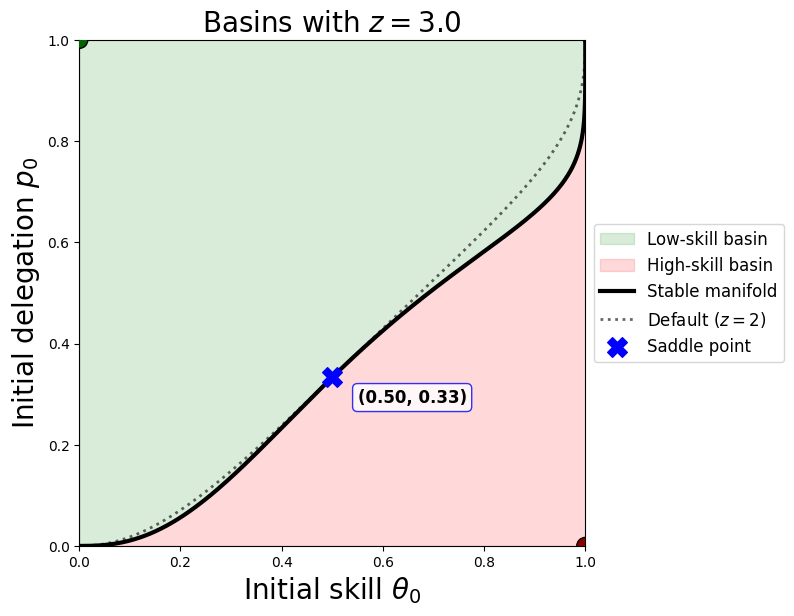}
			\caption{Basins of $z$-th power loss \eqref{eq:power_z_ODE}}
			\label{fig:alternative_learning}
		\end{subfigure}
		\hfill
		\begin{subfigure}[b]{0.35\textwidth}
			\centering
			\includegraphics[width=\linewidth]{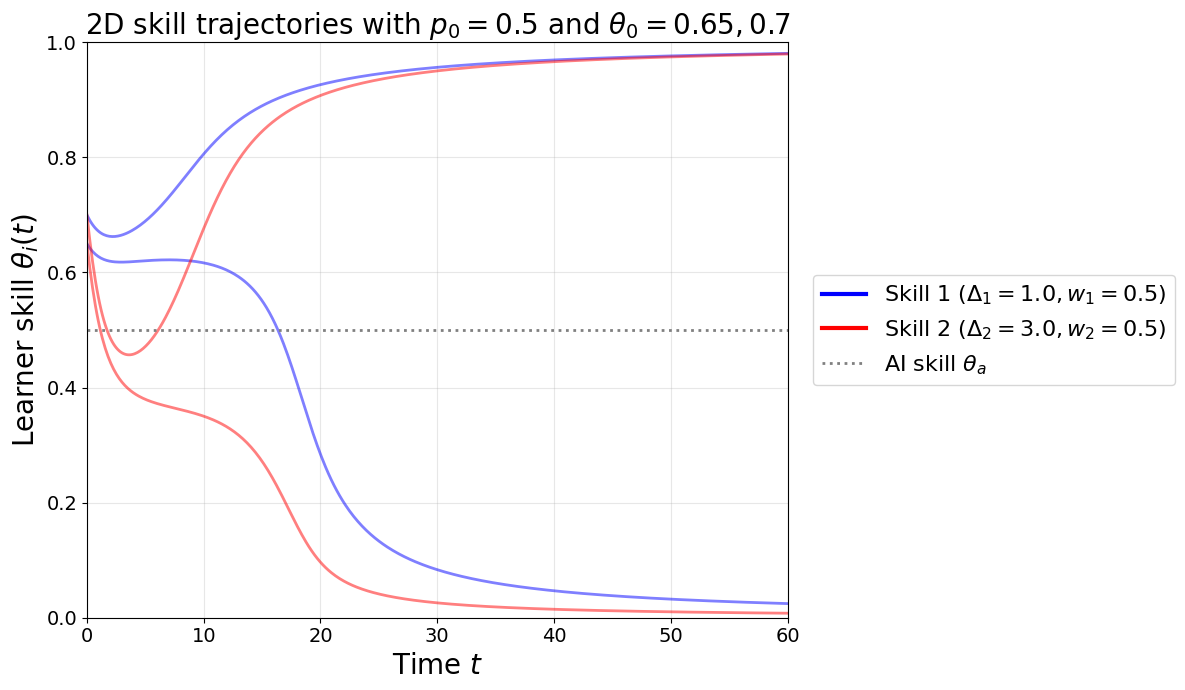}
			\caption{Trajectories of two skills with the same initial $\theta_0$}
			\label{fig:2D_trajectory}
		\end{subfigure}
		\quad \quad
		\begin{subfigure}[b]{0.27\textwidth}
			\centering
			\includegraphics[width=\linewidth]{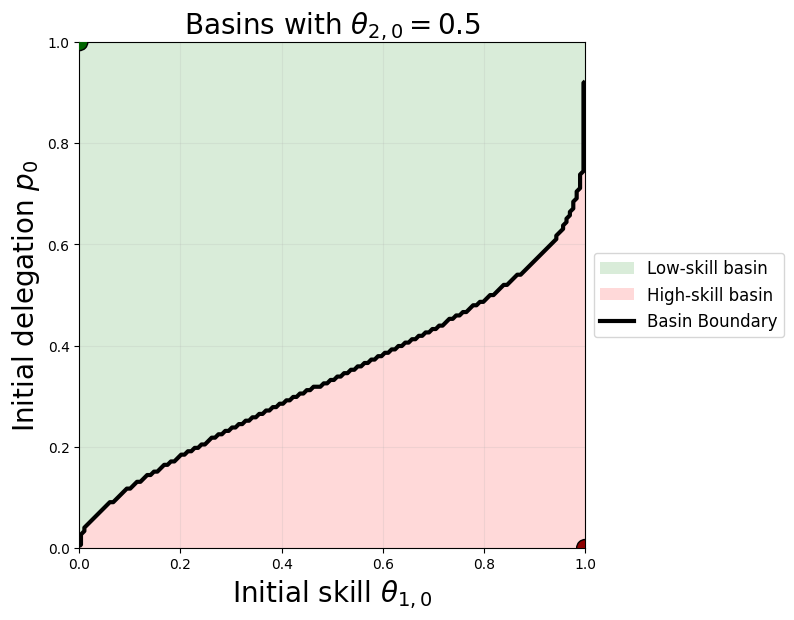}
			\caption{Basins of $(\theta_{1,0}, p_0)$ for fixed $\theta_{2,0}$}
			\label{fig:2D_basin}
		\end{subfigure}
		\caption{Plots illustrating $z$-th power loss with default setings $(\theta_a, \kappa, \Delta) = (0.5, 3,2)$; and the two-skill extension \eqref{eq:two_skill_ODE}, with default settings $(\theta_a,\kappa,\Delta_1,\Delta_2,w_1,w_2)=(0.5,3,1,3,0.5,0.5)$.
			Skill trajectories can still diverge from nearby initial conditions, while $\theta_1(t)>\theta_2(t)$ holds for all $t\ge 0$ because the decay rate satisfies $\Delta_1<\Delta_2$.
		}
		\label{fig:two_skill}
	\end{figure*}
	
	ODE~\eqref{eq:ODE_simplified} represents learner ability by a single scalar
	skill parameter $\theta$. This scalar representation is useful because it
	allows a complete phase-space characterization: the equilibria can be solved
	explicitly, the interior equilibrium is a saddle, and its stable manifold
	forms a curve that partitions the state space into high- and low-skill basins.
	We now show that the same qualitative mechanism extends beyond the
	one-dimensional skill case.
	
	Consider a two-skill model with skill vector
	$(\theta_1,\theta_2)\in[0,1]^2$. Let
	\[
	\bar{\theta}=w_1\theta_1+w_2\theta_2,
	\qquad
	w_1,w_2>0,\quad w_1+w_2=1,
	\]
	denote the task-relevant aggregate skill. The performance loss depends on the
	aggregate skill:
	\[
	\ell(\theta_1,\theta_2)=(1-\bar{\theta})^2.
	\]
	Thus the delegation-update potential becomes
	\[
	\Phi_d(\theta_1,\theta_2,p)
	=
	(1-p)(1-\bar{\theta})^2
	+
	p(1-\theta_a)^2.
	\]
	The skill-update potential is also modified to reflect coordinate-wise skill
	decay. In the normalized setting $\theta_d=0$, we use
	\[
	\Phi_s(\theta_1,\theta_2,p)
	=
	(1-p)(1-\bar{\theta})^2
	+
	\frac{p}{2}\sum_{i=1}^2 \Delta_i\theta_i^2,
	\]
	where $\Delta_i>0$ is the non-use decay rate of skill $i$. The factor $1/2$
	is a normalization convention and can be absorbed into $\Delta_i$. Applying
	the multiplicative-update geometry coordinate-wise gives
	\begin{align}
		\label{eq:two_skill_ODE}
		\begin{aligned}
			\dot{\theta}_i
			&=
			\theta_i(1-\theta_i)
			\Bigl(
			2w_i(1-p)(1-\bar{\theta})
			-
			\Delta_i p\,\theta_i
			\Bigr),
			\qquad i\in\{1,2\},\\
			\dot p
			&=
			\kappa p(1-p)
			\Bigl(
			(1-\bar{\theta})^2-(1-\theta_a)^2
			\Bigr).
		\end{aligned}
	\end{align}
	The delegation update is therefore still driven by the comparison between
	human and AI performance, but human performance now depends on the aggregate
	skill $\bar{\theta}$, while skill decay can differ across coordinates.
	
	This model contains the one-dimensional system as a special case. In
	particular, when $w_1=w_2=1/2$, $\Delta_1=\Delta_2=\Delta$, and
	$\theta_1(0)=\theta_2(0)$, the diagonal subspace
	$\theta_1(t)=\theta_2(t)$ is invariant. On this diagonal,
	$\bar{\theta}(t)=\theta_1(t)=\theta_2(t)$, and
	\eqref{eq:two_skill_ODE} reduces to ODE~\eqref{eq:ODE_simplified}.
	
	\paragraph{Equilibrium and basin structure.}
	We next describe the stationary structure of the two-skill system. The two
	equilibria
	\[
	(\theta_1,\theta_2,p)=(1,1,0),
	\qquad
	(\theta_1,\theta_2,p)=(0,0,1)
	\]
	correspond respectively to the high-skill and low-skill regimes. In addition,
	the interior $p$-nullcline is
	\[
	\bar{\theta}=\theta_a.
	\]
	Combining this condition with the interior $\theta_i$-nullclines yields the
	interior equilibrium
	\begin{align*}
		(\theta_1^\dagger,\theta_2^\dagger,p^\dagger)
		=
		\left(
		\frac{w_1\Delta_2\theta_a}
		{w_1^2\Delta_2+w_2^2\Delta_1},
		\;
		\frac{w_2\Delta_1\theta_a}
		{w_2^2\Delta_1+w_1^2\Delta_2},
		\;
		\frac{
			2(1-\theta_a)(w_1^2\Delta_2+w_2^2\Delta_1)
		}{
			2(1-\theta_a)(w_1^2\Delta_2+w_2^2\Delta_1)
			+
			\Delta_1\Delta_2\theta_a
		}
		\right),
	\end{align*}
	whenever this point lies in $(0,1)^3$. This equilibrium plays the same role
	as the interior saddle in the one-dimensional model: its stable directions
	form part of the boundary separating trajectories that converge to the
	high-skill regime from those that converge to the low-skill regime.
	
	There may also be boundary saddle-type equilibria on faces where one skill is
	fixed at $0$ or $1$. More generally, for $i\neq j$ and $b\in\{0,1\}$, a
	candidate equilibrium on the face $\theta_j=b$ is given by
	\[
	\theta_j=b,\qquad
	\theta_i=\frac{\theta_a-w_j b}{w_i},
	\qquad
	p=
	\frac{
		2w_i^2(1-\theta_a)
	}{
		2w_i^2(1-\theta_a)+\Delta_i(\theta_a-w_j b)
	},
	\]
	provided that $\theta_i\in(0,1)$ and $p\in(0,1)$. Thus, depending on the
	parameters, some of these boundary equilibria may lie outside the feasible
	state space and are absent. Together with the interior equilibrium
	$(\theta_1^\dagger,\theta_2^\dagger,p^\dagger)$, these saddle-type stationary points help organize the separating surface.
	
	Therefore, the geometric picture from the one-dimensional model is observed to persist, but
	with one important change: the basin boundary is no longer a curve in the
	$(\theta,p)$-plane, but a surface in the three-dimensional state space
	$(\theta_1,\theta_2,p)$. Adaptive delegation can still generate bistability
	and path dependence. Learners with similar initial delegation levels but
	slightly different skill vectors may lie on different sides of this separating
	surface and therefore converge to different long-run outcomes.
	
	\paragraph{Simulations.}
	Figure~\ref{fig:two_skill} illustrates the convergence behavior of the
	two-skill dynamics~\eqref{eq:two_skill_ODE}. In
	Figure~\ref{fig:2D_trajectory}, trajectories are initialized with the same
	delegation level but nearby skill vectors. The trajectories can diverge toward
	different equilibria, reflecting sensitivity to initial conditions induced by
	the separating surface. Figure~\ref{fig:2D_basin} fixes one coordinate, for
	example $\theta_{2,0}$, and plots the resulting basins in the
	$(\theta_{1,0},p_0)$-plane. The reduced slice exhibits the same qualitative
	two-basin structure as the one-dimensional model.
	
	The two-skill model also captures heterogeneous decay across skills. For
	example, when $w_1=w_2$, $\theta_1(0)=\theta_2(0)$, and
	$\Delta_1<\Delta_2$, the first skill decays more slowly under delegation.
	Consequently, trajectories satisfy $\theta_1(t)>\theta_2(t)$ for $t>0$
	whenever delegation is positive, producing transient differentiation across
	skills even when the two coordinates ultimately converge to the same long-run
	regime. This shows that the scalar model is not essential for the emergence of
	path dependence; it is the minimal setting in which the full dynamics can be
	characterized explicitly.
	
	\begin{figure*}[t]
		\centering
		\begin{subfigure}[b]{0.32\textwidth}
			\centering
			\includegraphics[width=\linewidth]{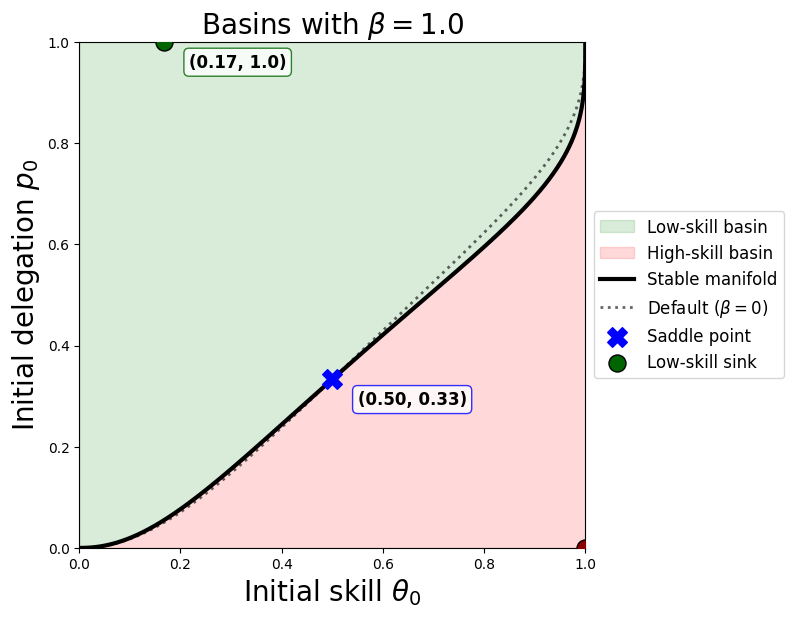}
			\caption{Learning drift from AI \eqref{eq:learning_from_AI}}
			\label{fig:AI_learning}
		\end{subfigure}
		\hfill
		\begin{subfigure}[b]{0.305\textwidth}
			\centering
			\includegraphics[width=\linewidth]{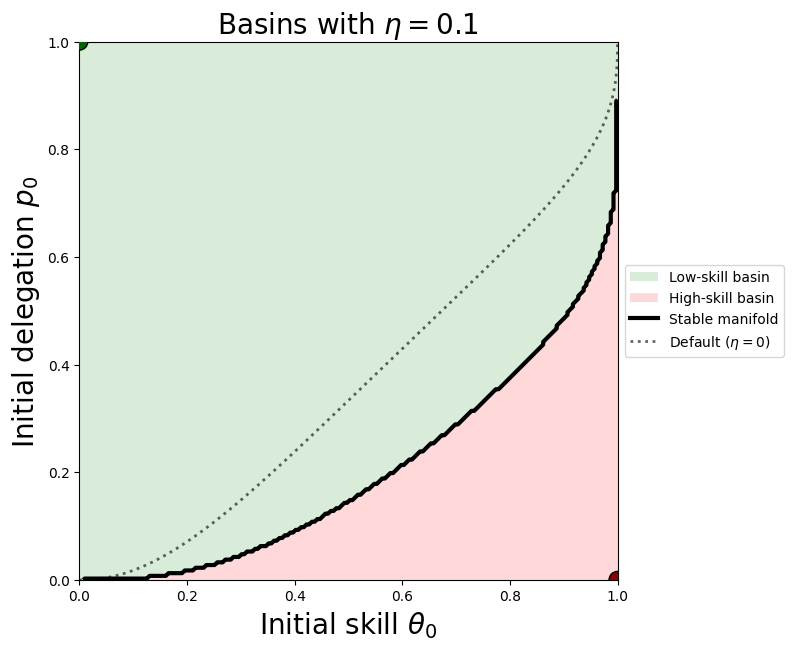}
			\caption{Improved AI skill \eqref{eq:improved_AI}}
			\label{fig:improved_AI_0.1}
		\end{subfigure}
		\hfill
		\begin{subfigure}[b]{0.305\textwidth}
			\centering
			\includegraphics[width=\linewidth]{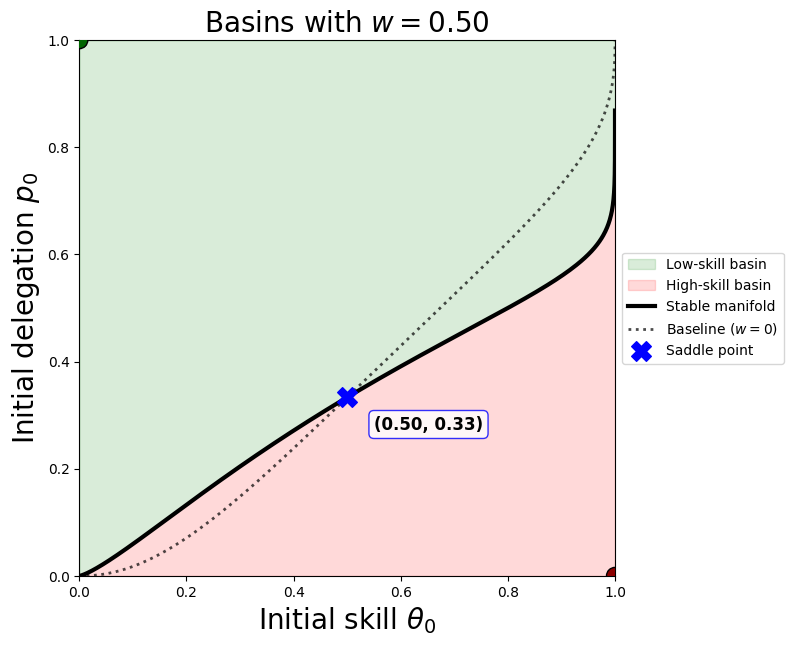}
			\caption{Skill-dependent AI loss \eqref{eq:skill_dependent_AI}}
			\label{fig:skill_dependent_AI}
		\end{subfigure}
		\caption{Plots illustrating how the basins of attraction vary under learner- and AI-side extensions, with default settings $(\theta_a,\kappa,\Delta)=(0.5,3,2)$.
		}
		\label{fig:extension_learner}
	\end{figure*}
	
	\subsection{Learning from AI during delegation}
	\label{sec:learner_AI_extension}
	
	ODE~\eqref{eq:ODE_simplified} assumes that delegation provides no learning
	signal. We now relax this assumption by allowing AI-generated outputs to pull
	the learner's skill toward the AI skill level. This modifies the
	skill-update potential $\Phi_s$, while the performance loss and
	delegation-update potential $\Phi_d$ remain unchanged.
	
	In the normalized setting $\theta_d=0$, we replace the delegation-side decay
	term in $\Phi_s$ by a combination of non-use decay and learning from AI:
	\[
	\Phi_s^\beta(\theta,p)
	=
	(1-p)(1-\theta)^2
	+
	p\left(\Delta\theta^2+\beta(\theta-\theta_a)^2\right),
	\]
	where $\beta\ge 0$ controls the strength of learning from AI. Applying the
	same multiplicative-update geometry and the same time rescaling as in
	ODE~\eqref{eq:ODE_simplified} gives
	\begin{align}
		\label{eq:learning_from_AI}
		\begin{aligned}
			\dot{\theta}
			&=
			\theta(1-\theta)
			\left(
			(1-p)(1-\theta)
			+
			p\bigl(\beta(\theta_a-\theta)-\Delta\theta\bigr)
			\right),\\
			\dot p
			&=
			\kappa p(1-p)\left((1-\theta)^2-(1-\theta_a)^2\right).
		\end{aligned}
	\end{align}
	The case $\beta=0$ recovers ODE~\eqref{eq:ODE_simplified}.
	
	The $p$-nullcline remains $\theta=\theta_a$, since $\Phi_d$ is unchanged. The
	interior $\theta$-nullcline becomes
	\[
	p=\psi_\beta(\theta)
	:=
	\frac{1-\theta}
	{(1-\theta)+(\beta+\Delta)\theta-\beta\theta_a},
	\]
	on its feasible range, and remains monotone decreasing in $\theta$. Thus the
	phase portrait is deformed but retains the same qualitative basin structure.
	
	The main change is that the low-skill regime is lifted. Under full delegation
	$p=1$, the stable skill level becomes
	\[
	\theta_{\mathrm{low}}
	=
	\frac{\beta\theta_a}{\beta+\Delta}.
	\]
	Hence, learning from AI raises the floor of the high-delegation regime but
	does not eliminate it when $\Delta>0$. For learners with $\theta>\theta_a$,
	the term $\beta(\theta_a-\theta)$ also pulls skill downward toward the AI
	level. Figure~\ref{fig:AI_learning} shows that the basin boundary shifts with
	$\beta$, while the two-regime, path-dependent structure is observed to persist.

	\subsection{Improving AI capability}
	\label{sec:improving_AI}
	
	ODE~\eqref{eq:ODE_simplified} assumes a fixed AI skill level $\theta_a$. We now
	relax this assumption by allowing AI capability to improve over time:
	\[
	\dot{\theta}_a
	=
	\rho\theta_a(1-\theta_a),
	\qquad \rho>0 .
	\]
	This specification changes the performance loss and delegation-update potential
	through the time-varying AI loss $(1-\theta_a(t))^2$, while the skill-update
	potential $\Phi_s$ remains unchanged.
	
	The resulting dynamics are
	\begin{align}
		\label{eq:improved_AI}
		\begin{aligned}
			\dot{\theta}
			&=
			\theta(1-\theta)\left((1-p)(1-\theta)-\Delta p\theta\right),\\
			\dot p
			&=
			\kappa p(1-p)\left((1-\theta)^2-(1-\theta_a(t))^2\right),\\
			\dot{\theta}_a
			&=
			\rho\theta_a(1-\theta_a).
		\end{aligned}
	\end{align}
	Equivalently, the delegation-update potential becomes time-dependent:
	\[
	\Phi_d(\theta,p;t)
	=
	(1-p)(1-\theta)^2
	+
	p(1-\theta_a(t))^2 .
	\]
	
	This produces a non-autonomous extension of the baseline $(\theta,p)$ system:
	as $\theta_a(t)$ increases, the performance gap driving delegation changes. 
	A useful interpretation is quasi-static. At each time $t$, the learner faces the phase portrait corresponding to the current AI skill $\theta_a(t)$. As AI capability improves, delegation becomes more attractive and the basin boundary moves accordingly. 
	Thus trajectories that would remain in the high-skill basin
	under a weaker fixed AI may cross the moving boundary and converge instead to the lower-skill, high-delegation regime.
	
	Figure~\ref{fig:improved_AI_0.1} illustrates this effect. Improving AI
	capability does not remove path dependence; instead, it can strengthen the
	delegation feedback loop by progressively increasing the short-run advantage
	of AI assistance.
	
	\subsection{Skill-dependent AI performance}
	\label{sec:skill_dependent_AI}
	
	The baseline model assumes that AI performance depends only on the AI's
	effective skill $\theta_a$. This is appropriate for settings where AI can
	complete a delegated task with little user input, such as routine homework
	problems or fixing isolated bugs. In other settings, AI output quality also
	depends on the user's skill: for example, in writing a paper, a more skilled
	user can provide better prompts, steer the interaction, and revise AI outputs
	more effectively.
	
	We model this by letting the AI-side loss depend on both the learner skill and
	the AI skill:
	\[
	\ell_a(\theta,\theta_a)
	=
	\bigl(1-w\theta-(1-w)\theta_a\bigr)^2,
	\qquad w\in[0,1].
	\]
	Here $w$ measures how much AI-assisted performance depends on the learner's own
	skill. The case $w=0$ recovers the baseline model, where AI loss is
	$(1-\theta_a)^2$. Larger $w$ corresponds to settings where effective AI output
	quality depends more strongly on the user's ability to guide and evaluate the
	AI.
	
	The skill-update potential $\Phi_s$ is unchanged, since independent work still
	generates practice-driven learning and delegation still reduces direct
	practice. The delegation-update potential becomes
	\[
	\Phi_d^{\mathrm{int}}(\theta,p)
	=
	(1-p)(1-\theta)^2
	+
	p\bigl(1-w\theta-(1-w)\theta_a\bigr)^2 .
	\]
	Applying the same multiplicative-update geometry gives
	\begin{align}
		\label{eq:skill_dependent_AI}
		\begin{aligned}
			\dot{\theta}
			&=
			\theta(1-\theta)\left((1-p)(1-\theta)-\Delta p\theta\right),\\
			\dot p
			&=
			\kappa p(1-p)
			\left[
			(1-\theta)^2
			-
			\bigl(1-w\theta-(1-w)\theta_a\bigr)^2
			\right].
		\end{aligned}
	\end{align}
	
	The delegation drift can be written explicitly as
	\[
	(1-\theta)^2
	-
	\bigl(1-w\theta-(1-w)\theta_a\bigr)^2
	=
	(1-w)(\theta_a-\theta)
	\bigl(2-(1+w)\theta-(1-w)\theta_a\bigr).
	\]
	Thus, for $w<1$, the interior $p$-nullcline remains $\theta=\theta_a$, because
	the second factor is positive in the interior of $[0,1]^2$. Since the
	$\theta$-nullcline is unchanged, the interior equilibrium remains
	\[
	(\theta^\dagger,p^\dagger)
	=
	\left(
	\theta_a,\,
	\frac{1-\theta_a}{1-\theta_a+\Delta\theta_a}
	\right).
	\]
	At this point,
	\[
	\partial_\theta \dot p
	=
	-2\kappa(1-w)p^\dagger(1-p^\dagger)(1-\theta_a)<0,
	\]
	for $w<1$ and $\theta_a<1$. Since $\partial_p\dot{\theta}<0$ as in the
	baseline model, the determinant of the Jacobian remains negative, and the
	interior equilibrium remains a saddle. When $w=1$, AI-assisted loss equals the
	learner's own loss, so the delegation drift vanishes identically and the
	adaptive-delegation mechanism degenerates.
	
	Therefore, $w$ does not change the saddle location for $w<1$, but it changes
	the strength of the delegation response. Relative to the baseline signal
	\[
	(1-\theta)^2-(1-\theta_a)^2,
	\]
	the skill-dependent signal is state-dependent:
	\[
	(1-\theta)^2
	-
	\bigl(1-w\theta-(1-w)\theta_a\bigr)^2
	=
	(1-w)
	\frac{2-(1+w)\theta-(1-w)\theta_a}{2-\theta-\theta_a}
	\left((1-\theta)^2-(1-\theta_a)^2\right).
	\]
	Hence larger $w$ weakens the performance gap that drives delegation updates.
	Intuitively, when AI output quality depends more on the learner's own skill,
	AI is less of a purely external substitute for human skill.
	
	Figure~\ref{fig:skill_dependent_AI} illustrates this extension with $w=0.5$. 
	Observe that the boundary shifts upward for low initial skill, reflecting that low-skill learners obtain less effective AI assistance and therefore have a weaker incentive to delegate. 
	For higher initial skill, the boundary can shift
	downward, reflecting that skilled users can extract more value from AI and may maintain reliance over a larger region. 
	Thus, skill-dependent AI performance
	does not uniformly expand or shrink the low-skill basin; instead, it reshapes the basin boundary while preserving the two-regime structure.

	\section{Empirical calibration and falsifiable predictions}
	\label{sec:usability}
	
	We outline how the parameters of the main specification can be estimated from repeated-task data and how the resulting phase portrait could be tested. The
	goal is not to provide a fully calibrated psychometric procedure, but to make
	clear which observables correspond to the model variables and which empirical
	patterns would support or challenge the basin mechanism.
	
	Consider an experimental setup with time windows \(t=1,\ldots,T\). In each
	window \(t\), the learner faces \(m_t\) comparable tasks indexed by
	\(r=1,\ldots,m_t\). For each task, we observe a delegation indicator
	\(X_{t,r}\in\{0,1\}\), where \(X_{t,r}=1\) means that the task is delegated to
	AI, and an evaluation loss \(\ell_{t,r}\in[0,1]\) assigned by the evaluator to
	the realized output. We also assume access to an AI-only benchmark loss
	\(\ell^a_{t,r}\in[0,1]\), obtained either by evaluating the AI on the same
	task or from a held-out benchmark of comparable tasks. All losses are
	normalized to lie in \([0,1]\).
	
	The empirical delegation level in window \(t\) is estimated by the fraction of
	tasks delegated to AI:
	\[
	\widehat p_t
	:=
	\frac{1}{m_t}\sum_{r=1}^{m_t} X_{t,r}.
	\]
	Alternatively, if the experiment elicits the learner's willingness to delegate
	before each window, this stated willingness can be used as a direct measurement
	of \(p(t)\).
	
	\paragraph{Estimating AI effective skill.}
	Under the squared-loss specification, the AI effective skill is defined
	through its expected loss. We estimate it by
	\[
	\widehat L_a
	:=
	\frac{1}{\sum_{t=1}^T m_t}
	\sum_{t=1}^T\sum_{r=1}^{m_t} \ell^a_{t,r},
	\qquad
	\widehat\theta_a
	:=
	1-\sqrt{\widehat L_a}.
	\]
	If AI performance is jagged, this estimate should be interpreted as the
	loss-equivalent effective skill satisfying
	\((1-\widehat\theta_a)^2=\widehat L_a\).
	
	\paragraph{Estimating learner skill.}
	For each window \(t\), let
	\[
	\widehat L^H_t
	:=
	\frac{1}{|\mathcal H_t|}
	\sum_{r\in\mathcal H_t}\ell_{t,r},
	\qquad
	\mathcal H_t:=\{r:X_{t,r}=0\},
	\]
	be the average loss on independently completed tasks. When
	\(|\mathcal H_t|>0\), we estimate the learner's skill by
	\[
	\widehat\theta_t
	:=
	1-\sqrt{\widehat L^H_t}.
	\]
	If a window contains no independently completed tasks, or if independently
	completed tasks are selected non-randomly by the learner, \(\theta_t\) is not
	identified from submitted outputs alone. In practice, \(\theta_t\) should be
	estimated using randomized no-AI probe tasks sampled from the same task
	distribution, possibly combined with nearby windows or a smoothing procedure.
	
	\paragraph{Estimating the decay parameter \(\Delta\).}
	Using the normalized ODE~\eqref{eq:ODE_simplified}, and taking each time window
	as one time unit, the skill dynamics imply the Euler approximation
	\[
	\widehat\theta_{t+1}-\widehat\theta_t
	\approx
	\widehat\theta_t(1-\widehat\theta_t)
	\left(
	(1-\widehat p_t)(1-\widehat\theta_t)
	-
	\Delta \widehat p_t\widehat\theta_t
	\right).
	\]
	For windows with reliable estimates of both \(\widehat\theta_t\) and
	\(\widehat\theta_{t+1}\), define
	\[
	a_t
	:=
	\widehat p_t\widehat\theta_t^2(1-\widehat\theta_t),
	\qquad
	b_t
	:=
	\widehat\theta_t(1-\widehat\theta_t)(1-\widehat p_t)(1-\widehat\theta_t)
	-
	(\widehat\theta_{t+1}-\widehat\theta_t).
	\]
	Then \(b_t\approx \Delta a_t\). A least-squares estimate is
	\[
	\widehat\Delta
	=
	\frac{\sum_t a_t b_t}{\sum_t a_t^2},
	\]
	where the sum is taken over windows with \(a_t>0\) and reliable skill
	estimates.
	
	\paragraph{Estimating the delegation adaptation rate \(\kappa\).}
	The delegation dynamics imply
	\[
	\widehat p_{t+1}-\widehat p_t
	\approx
	\kappa \widehat p_t(1-\widehat p_t)
	\left(
	(1-\widehat\theta_t)^2-(1-\widehat\theta_a)^2
	\right).
	\]
	Let
	\[
	c_t
	:=
	\widehat p_t(1-\widehat p_t)
	\left(
	(1-\widehat\theta_t)^2-(1-\widehat\theta_a)^2
	\right).
	\]
	Then \(\widehat p_{t+1}-\widehat p_t\approx \kappa c_t\), and we estimate
	\[
	\widehat\kappa
	=
	\frac{\sum_t c_t(\widehat p_{t+1}-\widehat p_t)}
	{\sum_t c_t^2},
	\]
	using windows where \(c_t\) is not close to zero.
	
	\paragraph{Using the fitted model.}
	After estimating \((\widehat\theta_a,\widehat\kappa,\widehat\Delta)\) and the
	current state \((\widehat\theta_t,\widehat p_t)\), the fitted ODE can be used
	to predict whether the learner lies in the high-skill or low-skill basin. In
	particular, one can compute the saddle point and an approximation to its stable
	manifold, then compare the learner's current delegation level with the critical
	delegation threshold. This translates observed repeated-task data into the
	phase-space predictions of the model.
	
	\paragraph{Illustrative calibration example.}
	To illustrate the calibration steps, consider a fabricated repeated-task dataset
	with \(T=4\) windows and \(m_t=50\) comparable tasks per window. Suppose the
	observed delegation fractions and average independent-task losses are as shown
	in Table~\ref{tab:synthetic_calibration}, and suppose the average AI-only
	benchmark loss across the same task family is \(\widehat L_a=0.04\). Then
	\[
	\widehat\theta_a = 1-\sqrt{0.04}=0.8.
	\]
	Using the independent-task losses gives the skill estimates
	\[
	\widehat\theta_1=0.70,\qquad
	\widehat\theta_2=0.68,\qquad
	\widehat\theta_3=0.64,\qquad
	\widehat\theta_4=0.60.
	\]
	Substituting these values into the least-squares formulas above yields the
	illustrative estimates
	\[
	\widehat\Delta \approx 2.38,
	\qquad
	\widehat\kappa \approx 4.59.
	\]
	The corresponding estimated saddle is
	\[
	(\widehat\theta^\dagger,\widehat p^\dagger)
	=
	\left(
	\widehat\theta_a,\,
	\frac{1-\widehat\theta_a}{1-(1-\widehat\Delta)\widehat\theta_a}
	\right)
	\approx
	(0.80,\,0.096).
	\]
	Since \(\widehat{\psi}\) is nondecreasing and
	\(\widehat{\psi}(0.80)=\widehat p^\dagger\approx0.096\), we have
	\[
	\widehat{\psi}(0.70)\le 0.096<0.20.
	\]
	Thus the initial state
	\((\widehat\theta_1,\widehat p_1)=(0.70,0.20)\) lies above the estimated
	boundary and is classified as being in the low-skill basin.

	\begin{table}[h]
		\centering
		\caption{Synthetic repeated-task data used only to illustrate the calibration
			procedure.}
		\label{tab:synthetic_calibration}
		\begin{tabular}{cccc}
			\toprule
			Window \(t\) & Delegation fraction \(\widehat p_t\) & Avg. independent loss \(\widehat L^H_t\) & Skill estimate \(\widehat\theta_t\)\\
			\midrule
			1 & 0.20 & 0.09   & 0.70 \\
			2 & 0.24 & 0.1024 & 0.68 \\
			3 & 0.30 & 0.1296 & 0.64 \\
			4 & 0.38 & 0.16   & 0.60 \\
			\bottomrule
		\end{tabular}
	\end{table}
	
	\paragraph{Empirical predictions and falsification.}
	The fitted model makes three testable predictions. First, learners whose
	estimated states lie above the fitted separatrix should move toward higher
	delegation and lower independent skill, while learners below it should move
	toward lower delegation and sustained skill growth. Second, learners near the
	estimated boundary should be sensitive to small changes in early reliance:
	delayed access, required independent practice, or conservative delegation
	feedback should shift some trajectories from the predicted low-skill basin to
	the high-skill basin. Third, increasing AI capability should lower the fitted
	boundary, so the same initial state is more likely to fall in the low-skill
	basin under a stronger AI system.
	
	The basin mechanism would be challenged if estimated basin membership had no
	predictive relationship with long-run independent performance, if early
	reliance had no effect on later skill after conditioning on initial skill, or
	if changes in AI capability did not move the empirical threshold in the
	predicted direction.
	
\end{document}